\newtheorem{Definition}{Definition}[section]
\newtheorem{Theorem}{Theorem}[section]
\newtheorem{Lemma}{Lemma}[section]
\newtheorem{Proposition}{Proposition}[section]
\newtheorem{Corollary}{Corollary}[section]
\newcommand{\conv}{\operatorname{conv}}
\renewcommand{\indent}{\hspace*{5mm}}
\begin{document}

\title{Game-theoretic derivation of upper hedging prices of multivariate contingent claims and submodularity}
\author{Takeru Matsuda\thanks{Graduate School of Information Science and Technology, The University of Tokyo}
\ and Akimichi Takemura\thanks{Faculty of Data Science, Shiga University}}
\date{}

\maketitle

\begin{abstract}
We investigate upper and lower hedging prices of multivariate contingent claims from the viewpoint of game-theoretic probability and submodularity.
By considering a game between ``Market'' and ``Investor'' in discrete time, the pricing problem is reduced to a backward induction of an optimization over simplexes.
For European options with payoff functions satisfying a combinatorial property called submodularity or supermodularity,
this optimization is solved in closed form by using the Lov\'asz extension and the upper and lower hedging prices can be calculated efficiently.
This class includes the options on the maximum or the minimum of several assets.
We also study the asymptotic behavior as the number of game rounds goes to infinity.
The upper and lower hedging prices of European options converge to the solutions of the Black-Scholes-Barenblatt equations.
For European options with submodular or supermodular payoff functions, 
the Black-Scholes-Barenblatt equation is reduced to the linear Black-Scholes equation and it is solved in closed form.
Numerical results show the validity of the theoretical results.
\end{abstract}

\section{Introduction}
\label{sec:intro}
The pricing of contingent claims is a central problem in mathematical finance \citep{Karatzas}.
The fundamental models of financial markets are the binomial model \citep{Shreve} and the geometric Brownian motion model \citep{Shreve2} 
in discrete-time and continuous-time setting, respectively.
These models describe complete markets and therefore the price of any contingent claim is obtained by arbitrage argument.
Specifically, the Cox-Ross-Rubinstein formula \citep{Cox} and the Black-Scholes formula \citep{Black} provide the exact price
in the binomial model and the geometric Brownian motion model, respectively. 
These formulas are derived by constructing a hedging portfolio for the seller to replicate the contingent claim.

In general, the market is incomplete and the above formulas are not applicable.
Even in incomplete markets, we can define the upper and lower hedging prices of a contingent claim by considering superreplication \citep{Karatzas}.
\cite{Musiela} and \cite{ElKaroui} provide fundamental results on the upper hedging price 
in discrete-time models and continuous-time models, respectively. 
As a special case, for discrete-time models with bounded martingale differences, 
\cite{Ruschendorf} pointed out the upper hedging price of a convex contingent claim is obtained by the extremal binomial model.

Although the above studies focused on contingent claims depending on a single asset, 
there are contingent claims for which the payoff depends on two or more assets \citep{Stapleton}, which are called multivariate contingent claims.
For example, \cite{Stulz} and \cite{Johnson} investigated the pricing of options on the maximum or the minimum of several assets.
\cite{Boyle} developed a numerical method for pricing multivariate contingent claims in discrete-time models.
Although their method is based on a lattice binomial model that is originally incomplete,
they change the model to make it complete by specifying correlation coefficients between all the pairs of assets.
Thus, their method does not compute the upper hedging price. 
On the other hand, \cite{Romagnoli} considered superreplication in continuous-time models and derived the pricing formula based on the Black-Scholes-Barenblatt equation.
They also provided some sufficient conditions on payoff functions for reduction of the Black-Scholes-Barenblatt equation to the linear Black-Scholes equation.

Whereas existing studies on the upper and lower hedging price are based on stochastic models of financial markets,
\cite{Nakajima} investigated this problem from the viewpoint of the game-theoretic probability \citep{Shafer},
in which only the protocol of a game between ``Investor'' and ``Market'' is formulated without specification of a probability measure. 
They showed that the upper hedging price in the discrete-time multinomial model is obtained by a backward induction of linear programs,
and that the upper hedging price of a European option converges to the solution of the one-dimensional Black-Scholes-Barenblatt equation as the number of game rounds goes to infinity.

In this paper, we investigate the upper hedging price of multivariate contingent claims by extending the game-theoretic probability approach of \cite{Nakajima}.
We consider a discrete-time multinomial model with several assets 
and show that the upper hedging price of multivariate contingent claims is given by a backward induction of a maximization problem 
whose domain is a set of simplexes, which becomes intractable in general as the number of assets increases.
However, we find that this maximization is solved in closed form if the contingent claim satisfies a combinatorial property called submodularity or supermodularity \citep{Fujishige}.
Specifically, the maximizing simplex is determined by using the Lov\'asz extension \citep{Lovasz} for European options with supermodular payoff function on two or more assets and also European options with submodular payoff function on two assets.
As realistic examples, we prove that options on the maximum and the minimum of several assets are submodular and supermodular, respectively.
Then, by considering the asymptotics of the number of game rounds, we show that the upper hedging price of a European option converges to the solution of the Black-Scholes-Barenblatt equation.
In particular, for European options with supermodular payoff function on two or more assets and also European options with submodular payoff function on two assets, 
the Black-Scholes-Barenblatt equation reduces to the linear Black-Scholes equation, which is solved in closed form.
Finally, we confirm the validity of the theoretical results by numerical experiments.

As in \cite{Nakajima}, we consider the price processes in an additive form and 
the Black-Scholes-Barenblatt equation in section \ref{sec:limiting} is actually an additive form
of the Black-Scholes-Barenblatt equation in the standard finance literature.
Similarly, the linear Black-Scholes equation is given in the form of a heat equation.
However, the results of this paper holds for the usual multiplicative model with simple exponential
transformations.  Except for a few places we do not specifically indicate that our equations 
are in the additive form.

In section 2, we provide a formulation of the upper hedging price based on the game-theoretic probability.
In section 3, we derive results for the special case of European options with submodular or supermodular payoff function, which include the option on the maximum or the minimum.
In section 4, we derive PDE for asymptotic upper hedging prices.
In section 5, we confirm the theoretical results by numerical experiments.
In section 6, we give some concluding remarks and discuss future works.

\section{Game-theoretic derivation of upper hedging prices of multivariate contingent claims}
In this section, we present a formulation of the upper hedging price based on the game-theoretic probability \citep{Shafer}.
The pricing problem is reduced to a backward induction of linear programs.
As special cases, we consider convex or separable payoff functions.

\subsection{Definitions and notation}
Let $\chi=\{ a_1,\cdots,a_l \} \subset \mathbb{R}^d$, $l\ge d+1$,  be a finite set, which we call a \textit{move set}.
Let 
$\conv\chi$ denote the convex hull of $\chi$. We assume
that the dimension of  $\conv\chi$ is $d$ and 
$\conv\chi$ 
contains the origin in its interior.
The protocol of the $N$-round multinomial game with $d$ assets is defined as follows:
\begin{quote}
\indent ${\mathcal K}_0 := \alpha$ \\
\indent FOR $n=1, 2, \cdots, N$\\
\indent\indent Investor announces $M_n \in \mathbb{R}^d$\\
\indent\indent Market announces $x_n \in \chi$\\
\indent\indent ${\mathcal K}_n := {\mathcal K}_{n-1} + M_n^{\top} x_n$\\
\indent END FOR
\end{quote}
Here, $\alpha$ denotes the initial capital of Investor, $M_n$ corresponds to the vector of
amounts Investor buys the assets, $x_n$ corresponds to the vector of price
changes of the assets and ${\mathcal K}_n$ corresponds to Investor's capital at the end of round $n$.
When $d=1$, this game reduces to the game analyzed in \cite{Nakajima}.
One natural candidate for $\chi$ is a product set 
\begin{equation}
	\chi=\{ a^{(1)}_1, \cdots, a^{(1)}_{n_1} \} \times \cdots \times \{ a^{(d)}_1, \cdots, a^{(d)}_{n_d} \}, \label{prod_chi}
\end{equation}
where $a^{(1)}_1 < \cdots < a^{(1)}_{n_1}, \cdots, a^{(d)}_1 < \cdots < a^{(d)}_{n_d}$.
Such $\chi$ with $n_1=\cdots=n_d=2$ was adopted by the lattice binomial model \citep{Boyle}.

We call $\chi^N$ the \textit{sample space} and $\xi = x_1 \cdots x_N \in \chi^N$ a \textit{path} of Market's moves.
For $n=1,\cdots,N$, $\xi^n = x_1 \cdots x_n \in \chi^n$ is a partial path.
The initial empty path $\xi^0$ is denoted as $\square$.
We call $\mathcal{S}: x_1 \cdots x_{n-1} \mapsto M_n$ a \textit{strategy}.
When Investor adopts the strategy $\mathcal{S}$, his capital at the end of round $n$ is given by $\alpha+\mathcal{K}_n^{\mathcal{S}} (\xi^n)$, where
\begin{equation*}
	\mathcal{K}_n^{\mathcal{S}} (\xi^n) = \sum_{i=1}^n \mathcal{S} (\xi^{i-1})^{\top} x_i.
\end{equation*}

We call a function $f: \chi^N \to \mathbb{R}$ a \textit{payoff function} or a \textit{contingent claim}.
If $f$ depends only on $S_N=x_1+\cdots+x_N$, then $f$ is called a \textit{European option}.
The \textit{upper hedging price} (or simply upper price) of $f$ is defined as
\begin{equation}
	\bar{E}_{\chi} (f) = \inf \{ \alpha \mid \exists \mathcal{S}, \alpha + \mathcal{K}_N^{\mathcal{S}} (\xi) \geq f(\xi), \forall \xi \in \chi^N \} \label{upper_def}
\end{equation}
and the \textit{lower hedging price} is defined as
\begin{equation*}
	\underline{E}_{\chi} (f) = -\bar{E}_{\chi} (-f).
\end{equation*}
A strategy attaining the infimum in \eqref{upper_def} is called a \textit{superreplicating strategy} for $f$.

A market is called \textit{complete} if the upper hedging price and the lower hedging price coincide for any payoff function.
For example, the binomial model, which corresponds to $d=1$ and $|\chi|=2$, is complete \citep{Shreve}.

As discussed in section \ref{sec:intro} we consider an additive form of the game where the prices changes $x_n$'s are added rather than multiplied in $S_N=x_1+\cdots+x_N$.

\subsection{Formulation with linear programming}
Following \cite{Nakajima}, we formulate the pricing problem as a recursive linear programming.

First, we consider the single-round game ($N=1$).
Note that the payoff function is $f: \chi \to \mathbb{R}$.
Let $\Gamma=\{ \widetilde{\chi} \subset \chi \mid |\widetilde{\chi}| = d+1, \ 0 \in \conv \widetilde{\chi}, \ \dim \conv \widetilde{\chi}=d \}$ be the set of simplexes of dimension $d$ containing the origin.
For each $\widetilde{\chi} = \{ a_{i_0},\cdots,a_{i_d} \} \in \Gamma$, define
\begin{equation*}
	I(\widetilde{\chi},f) = \sum_{j=0}^d p^{\widetilde{\chi}}_j f(a_{i_j}),
\end{equation*}
where 
the probability vector $p^{\widetilde{\chi}} = (p^{\widetilde{\chi}}_0,\cdots,p^{\widetilde{\chi}}_d)$ is defined as the unique solution of the linear equations
\begin{equation}
	\sum_{j=0}^d p^{\widetilde{\chi}}_j = 1, \quad \sum_{j=0}^d p^{\widetilde{\chi}}_j a_{i_j,k} = 0 \ (k=1,\cdots,d). \label{chi_prob}
\end{equation}
By extending Proposition 2.1 of \cite{Nakajima}, we obtain the following.

\begin{Proposition}\label{prop_dual}
For a single-round game, the upper hedging price of a payoff function $f$ is given by
\begin{equation}
	\bar{E}_{\chi} (f) = \max_{\widetilde{\chi} \in \Gamma} I(\widetilde{\chi},f). \label{single_upper}
\end{equation}
\end{Proposition}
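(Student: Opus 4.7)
The plan is to express the single-round hedging problem as a linear program and invoke LP duality, following the template of Proposition 2.1 in \cite{Nakajima}. Specializing (\ref{upper_def}) to $N=1$, we have $\bar{E}_{\chi}(f)=\inf\{\alpha\in\mathbb{R}:\exists M\in\mathbb{R}^d,\ \alpha+M^{\top}a_i\ge f(a_i)\ \forall i\}$, which is a finite linear program in the free variables $\alpha$ and $M$ with $l=|\chi|$ inequality constraints (one per element of $\chi$).

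The first step is to compute the LP dual. Introducing nonnegative multipliers $p_1,\dots,p_l$ for the inequality constraints, the freeness of $\alpha$ forces $\sum_i p_i=1$ and the freeness of $M$ forces $\sum_i p_i a_i=0$, so the dual is to maximize $\sum_i p_i f(a_i)$ over the polytope $P=\{p\in\mathbb{R}^l_{\ge 0}:\sum_i p_i=1,\ \sum_i p_i a_i=0\}$. The primal is trivially feasible, and $P\ne\emptyset$ because the hypothesis $0\in\mathrm{int}\,\conv\chi$ lets us write $0$ as a positive convex combination of at most $d+1$ elements of $\chi$ by Carath\'eodory's theorem. Strong LP duality then yields $\bar{E}_{\chi}(f)=\max_{p\in P}\sum_i p_i f(a_i)$.

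The second step is to identify the vertices of $P$ with the elements of $\Gamma$. Since $P$ is a bounded polytope and the objective is linear, the maximum is attained at a vertex. A vertex $p$ has at most $d+1$ positive entries (the equality system has $d+1$ rows), and its support must be affinely independent: otherwise a nonzero $\lambda$ with $\sum_j\lambda_j=0$ and $\sum_j\lambda_j a_{i_j}=0$ would give $p\pm\varepsilon\lambda\in P$ for small $\varepsilon>0$, contradicting extremality. Conversely, every $\widetilde{\chi}\in\Gamma$ gives a feasible $p^{\widetilde{\chi}}\in P$ via (\ref{chi_prob}) with objective value $I(\widetilde{\chi},f)$, so the maximum of the objective over $P$ is attained on a vertex of the desired form.

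The main subtlety I anticipate is the degenerate case where a vertex of $P$ has support of size strictly less than $d+1$, so its support is not itself a $d$-dimensional simplex and thus not directly an element of $\Gamma$. In that case I would extend the support by additional points of $\chi$ chosen so that the resulting $d+1$ points are affinely independent (such points exist inductively because $\conv\chi$ has full dimension $d$) and assign them weight zero. The enlarged simplex lies in $\Gamma$ and gives the same value of $I$, which completes the identification $\max_{p\in P}\sum_i p_i f(a_i)=\max_{\widetilde{\chi}\in\Gamma}I(\widetilde{\chi},f)$ and establishes (\ref{single_upper}).
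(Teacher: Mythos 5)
Your proposal is correct and follows essentially the same route as the paper: write the single-round problem as a linear program in $(\alpha, M)$, pass to the dual over risk neutral measures, and observe that the optimum is attained at a basic feasible solution with at most $d+1$ nonzero entries, which is identified with some $\widetilde{\chi}\in\Gamma$. You are in fact somewhat more careful than the paper, which silently skips the verification of strong duality and the degenerate case of a vertex with support smaller than $d+1$ that you handle by padding with zero-weight affinely independent points.
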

\begin{proof}
From \eqref{upper_def}, the upper hedging price $\bar{E}_{\chi} (f)$ of $f$ is the optimal value of the following linear program:
\begin{align*}
	\min_{\alpha,M} \quad & \begin{pmatrix} 1 & 0 & \cdots & 0 \end{pmatrix} \begin{pmatrix} \alpha \\ M_1 \\ \vdots \\ M_d \end{pmatrix} \\
	{\rm s.t.} \quad & \begin{pmatrix} 1 & a_{1,1} & \cdots & a_{1,d} \\ 1 & a_{2,1} & \cdots & a_{2,d} \\ \vdots \\ 1 & a_{l,1} & \cdots & a_{l,d} \end{pmatrix} \begin{pmatrix} \alpha \\ M_1 \\ \vdots \\ M_d \end{pmatrix} \geq \begin{pmatrix} f(a_1) \\ \vdots \\ f(a_l) \end{pmatrix} .
\end{align*}
The dual of this linear program is
\begin{align*}
	\max_p \quad & \begin{pmatrix} f(a_1) & \cdots & f(a_l) \end{pmatrix} p \\
	{\rm s.t.} \quad & \begin{pmatrix} 1 & \cdots & 1 \\ a_{1,1} & \cdots & a_{l,1} \\ \vdots \\ a_{1,d} & \cdots & a_{l,d} \end{pmatrix} p = \begin{pmatrix} 1 \\ 0 \\ \vdots \\ 0 \end{pmatrix}, \quad p \geq 0.
\end{align*}
From the complementary condition \citep{Boyd}, 
there exists an optimal solution $p^{*}$ of the dual problem that has at most $d+1$ nonzero variables.
Then, from the constraint of the dual problem, we have $p^{*}=p^{\widetilde{\chi}}$ for some $\widetilde{\chi}$.
Therefore, we obtain \eqref{single_upper}.
\end{proof}

Similarly, the lower hedging price of $f$ for a single-round game is
\begin{equation*}
	\underline{E}_{\chi} (f) = \min_{\widetilde{\chi} \in \Gamma} I(\widetilde{\chi},f).
\end{equation*}

The relation \eqref{single_upper} is interpreted as follows.
A probability vector $p=(p_1,\cdots,p_l)$ is called a \textit{risk neutral measure} on $\chi$ if the expectation under $p$ is zero:
\begin{equation*}
	\sum_{j=1}^l p_j a_{j,k} = 0 \ (k=1,\cdots,d).
\end{equation*}
Let $\mathcal{P}(\chi)$ be the set of risk neutral measures on $\chi$.
Note that $\mathcal{P}(\chi)$ is closed and convex.
Then, the set $\{ p^{\widetilde{\chi}} \mid \widetilde{\chi} \in \Gamma \}$ coincides with the set of extreme points (vertices) of $\mathcal{P}(\chi)$.
Since the maximum of a linear function on a closed convex set is attained at extreme points,
the maximization in \eqref{single_upper} is interpreted as searching over all risk neutral measures:
\[
	\bar E_\chi(f)=\max_{p\in {\cal P}(\chi)} E_p(f),
\]
where $E_p$ denotes the expectation with respect to $p$.

When $d=2$ and $\chi=\{ a^{(1)}_1, a^{(1)}_{2} \} \times \{ a^{(2)}_1, a^{(2)}_{2} \}$, 
the maximization in \eqref{single_upper} involves two candidates of $\widetilde{\chi}$.
Specifically, the possible $\widetilde{\chi}$ is ABC or ABD in Figure \ref{LP_2d}.
When $d \geq 3$ and $\chi=\{ a^{(1)}_1, a^{(1)}_{2} \} \times \cdots \times \{ a^{(d)}_1, a^{(d)}_{2} \}$, 
the maximization in \eqref{single_upper} becomes more difficult since the number of possible $\widetilde{\chi}$ becomes large.
For example, when $d=3$, the number of possible $\widetilde{\chi}$ can be as large as 14, as shown in Appendix A.
In general, the number of possible $\widetilde{\chi}$ is at least $2^{d-2}$, as shown in Appendix B.
In the next section, we will provide some sufficient conditions on $f$ for the maximization in \eqref{single_upper} to be solved explicitly.

\begin{figure}
	\centering
	\includegraphics[scale=0.3]{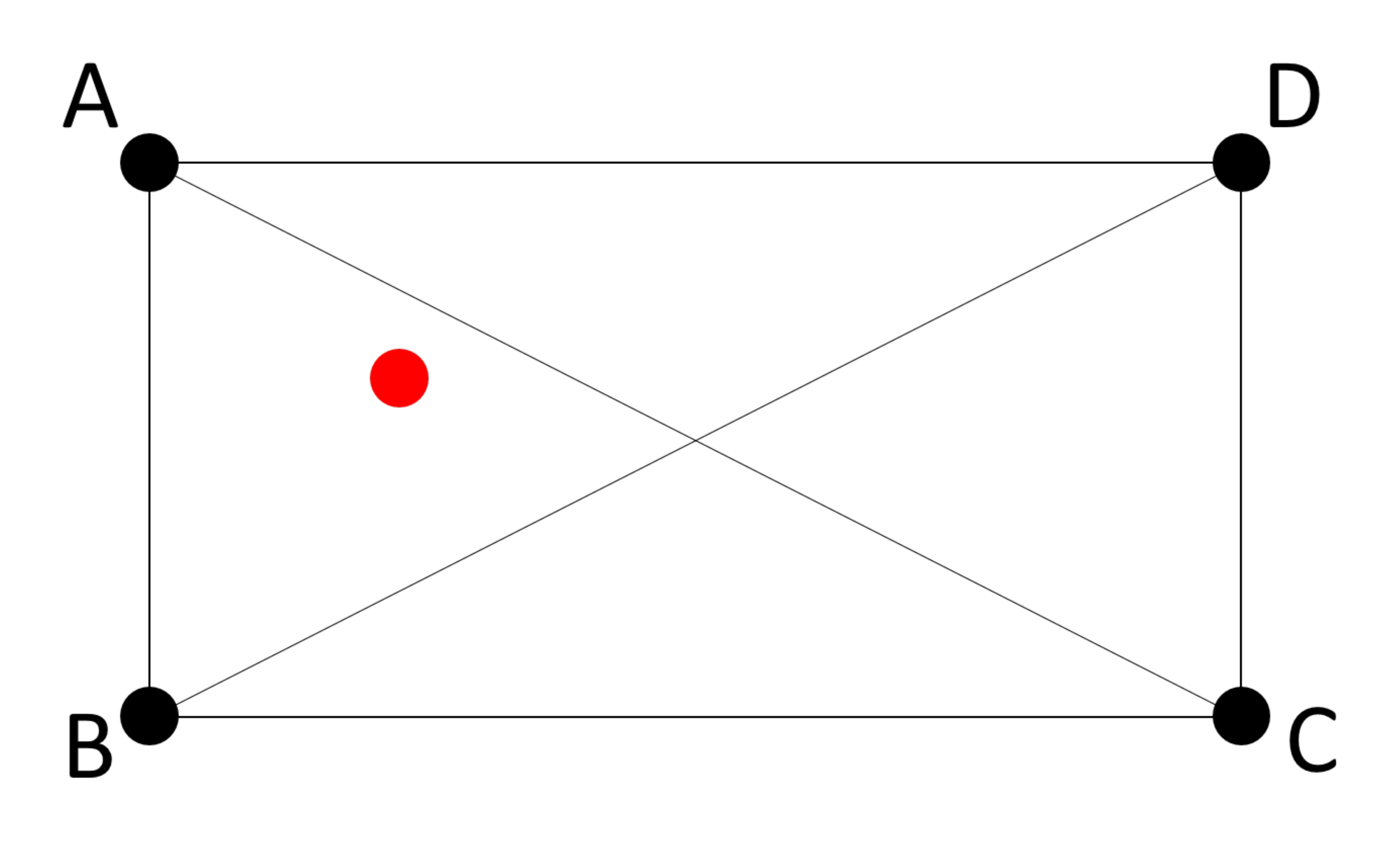}
	\caption{Single-round game when $d=2$. The move set $\chi$ corresponds to the four points A-D. The origin is  the point inside the rectangle drawn in red.}
	\label{LP_2d}
\end{figure}


Now, we consider the $N$-round game. 
As discussed by \cite{Nakajima} for $d=1$, the upper hedging price is calculated by solving linear programs recursively.
Specifically, let $\bar{f} (\cdot, N-n): \chi^n \to \mathbb{R}$, $n=N,N-1,\cdots,0$ be given by
\begin{equation}
	\bar{f} (\xi^n, N-n) = \max_{\widetilde{\chi} \in \Gamma} I(\widetilde{\chi}, \bar{f}(\xi^n \cdot, N-n-1)), \label{N_upper}
\end{equation}
with the initial condition $\bar{f} (\xi, 0) = f(\xi)$ for $\xi \in \chi^N$.
Here, $\bar{f}(\xi^n \cdot, N-n-1)$ denotes a function $\chi \to \mathbb{R}$ defined by $a_i \mapsto \bar{f}(\xi^n a_i, N-n-1)$.
Then, we have the following result.

\begin{Proposition}
The upper hedging price of $f: \chi^N \to \mathbb{R}$ is given by
\begin{equation*}
	\bar{E}_{\chi} (f) = \bar{f} (\square, N).
\end{equation*}
\end{Proposition}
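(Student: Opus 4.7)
The plan is to prove this by induction on the number of rounds $N$. The base case $N=1$ is immediate: by definition $\bar{f}(\square, 1) = \max_{\widetilde{\chi} \in \Gamma} I(\widetilde{\chi}, f)$, which is precisely the content of Proposition \ref{prop_dual}.

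For the inductive step, I would assume the identity for $(N-1)$-round games and decompose an arbitrary strategy $\mathcal{S}$ for the $N$-round game into (i) an initial vector $M_1 = \mathcal{S}(\square) \in \mathbb{R}^d$ and (ii) for each $x_1 \in \chi$ a continuation strategy $\mathcal{S}_{x_1}$ on $\chi^{N-1}$ defined by $\mathcal{S}_{x_1}(x_2 \cdots x_{n-1}) = \mathcal{S}(x_1 x_2 \cdots x_{n-1})$. The accumulated capital then factors as
\[
\mathcal{K}_N^{\mathcal{S}}(x_1 \cdots x_N) = M_1^{\top} x_1 + \mathcal{K}_{N-1}^{\mathcal{S}_{x_1}}(x_2 \cdots x_N),
\]
so the superreplication condition splits, for each $x_1$, into a requirement that $\mathcal{S}_{x_1}$ superreplicates $\xi' \mapsto f(x_1 \xi')$ from initial capital $\alpha + M_1^{\top} x_1$ in an $(N-1)$-round game.

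Applying the inductive hypothesis branch by branch reduces this to the condition $\alpha + M_1^{\top} x_1 \geq \bar{f}(x_1, N-1)$ for every $x_1 \in \chi$. Consequently
\[
\bar{E}_{\chi}(f) = \inf\bigl\{\alpha : \exists M_1 \in \mathbb{R}^d,\ \alpha + M_1^{\top} x_1 \geq \bar{f}(x_1, N-1) \ \forall x_1 \in \chi\bigr\},
\]
which is exactly the upper hedging price of a single-round game with payoff $\bar{f}(\cdot, N-1)$. Invoking Proposition \ref{prop_dual} once more yields $\bar{E}_{\chi}(f) = \max_{\widetilde{\chi} \in \Gamma} I(\widetilde{\chi}, \bar{f}(\cdot, N-1)) = \bar{f}(\square, N)$, as desired.

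The step I expect to require the most care is this patching argument: the inductive hypothesis is stated as an infimum, but I need a superreplicating strategy to actually exist on each $x_1$-branch so that the various continuation strategies can be assembled into one global strategy together with the outer choice of $M_1$. Because $\chi$ and each $\chi^{N-1}$ are finite, every subproblem is a finite linear program whose optimum is attained (this is already how Proposition \ref{prop_dual} was established via LP duality), so attainment propagates through the recursion and no boundary issues arise between the first round and the remaining $N-1$ rounds.
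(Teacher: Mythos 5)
Your proof is correct. The paper states this proposition without proof (deferring implicitly to the backward-induction argument of \cite{Nakajima} for $d=1$), and your argument --- decomposing a strategy into its first move $M_1$ and its continuation strategies $\mathcal{S}_{x_1}$, applying the inductive hypothesis branch by branch, and using the finiteness of the underlying linear programs to guarantee attainment so that the continuations can be patched into a single global superreplicating strategy --- is exactly the intended dynamic-programming justification, with the attainment issue (the one point that genuinely needs care) handled properly.
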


Therefore, the upper hedging price $\bar{E}_{\chi} (f)$ is calculated by a backward induction of \eqref{N_upper}.
In particular, for a European option, $\bar{f} (\xi^n,N-n)$ depends only on $S_n=x_1+\cdots+x_n$ 
and so the required number of calculations \eqref{N_upper} grows only polynomially with $N$.

The lower hedging price $\underline{E}_{\chi} (f)$ is also calculated by a backward induction.
In general, a market with $|\chi|=d+1$ is complete since $|\Gamma| = 1$, irrespective of $f$.


\subsection{Convex payoff functions}
Here, we consider the case where the payoff function $f$ is a European option $f(\xi)=F(S_N)$ with convex function $F$.
When $d=1$, the calculation of the upper hedging price of a convex payoff function is reduced to the binomial model \citep{Ruschendorf}, 
since the maximization in \eqref{single_upper} is always uniquely attained by the extreme pair $\widetilde{\chi} = \{ \min \chi, \max \chi \}$.
For general $d$, we have the following result.

\begin{Proposition}\label{prop_convex}
Suppose that $f$ is a European option $f(\xi)=F(S_N)$ with convex function $F$.
Let $\chi_0$ be the set of vertices of $\conv\chi$. 
Then, the upper hedging price of $f$ with move set $\chi$ coincides with that with move set $\chi_0$:
\begin{equation}
	\bar{E}_{\chi} (f) = \bar{E}_{\chi_0} (f). \label{convex_eq}
\end{equation}
\end{Proposition}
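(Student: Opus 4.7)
The plan is to reduce \eqref{convex_eq} to a single-round statement via a backward induction in which convexity is preserved at every step. The crucial single-round claim is: for any convex $h:\mathbb{R}^d\to\mathbb{R}$,
\[
\max_{\widetilde{\chi}\in\Gamma(\chi)} I(\widetilde{\chi},h)=\max_{\widetilde{\chi}\in\Gamma(\chi_0)} I(\widetilde{\chi},h).
\]
The inequality $\ge$ is immediate since $\chi_0\subseteq\chi$ gives $\Gamma(\chi_0)\subseteq\Gamma(\chi)$. For $\le$, pick any $\widetilde{\chi}=\{a_{i_0},\dots,a_{i_d}\}\in\Gamma(\chi)$ with its risk-neutral weights $p^{\widetilde{\chi}}$, write each $a_{i_j}\in\conv\chi=\conv\chi_0$ as a convex combination $a_{i_j}=\sum_{v\in\chi_0}\lambda^{(j)}_v v$, and apply Jensen's inequality to obtain
\[
I(\widetilde{\chi},h)=\sum_{j} p^{\widetilde{\chi}}_j h(a_{i_j})\le \sum_{v\in\chi_0} Q(v)\, h(v),\qquad Q(v):=\sum_j p^{\widetilde{\chi}}_j \lambda^{(j)}_v.
\]
A direct check using $\sum_j p^{\widetilde{\chi}}_j a_{i_j}=0$ gives $\sum_v Q(v)\, v=0$, so $Q\in\mathcal{P}(\chi_0)$. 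The interpretation of \eqref{single_upper} as a maximum over risk-neutral measures (which is attained at the extreme points indexed by $\Gamma(\chi_0)$), already established in the excerpt, then gives $\sum_v Q(v)\, h(v)\le \max_{\widetilde{\chi}\in\Gamma(\chi_0)} I(\widetilde{\chi},h)$.

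To use this claim recursively, I would write $\bar f(\xi^n,N-n)=G^{\chi}_n(S_n)$, which depends only on $S_n$ for a European option, and define $G^{\chi_0}_n$ analogously for the game on $\chi_0$. I would then prove by downward induction on $n$ that $G^{\chi}_n=G^{\chi_0}_n$ and that both are convex. The base case $n=N$ is just $F$, convex by hypothesis. For the inductive step, if $G^{\chi}_{n+1}$ is convex then for each $\widetilde{\chi}$ the map $s\mapsto \sum_j p^{\widetilde{\chi}}_j G^{\chi}_{n+1}(s+a_{i_j})$ is convex in $s$, so its pointwise maximum over finitely many $\widetilde{\chi}\in\Gamma(\chi)$ is convex, giving convexity of $G^{\chi}_n$. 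Meanwhile, applying the single-round claim with $h(\cdot)=G^{\chi}_{n+1}(s+\cdot)$ (convex in its argument by the hypothesis) at each fixed $s$ shows $G^{\chi}_n(s)=G^{\chi_0}_n(s)$. Setting $n=0$ in \eqref{N_upper} yields $\bar E_{\chi}(f)=G^{\chi}_0(0)=G^{\chi_0}_0(0)=\bar E_{\chi_0}(f)$, which is \eqref{convex_eq}.

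The main obstacle is the single-round claim, specifically the barycentric passage from $\Gamma(\chi)$ to $\Gamma(\chi_0)$ via the decomposition $a_{i_j}=\sum_v \lambda^{(j)}_v v$ together with Jensen's inequality and the observation that mixing risk-neutral weights on $\widetilde{\chi}$ with barycentric weights on $\chi_0$ again yields a risk-neutral measure on $\chi_0$. Once that is in place, the multi-round result follows by a routine induction using the preservation of convexity under the Bellman-type operator defined by \eqref{N_upper}.
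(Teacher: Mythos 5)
Your proposal is correct and follows essentially the same route as the paper: the easy inequality from the inclusion of risk-neutral measures (equivalently, of simplexes), the reverse inequality via a barycentric decomposition of each move into vertices of $\conv\chi$ combined with Jensen's inequality and the observation that the mixed weights again form a risk-neutral measure on $\chi_0$, and a backward induction to lift the single-round claim to $N$ rounds. The only difference is cosmetic: you spell out the preservation of convexity under the Bellman-type recursion \eqref{N_upper}, which the paper leaves implicit in the phrase ``by applying it to each induction step.''
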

\begin{proof}
We provide a proof for the single-round game.
By applying it to each induction step, the proof for the multi-round game is obtained.

Since $\mathcal{P} (\chi_0) \subset \mathcal{P} (\chi)$, we obtain
\begin{equation}
	\bar E_\chi(f) = \max_{p\in {\cal P}(\chi)} E_p(f) \ge \max_{p\in {\cal P}(\chi_0)} E_p(f) = \bar E_{\chi_0}(f). \label{eq:gege}
\end{equation}

Let $\chi_0=\{ b_1,\cdots,b_m \}$.
From Caratheodory's Theorem \citep{Rockafeller}, for each $a_i \in \chi \subset {\rm conv} \chi_0$, 
there exists a subset $\{ b_{j_0(i)},\cdots,b_{j_d(i)} \}$ of $\chi_0$
and a probability vector $(q_0^{i}, \cdots, q_d^i)$ such that
\[
	a_i = \sum_{k=0}^d q_k^{i} b_{j_k(i)}.
\]
Thus, from the convexity of $f$,
\begin{equation*}
	f(a_i) \leq \sum_{k=0}^d q_k^{i} f(b_{j_k(i)}).
\end{equation*}
Therefore,
\[
	E_p(f) = \sum_{i=1}^l p_i f(a_i) \le \sum_{i=1}^l \sum_{k=0}^d p_i q_k^{i} f(b_{j_k(i)})
\]
for any $p \in \mathcal{P} (\chi)$.
By summing up the right hand side for each point $b_j$ of $\chi_0$, we obtain
\[
	E_p(f) \leq \sum_{j=1}^m r_j f(b_j),
\]
where $r \in \mathcal{P} (\chi_0)$ since
\[
	\sum_{j=1}^m r_j = \sum_{i=1}^l \sum_{k=0}^d p_i q_k^{i} = \sum_{i=1}^l p_i  = 1
\]
and
\[
	\sum_{j=1}^m r_j b_j = \sum_{i=1}^l \sum_{k=0}^d p_i q_k^{i} b_{j_k(i)} = \sum_{i=1}^l p_i a_i = 0.
\]
Thus, for every $p \in \mathcal{P} (\chi)$, there exists $r \in \mathcal{P} (\chi_0)$ such that
\[
	E_p(f) \le E_r(f)  \le \bar E_{\chi_0} (f).
\]  
By taking the maximum in the left hand side, we obtain
\begin{equation}
  \bar E_{\chi}(f) \le \bar E_{\chi_0} (f). \label{eq:lele}
\end{equation}

From \eqref{eq:gege} and \eqref{eq:lele}, we obtain \eqref{convex_eq}.
\end{proof}

Suppose $d \geq 2$ and the move set $\chi$ is a product set \eqref{prod_chi}.
Then, Proposition \ref{prop_convex} implies that we can redefine 
$\chi_0 = \{ a^{(1)}_1, a^{(1)}_{n_1} \} \times \cdots \times \{ a^{(d)}_1, a^{(d)}_{n_d} \}$ as the move set.
However, unlike the case $d=1$, even with this reduction to $\chi_0$, the maximizing $\widetilde{\chi}$ in \eqref{single_upper} is not unique in general.
One example is the option on the maximum with $d \geq 3$, as we will see in the next section.

\subsection{Separable payoff functions}
We call a European option $f$ \textit{separable} if it can be decomposed as
\begin{equation}
	f(\xi) = F(S_N) = \sum_{k=1}^K f_k ((S_N)_{(k)}), \label{separable}
\end{equation}
where $\{ (S_N)_{(1)},\cdots,(S_N)_{(K)} \}$ is a partition of $\{ (S_N)_1,\cdots,(S_N)_d \}$:
\begin{equation*}
	(S_N)_{(k)} = ((S_N)_i)_{i \in A_k}, \quad \bigcup_{k=1}^K A_k = \{ 1,\cdots,d \}, \quad A_k \cap A_l = \emptyset \ (k\neq l).
\end{equation*}
We assume that the move set $\chi$ is a direct product of move sets $\chi_{(k)}$ for each subset $A_k$.
For example, when $K=d$ and $A_k = \{ k\}$, the move set $\chi$ is a product set $\chi=\chi_1 \times \cdots \times \chi_d$ as in \eqref{chi_prob}.
Then, the calculation of the upper hedging price of a separable European option is reduced to that for each component European option as follows.

\begin{Proposition}\label{prop_sep}
Suppose $f$ is separable \eqref{separable} and $\chi$ is a direct product of move sets $\chi_{(k)}$ for each subset $A_k$.
Then, the upper hedging price of $f$ coincides with the sum of the upper hedging prices of $f_1,\cdots,f_K$:
\begin{equation}
	\bar{E}_{\chi} (f) = \sum_{k=1}^K \bar{E}_{\chi_{(k)}} (f_k). \label{eq_prop_sep}
\end{equation}
\end{Proposition}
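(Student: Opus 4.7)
The plan is to establish the two inequalities $\bar{E}_{\chi}(f) \le \sum_k \bar{E}_{\chi_{(k)}}(f_k)$ and $\bar{E}_{\chi}(f) \ge \sum_k \bar{E}_{\chi_{(k)}}(f_k)$ for the single-round game $(N=1)$, and then propagate the identity to general $N$ via the backward recursion \eqref{N_upper}. Throughout, I write a move $a \in \chi = \chi_{(1)} \times \cdots \times \chi_{(K)}$ as $a = (a_{(1)},\ldots,a_{(K)})$ with $a_{(k)} \in \chi_{(k)}$.

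For the $\le$ direction in the single-round case, I would combine superreplicating strategies componentwise. If $\alpha_k$ and $M_k \in \mathbb{R}^{|A_k|}$ satisfy $\alpha_k + M_k^{\top} a_{(k)} \ge f_k(a_{(k)})$ for every $a_{(k)} \in \chi_{(k)}$, then $\alpha := \sum_k \alpha_k$, together with $M \in \mathbb{R}^d$ obtained by concatenating the $M_k$'s along the partition $A_1,\ldots,A_K$, satisfies $\alpha + M^{\top} a = \sum_k (\alpha_k + M_k^{\top} a_{(k)}) \ge \sum_k f_k(a_{(k)}) = f(a)$ for every $a \in \chi$. Taking each $\alpha_k = \bar{E}_{\chi_{(k)}}(f_k)$ gives the upper bound. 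For the $\ge$ direction, I would invoke the dual characterization $\bar{E}_{\chi}(f) = \max_{p \in \mathcal{P}(\chi)} E_p(f)$ stated just before the proposition: choose $p^{(k)} \in \mathcal{P}(\chi_{(k)})$ attaining $\bar{E}_{\chi_{(k)}}(f_k)$, form the product measure $p = p^{(1)} \otimes \cdots \otimes p^{(K)}$ on $\chi$, and observe that each of its marginal expectations vanishes, so $p \in \mathcal{P}(\chi)$. Since each $f_k$ depends only on $a_{(k)}$, independence gives $E_p(f) = \sum_k E_{p^{(k)}}(f_k) = \sum_k \bar{E}_{\chi_{(k)}}(f_k)$, which is therefore a lower bound on $\bar{E}_{\chi}(f)$.

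To extend to general $N$, I would argue by induction on $N-n$ that $\bar{f}(\xi^n,N-n) = \sum_k \bar{f}_k((S_n)_{(k)},N-n)$, where $\bar{f}_k$ denotes the analogous backward recursion applied to $f_k$ on $\chi_{(k)}$. The base case $n=N$ is immediate from the assumed separability of $f$. For the inductive step, the one-step function $x \mapsto \bar{f}(\xi^n x,N-n-1)$ decomposes by the inductive hypothesis as $\sum_k \bar{f}_k((S_n)_{(k)} + x_{(k)},N-n-1)$, which is separable in $x \in \chi$ with respect to the same partition, so the single-round result applies and yields $\bar{f}(\xi^n,N-n) = \sum_k \bar{f}_k((S_n)_{(k)},N-n)$. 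Evaluation at the empty path $\square$ then gives \eqref{eq_prop_sep}. The only real bookkeeping to watch is the verification that the product of risk-neutral measures is risk-neutral on the product move set; this is where the product-of-move-sets assumption on $\chi$ is essential, and without it the $\ge$ direction would fail.
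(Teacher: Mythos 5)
Your proof is correct. The $\ge$ half is exactly the paper's argument: form the product $p^{(1)}\otimes\cdots\otimes p^{(K)}$ of optimal risk-neutral measures and note that it is risk-neutral on the product move set. For the $\le$ half you diverge from the paper: you work on the primal side, concatenating superreplicating strategies $(\alpha_k,M_k)$ into a single strategy for $f$, whereas the paper stays entirely in the dual picture, taking an arbitrary $p\in\mathcal{P}(\chi)$, observing that its marginals $p_k$ lie in $\mathcal{P}(\chi_{(k)})$, and bounding $E_p(f)=\sum_k E_{p_k}(f_k)\le\sum_k\bar{E}_{\chi_{(k)}}(f_k)$. Both routes are sound; the paper's is slightly more uniform (one framework for both inequalities, and no need to worry about whether the infimum in the definition of $\bar{E}_{\chi_{(k)}}(f_k)$ is attained --- though for a finite move set the LP optimum is attained, so your choice $\alpha_k=\bar{E}_{\chi_{(k)}}(f_k)$ is legitimate, and an $\epsilon$-argument would cover it in any case), while yours has the virtue of exhibiting an explicit superreplicating strategy for $f$. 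Your multi-round induction, showing that the backward recursion \eqref{N_upper} preserves separability so that $\bar{f}(\xi^n,N-n)=\sum_k\bar{f}_k((S_n)_{(k)},N-n)$, is also correct and in fact spells out a step the paper dismisses with ``by applying it to each round.''
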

\begin{proof}
We provide a proof for the single-round game.
By applying it to each round, the proof for the multi-step game is obtained.

For simplicity of notation we consider the case $K=d$ and $A_k = \{ k\}$ without essential loss of generality.
Then, the move set $\chi$ is a product set $\chi=\chi_1 \times \cdots \times \chi_d$ and the payoff function is
written as
\begin{equation}
	f(\xi) = \sum_{k=1}^d f_k ((S_N)_k). \label{sep_f}
\end{equation}

Let $p$ be an arbitrary risk neutral measure on $\chi$ and let $p_1,\cdots,p_d$ be its marginals.
Then, each $p_k$ is a risk neutral measure on $\chi_k$.
By taking expectations in \eqref{sep_f},
\[
	E_p(f)= \sum_{k=1}^d E_{p_k}(f_k).
\]
Since $E_{p_k} (f_k) \leq \bar{E}_{\chi_k} (f_k)$,
\[
	E_p(f) \le \sum_{k=1}^d \bar{E}_{\chi_k} (f_k).
\]
Therefore,
\begin{equation}
	\bar{E}_{\chi} (f) = \max_{p \in \mathcal{P} (\chi)} E_p(f) \le \sum_{k=1}^d \bar{E}_{\chi_k} (f_k).   \label{eq:le}
\end{equation}

Conversely, let $p_1,\cdots,p_d$ be arbitrary risk neutral measures on $\chi_1,\cdots,\chi_d$, respectively.
Define $p=p_1 \times \cdots \times p_d$.
Then, $p$ is a risk neutral measure on $\chi$ and it satisfies
\[
	\sum_{k=1}^d E_{p_k}(f_k) = E_{p}(f).
\]
Since $E_{p} (f) \leq \bar{E}_{\chi} (f)$,
\[
	\sum_{k=1}^d E_{p_k}(f_k) \leq \bar{E}_{\chi} (f).
\]
Therefore,
\begin{equation}
	\sum_{k=1}^d \bar{E}_{\chi_k} (f_k) = \max_{p_1,\cdots,p_d} \sum_{k=1}^d E_{p_k}(f_k) \leq \bar{E}_{\chi} (f). \label{eq:ge}
\end{equation}

From \eqref{eq:le} and \eqref{eq:ge}, we obtain \eqref{eq_prop_sep}.
\end{proof}

\section{Submodular and supermodular payoff functions}
We have seen that the calculation of the upper hedging price reduces to backward induction of the maximization \eqref{single_upper}.
In this section, we show that this maximization is solved in closed form 
if the payoff function $f$ satisfies a combinatorial property called submodularity or supermodularity.
As a special case, we discuss the options on the maximum or the minimum.
Throughout this section, we assume that the move set $\chi$ is a product set \eqref{prod_chi} with $n_1=\cdots=n_d=2$, i.e., the lattice binomial model. 
Note that $a^{(k)}_1<0<a^{(k)}_2$ for $k=1,\cdots,d$, because we are assuming that
$\conv \chi$ contains the origin in its interior.

\subsection{Submodular and supermodular functions}
The concept of submodularity is fundamental in combinatorial optimization theory \citep{Fujishige}.

\begin{Definition}\label{def_submo}
\begin{itemize}
\item A set function $f: 2^X \to \mathbb{R}$ is said to be \textit{submodular} if it satisfies
\begin{equation*}
	f(U \cap V) + f(U \cup V) \leq f(U) + f(V),
\end{equation*}
for every $U,V \subset X$.

\item A set function $f: 2^X \to \mathbb{R}$ is said to be \textit{supermodular} if it satisfies
\begin{equation*}
	f(U \cap V) + f(U \cup V) \geq f(U) + f(V),
\end{equation*}
for every $U,V \subset X$.
\end{itemize}
\end{Definition}

It is well known (Theorem 44.1 of \cite{Schrijver}) that in Definition \ref{def_submo} we only need to consider $U$ and $V$ such that
\begin{equation}
|U\setminus V|=|V\setminus U|=1. \label{two-element-difference}
\end{equation}

We can extend the definition of submodularity and supermodularity to functions on the hypercube $[0,1]^d$ and $\mathbb{R}^d$.
For vectors $u,v \in \mathbb{R}^d$, we denote the vectors of componentwise minimum and maximum by $u \land v \in \mathbb{R}^d$ and $u \lor v \in \mathbb{R}^d$, respectively:
\begin{equation*}
	(u \land v)_i = \min (u_i,v_i), \quad (u \lor v)_i = \max (u_i,v_i).
\end{equation*}
Then, submodular and supermodular functions on the hypercube $[0,1]^d$ and $\mathbb{R}^d$ are defined as follows.

\begin{Definition}\label{def_conti}
\begin{itemize}
\item A function $\hat{f}: [0,1]^d \to \mathbb{R}$ or $\mathbb{R}^d \to \mathbb{R}$ is said to be \textit{submodular} if it satisfies
\begin{equation}
	\hat{f}(u \land v) + \hat{f}(u \lor v) \leq \hat{f}(u) + \hat{f}(v), \label{ineq1}
\end{equation}
for every $u,v \in [0,1]^d$ or $\mathbb{R}^d$.

\item A function $\hat{f}: [0,1]^d \to \mathbb{R}$ or $\mathbb{R}^d \to \mathbb{R}$ is said to be \textit{supermodular} if it satisfies
\begin{equation*}
	\hat{f}(u \land v) + \hat{f}(u \lor v) \geq \hat{f}(u) + \hat{f}(v),
\end{equation*}
for every $u,v \in [0,1]^d$ or $\mathbb{R}^d$.
\end{itemize}
\end{Definition}

We note that the concept of multivariate total positivity (MTP2) is closely related to submodularity \citep{Karlin,Fallat}.
Namely, a positive function is MTP2 if and only if its logarithm is supermodular.

When $\hat{f}: [0,1]^d \to \mathbb{R}$ or $\mathbb{R}^d \to \mathbb{R}$ is twice continuously differentiable, 
the submodularity and supermodularity of $\hat{f}$ are characterized by the signs of the mixed second order derivatives as follows.

\begin{Lemma}\label{lem_mod}
\begin{itemize}
\item A twice continuously differentiable function $\hat{f}: [0,1]^d \to \mathbb{R}$ or $\mathbb{R}^d \to \mathbb{R}$ is submodular if and only if
\begin{equation}
	\frac{\partial^2 \hat{f}}{\partial s_i \partial s_j} \leq 0 \label{partial_ineq}
\end{equation}
for every $i \neq j$.

\item A twice continuously differentiable function $\hat{f}: [0,1]^d \to \mathbb{R}$ or $\mathbb{R}^d \to \mathbb{R}$ is supermodular if and only if
\begin{equation*}
	\frac{\partial^2 \hat{f}}{\partial s_i \partial s_j} \geq 0
\end{equation*}
for every $i \neq j$.
\end{itemize}
\end{Lemma}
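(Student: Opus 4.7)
My plan is to prove the submodular case; the supermodular case then follows immediately by applying the result to $-\hat{f}$, which negates every mixed partial.

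For the \emph{only if} direction, I would fix $i \neq j$, an interior point $x$, and small $\epsilon, \delta > 0$, and test the submodularity inequality \eqref{ineq1} on $u = x + \epsilon e_i$ and $v = x + \delta e_j$ (with $e_i, e_j$ the standard basis vectors). Since $i \neq j$, one has $u \wedge v = x$ and $u \vee v = x + \epsilon e_i + \delta e_j$, so \eqref{ineq1} rearranges to
\[
\hat{f}(x + \epsilon e_i + \delta e_j) - \hat{f}(x + \epsilon e_i) - \hat{f}(x + \delta e_j) + \hat{f}(x) \leq 0.
\]
Dividing by $\epsilon\delta$ and letting $\epsilon, \delta \to 0$ yields $\partial^2 \hat{f}/\partial s_i \partial s_j (x) \leq 0$; the $C^2$ assumption guarantees that the iterated difference quotient converges to the mixed partial.

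For the \emph{if} direction, the plan is to express the submodularity gap as a double integral of mixed partials whose index pairs are forced to be distinct. Given arbitrary $u, v$, set $w = u \wedge v$, $A = \{k : u_k > v_k\}$, $B = \{k : u_k < v_k\}$, and $\alpha = u - w$, $\beta = v - w$. Then $\alpha \geq 0$ is supported on $A$, $\beta \geq 0$ is supported on $B$, $A \cap B = \emptyset$, and $u \vee v = w + \alpha + \beta$. I would then rewrite
\[
\hat{f}(u) + \hat{f}(v) - \hat{f}(u \wedge v) - \hat{f}(u \vee v) = \bigl[\hat{f}(w+\alpha) - \hat{f}(w)\bigr] - \bigl[\hat{f}(w+\alpha+\beta) - \hat{f}(w+\beta)\bigr]
\]
and apply the fundamental theorem of calculus once along the $\alpha$-direction and once along the $\beta$-direction, obtaining
\[
-\int_0^1 \!\! \int_0^1 \sum_{i \in A,\, j \in B} \alpha_i \beta_j \, \frac{\partial^2 \hat{f}}{\partial s_i \partial s_j}(w + s\alpha + t\beta) \, ds \, dt.
\]
Because $A \cap B = \emptyset$, every summation pair satisfies $i \neq j$; the hypothesis \eqref{partial_ineq} together with $\alpha_i, \beta_j \geq 0$ makes the integrand nonpositive, so the whole expression is nonnegative, which is \eqref{ineq1}.

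The step I expect to be the main obstacle is choosing the right decomposition in the \emph{if} direction so that the two integration directions are supported on disjoint index sets $A$ and $B$. Once this is arranged, the double integral automatically involves only mixed partials indexed by distinct $i, j$ — precisely what the hypothesis controls — and the rest is routine. An alternative route would be to establish a continuous analogue of \eqref{two-element-difference}, reducing \eqref{ineq1} to the two-coordinate case, but the decomposition above handles the general statement directly without an auxiliary reduction.
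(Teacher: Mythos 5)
Your proof is correct, and the \emph{only if} direction is essentially identical to the paper's (same test points $u=x+\epsilon e_i$, $v=x+\delta e_j$, same passage to the mixed partial via difference quotients). The \emph{if} direction, however, takes a genuinely different route. The paper first invokes the reduction to the case where $u$ and $v$ differ in only two coordinates --- the continuous analogue of \eqref{two-element-difference}, which it cites from the set-function literature and applies without proof to functions on $\mathbb{R}^d$ --- and then writes the two-coordinate gap as a single double integral of $\partial^2\hat f/\partial s_1\partial s_2$ over a rectangle. You instead decompose $u-u\wedge v$ and $v-u\wedge v$ as increments $\alpha,\beta$ supported on the disjoint index sets $A=\{k:u_k>v_k\}$ and $B=\{k:u_k<v_k\}$, and express the full submodularity gap as
\[
-\int_0^1\!\!\int_0^1 \sum_{i\in A,\,j\in B}\alpha_i\beta_j\,\frac{\partial^2\hat f}{\partial s_i\partial s_j}(w+s\alpha+t\beta)\,ds\,dt ,
\]
where disjointness of $A$ and $B$ forces $i\neq j$ in every term; the sign conclusion is then immediate. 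Your version is self-contained --- it never needs the auxiliary reduction, whose extension from set functions to the lattice $\mathbb{R}^d$ the paper leaves implicit --- at the cost of a slightly more elaborate bookkeeping of supports; the paper's version is shorter once the reduction is granted. Both are valid, and the segment $w+s\alpha+t\beta$ stays in the box $[u\wedge v,\,u\vee v]$, so your argument also works verbatim on the domain $[0,1]^d$.
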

\begin{proof}
We only prove the first statement.
The proof of the second statement is similar by considering $-\hat{f}$.

Assume $\hat{f}$ is submodular and let $\varepsilon>0$ and $\varepsilon'>0$.
Let $e_i$ be the unit vector with $i$-th coordinate one and other coordinates zero.
Substituting $u=s + \varepsilon e_i$ and $v=s + \varepsilon' e_j$ into \eqref{ineq1},
\begin{equation*}
	\hat{f}(s) + \hat{f}(s + \varepsilon e_i + \varepsilon' e_j) \leq \hat{f}(s + \varepsilon e_i) + \hat{f}(s + \varepsilon' e_j).
\end{equation*}
Therefore,
\begin{equation*}
	\frac{\hat{f}(s + \varepsilon e_i + \varepsilon' e_j)-\hat{f}(s + \varepsilon' e_j)}{\varepsilon} \leq \frac{\hat{f}(s + \varepsilon e_i) - \hat{f}(s)}{\varepsilon}.
\end{equation*}
Putting $\varepsilon \to 0$, we obtain
\begin{equation*}
	\frac{\partial \hat{f}}{\partial s_i} (s + \varepsilon' e_j) \leq \frac{\partial \hat{f}}{\partial s_i} (s).
\end{equation*}
Therefore, we obtain \eqref{partial_ineq}.

Conversely, assume \eqref{partial_ineq}. 
To prove the submodularity of $\hat{f}$, it is sufficient to consider the case where $u$ and $v$ differ only in two elements as in \eqref{two-element-difference}. 
Without loss of generality, let $u=(a_1, b_2, c_3, \cdots, c_d)$ and $v=(b_1, a_2, c_3,\cdots, c_d)$, with $a_1 < b_1, a_2 < b_2$.
Then,
\begin{equation*}
	(\hat{f}(u \lor v) - \hat{f}(v)) - (\hat{f}(u) - \hat{f}(u \land v)) 
= \int_{a_1}^{b_1} \int_{a_2}^{b_2} \frac{\partial^2 \hat{f}}{\partial s_1 \partial s_2}(s_1, s_2,c_3,\cdots, c_d)
{\rm d} s_1 {\rm d} s_2 \leq 0.
\end{equation*}
Therefore, $\hat{f}$ is submodular.
\end{proof}

\subsection{Convex closure and Lov\'asz extension}
In considering the maximization \eqref{single_upper}, the concepts of convex closure and Lov\'asz extension are useful.
\cite{Dughmi} presents a brief survey on these topics.

Let $X = \{ x_1,\cdots,x_d \}$ be a finite set. 
For a subset $A$ of $X$, its characteristic vector $1_A \in \mathbb{R}^d$ is defined as
\begin{equation*}
	(1_A)_k = \begin{cases} 1 & (x_k \in A) \\ 0 & (x_k \not\in A) \end{cases}.
\end{equation*}
In the following, we identify $2^X$ with $\{ 0,1 \}^d$ by the bijection $A \mapsto 1_A$.
For a set function $f: 2^X \to \mathbb{R}$, a function $\hat{f}: [0,1]^d \to \mathbb{R}$ or $\mathbb{R}^d \to \mathbb{R}$ is said to be its \textit{extension} if it satisfies $\hat{f} (1_A) = f(A)$.

\begin{Definition}
For a set function $f: 2^X \to \mathbb{R}$, its convex closure $f^-: [0,1]^d \to \mathbb{R}$ and concave closure $f^+: [0,1]^d \to \mathbb{R}$ are extensions of $f$ defined as
\begin{equation*}
	f^- (s) = \max \left\{ g(s) \mid g(1_A) \leq f(A), \forall A\subset X,\ \  g: {\rm convex} \right\},
\end{equation*}
\begin{equation*}
	f^+ (s) = \min \left\{ g(s) \mid g(1_A) \geq f(A), \forall A\subset X,\ \  g: {\rm concave} \right\}.
\end{equation*}
\end{Definition}

From the definition, $f^-$ and $f^+$ are convex and concave, respectively.

\begin{Lemma}\label{lem_closure}
For a set function $f: 2^X \to \mathbb{R}$, its convex closure and concave closure are given by
\begin{equation*}
	f^- (s) = \min_{\alpha} \left\{ \left. \sum_{A \subset X} \alpha_A f(A) \right| \sum_{A \subset X} \alpha_A 1_A = s, \sum_{A \subset X} \alpha_A = 1, \alpha_A \geq 0 \right\},
\end{equation*}
\begin{equation}
	f^+ (s) = \max_{\alpha} \left\{ \left. \sum_{A \subset X} \alpha_A f(A) \right| \sum_{A \subset X} \alpha_A 1_A = s, \sum_{A \subset X} \alpha_A = 1, \alpha_A \geq 0 \right\}. \label{concave_closure}
\end{equation}
\end{Lemma}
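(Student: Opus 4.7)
The plan is to prove only the formula for the convex closure $f^-$; the concave closure formula then follows by applying the convex closure result to $-f$ and using $f^+ = -(-f)^-$. Let me denote by $g^*(s)$ the right-hand side of the convex closure formula, i.e., the minimum of $\sum_A \alpha_A f(A)$ over all probability vectors $\alpha = (\alpha_A)_{A \subset X}$ satisfying $\sum_A \alpha_A 1_A = s$. I will establish $f^-(s) = g^*(s)$ by proving both inequalities.

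For the inequality $f^-(s) \le g^*(s)$, I would fix any admissible $\alpha$ for the representation $s = \sum_A \alpha_A 1_A$, and any convex function $g$ with $g(1_A) \le f(A)$ for every $A \subset X$. By Jensen's inequality applied to $g$,
\begin{equation*}
g(s) = g\Bigl(\sum_A \alpha_A 1_A\Bigr) \le \sum_A \alpha_A g(1_A) \le \sum_A \alpha_A f(A).
\end{equation*}
Taking the supremum on the left over all admissible $g$ gives $f^-(s) \le \sum_A \alpha_A f(A)$, and then taking the infimum on the right over $\alpha$ yields $f^-(s) \le g^*(s)$.

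For the reverse inequality, I would show that $g^*$ is itself an admissible competitor in the definition of $f^-$, i.e.\ that $g^*$ is convex on $[0,1]^d$ and satisfies $g^*(1_A) \le f(A)$ for every $A$. The vertex bound is immediate by choosing $\alpha$ concentrated on $A$. For convexity, I would take $s_1, s_2 \in [0,1]^d$ and $\lambda \in [0,1]$, pick representations $s_i = \sum_A \alpha^{(i)}_A 1_A$ achieving (or approaching) $g^*(s_i)$, and observe that the convex combination $\alpha = \lambda \alpha^{(1)} + (1-\lambda)\alpha^{(2)}$ is again a probability vector with $\sum_A \alpha_A 1_A = \lambda s_1 + (1-\lambda) s_2$; hence
\begin{equation*}
g^*(\lambda s_1 + (1-\lambda)s_2) \le \sum_A \alpha_A f(A) = \lambda \sum_A \alpha^{(1)}_A f(A) + (1-\lambda) \sum_A \alpha^{(2)}_A f(A),
\end{equation*}
which gives the convexity inequality after taking infima (or passing to the limit if the minima in the definition of $g^*$ are not attained; since the feasible set is a compact polytope in the finite-dimensional simplex and the objective is linear, the minima are in fact attained, so there is nothing to worry about). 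Admissibility of $g^*$ then yields $g^*(s) \le f^-(s)$, completing the argument.

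The main obstacle, though it is mild, is the convexity of $g^*$: one must note that the definition of $g^*$ is an infimum of a linear functional over a polytope that depends linearly on $s$, which is exactly the setup where the optimal value is a convex function of the parameter. Since the feasible polytope is compact and nonempty for every $s \in [0,1]^d$ (every point in the hypercube is a convex combination of its vertices), the minimum is attained and the argument above goes through without qualification. The concave closure formula is obtained by repeating the argument with inequalities reversed, or equivalently by applying the convex case to $-f$.
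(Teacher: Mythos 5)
Your proof is correct. Note that the paper itself does not prove this lemma at all --- its ``proof'' is a one-line citation to Section 3.1.1 of the survey by Dughmi --- so there is no in-paper argument to compare against; your self-contained derivation is the standard one and fills that gap. Both directions are sound: the Jensen step gives $f^-(s)\le g^*(s)$ for every admissible $g$ and every feasible $\alpha$, and the reverse inequality follows once you verify that $g^*$ is itself an admissible competitor. Your handling of the one nontrivial point --- convexity of $g^*$ as the optimal value of a linear program whose feasible polytope depends affinely on the parameter $s$, together with nonemptiness (every $s\in[0,1]^d$ lies in $\conv\{0,1\}^d$) and attainment by compactness --- is exactly right, and the reduction of the concave-closure formula to the convex one via $f^+=-(-f)^-$ is valid.
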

\begin{proof}
See section 3.1.1 of \cite{Dughmi}.
\end{proof}

Another type of extension was introduced by \cite{Lovasz} and it plays an important role in submodular function optimization \citep{Fujishige}.

\begin{Definition}\citep{Lovasz}
For a set function $f: 2^X \to \mathbb{R}$, its \textit{Lov\'asz extension} $f^L: 
[0,1]^d \to \mathbb{R}$ is defined as
\begin{equation*}
	f^L(s) = \sum_{j=0}^d p_j(s) f (A_j(s)),
\end{equation*}
where $p_0(s),p_1(s),\cdots,p_d(s) \geq 0$ and $\emptyset=A_0(s) \subset A_1(s) \subset \cdots \subset A_d(s) = X$ are given by
\begin{equation*}
	s = \sum_{j=0}^d p_j(s) 1_{A_j(s)}, \quad \sum_{j=0}^d p_j(s) = 1, \quad |A_j(s)| = j.
\end{equation*}
\end{Definition}

Note that the Lov\'asz extension can be viewed as taking a special value of $\alpha$ in \eqref{concave_closure}.
\cite{Lovasz} showed that the submodularity of a set function $f$ is equivalent to the convexity of its Lov\'asz extension $f^L$.
In fact, there is a stronger result as follows.

\begin{Lemma}\label{lem_lovasz}
\begin{itemize}
\item The convex closure of a submodular function $f$ is equal to the Lov\'asz extension of $f$: $f^- = f^L$.

\item The concave closure of a supermodular function $f$ is equal to the Lov\'asz extension of $f$: $f^+ = f^L$.
\end{itemize}
\end{Lemma}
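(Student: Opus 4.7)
My plan is to prove the submodular statement using the LP characterization of $f^-$ from Lemma~\ref{lem_closure} together with a standard uncrossing argument, and then to deduce the supermodular statement from it by applying the result to $-f$. The reduction rests on two identities that follow by unwinding the definitions of the closures: $(-f)^L = -f^L$ and $(-f)^- = -f^+$ (the latter by substituting $h=-g$ in the extremal problem defining each closure, which swaps convex with concave and $\max$ with $\min$). Granting these, if $f$ is supermodular then $-f$ is submodular, and the submodular equality $(-f)^- = (-f)^L$ becomes $-f^+ = -f^L$, i.e.\ $f^+ = f^L$.

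For the submodular case itself, I would first observe the easy inequality $f^-(s) \le f^L(s)$: the Lov\'asz decomposition $s = \sum_{j=0}^d p_j(s)\,1_{A_j(s)}$ with $\sum_j p_j(s)=1$ and $p_j(s)\ge 0$ supplies a feasible point of the LP in Lemma~\ref{lem_closure} whose objective is exactly $f^L(s)$. The substantive direction is $f^-(s) \ge f^L(s)$. I would select an optimal solution $\alpha^*$ of the LP that, among all optima, maximises the auxiliary quantity $\Phi(\alpha) := \sum_A \alpha_A |A|^2$; such a maximiser exists because the optimal set is a nonempty compact polytope. I then claim the support of $\alpha^*$ is a chain. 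If not, pick incomparable $A, B$ in the support, set $\delta := \min(\alpha^*_A,\alpha^*_B)>0$, and transfer mass $\delta$ from $\alpha^*_A, \alpha^*_B$ to $\alpha^*_{A\cap B}, \alpha^*_{A\cup B}$. The identity $1_A + 1_B = 1_{A\cap B} + 1_{A\cup B}$ preserves feasibility, submodularity makes the LP objective weakly decrease (so the new point is still optimal), and a short computation gives $\Phi(\alpha')-\Phi(\alpha^*) = 2\delta\,|A\setminus B|\cdot|B\setminus A|>0$, contradicting the choice of $\alpha^*$. Hence the support of $\alpha^*$ is a chain, on which the LP constraints uniquely determine the coefficients, and $\sum_A \alpha^*_A f(A)$ equals $f^L(s)$ (extending the chain to a maximal one with zero coefficients where needed).

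The main obstacle is the uncrossing step and the accompanying claim that every chain-supported feasible decomposition of $s$ reproduces the Lov\'asz value. When the coordinates of $s$ are all distinct, the chain carrying $s$ is uniquely determined by the ordering of the $s_i$, so the coefficients match the Lov\'asz formula verbatim; when ties occur, several chains are possible, but the Lov\'asz coefficients on the sets where the chains disagree are forced to zero, so $\sum_A \alpha_A f(A)$ is unaffected by the choice. The remaining pieces --- verifying the two identities $(-f)^L=-f^L$ and $(-f)^-=-f^+$, and compactness of the LP optimum set --- are routine.
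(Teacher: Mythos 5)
Your proof is correct. Note that the paper does not actually prove this lemma --- it simply cites Section 3.1.3 of Dughmi's survey --- so any self-contained argument is ``different'' from what is on the page; yours is the standard uncrossing proof of this fact, and all the steps check out. In particular: the easy inequality $f^-(s)\le f^L(s)$ via feasibility of the Lov\'asz decomposition is right; the set of LP optima is a nonempty compact polytope, so a maximiser of $\Phi(\alpha)=\sum_A\alpha_A|A|^2$ over it exists; the exchange $1_A+1_B=1_{A\cap B}+1_{A\cup B}$ preserves both constraints and, by submodularity, does not increase the objective, so optimality is retained; and writing $a=|A\setminus B|$, $b=|B\setminus A|$, $c=|A\cap B|$ one gets $c^2+(a+b+c)^2-(a+c)^2-(b+c)^2=2ab$, confirming your strict increase $2\delta ab>0$ for incomparable $A,B$. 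The final step --- that a chain-supported feasible $\alpha$ necessarily reproduces the Lov\'asz value, because the coefficients along a chain are forced to be successive differences of the sorted coordinates of $s$ and any ambiguity from ties carries zero weight --- is exactly the right way to close the argument, and your reduction of the supermodular case via $(-f)^L=-f^L$ and $(-f)^-=-f^+$ is immediate from Lemma~\ref{lem_closure} and the linearity of the Lov\'asz extension in $f$. In short, you have supplied the proof the paper outsources; nothing is missing.
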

\begin{proof}
See section 3.1.3 of \cite{Dughmi}.
\end{proof}

\subsection{Relation between upper hedging price and concave closure}
Let $g: [ 0,1 ]^d \to [a^{(1)}_1,a^{(1)}_2] \times \cdots \times [a^{(d)}_1,a^{(d)}_2]$ be a bijective affine map defined by $g(s) = ((1-s_1) a^{(1)}_1+s_1 a^{(1)}_2, \cdots, (1-s_d) a^{(d)}_1+s_d a^{(d)}_2)$ and
$g_0: \{ 0,1 \}^d \to \chi$ be its restriction to $\{ 0,1 \}^d$.
By using $g_0$, we can identify the payoff function $f: \chi \to \mathbb{R}$ of the single-round game with a set function $f_0=f \circ g_0: \{ 0,1 \}^d \to \mathbb{R}$. 
Then, its concave closure is closely related to the maximization \eqref{single_upper} as follows.

\begin{Proposition}\label{lem_cc}
For a single-round game, the upper hedging price is given by
\begin{equation*}
	\bar{E}_{\chi} (f) = f_0^+ (g^{-1}(0)). 
\end{equation*}
\end{Proposition}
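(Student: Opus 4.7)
The plan is to unwind both sides of the claimed identity into maximizations over probability distributions and then observe that the affine bijection $g$ sends one problem to the other. From Proposition \ref{prop_dual} and the discussion that follows it, the upper hedging price in the single-round game can be written as
\[
\bar{E}_{\chi}(f) = \max_{p \in \mathcal{P}(\chi)} E_p(f) = \max_{p} \Bigl\{ \textstyle\sum_{a \in \chi} p_a f(a) \;\Big|\; \sum_{a} p_a = 1,\; \sum_{a} p_a \, a = 0,\; p_a \geq 0 \Bigr\}.
\]
On the other hand, by Lemma \ref{lem_closure}, the concave closure evaluated at $g^{-1}(0)$ is
\[
f_0^+(g^{-1}(0)) = \max_{\alpha} \Bigl\{ \textstyle\sum_{A \subset X} \alpha_A f_0(A) \;\Big|\; \sum_A \alpha_A 1_A = g^{-1}(0),\; \sum_A \alpha_A = 1,\; \alpha_A \geq 0 \Bigr\}.
\]
So both quantities are linear optima of the same objective (after identification) over affinely described probability polytopes, and the task reduces to matching the feasible sets.

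Next I would use the bijection $g_0 \colon \{0,1\}^d \to \chi$ to transport distributions: for each $p = (p_a)_{a \in \chi}$ set $\alpha_A := p_{g_0(1_A)}$ and conversely. Under this correspondence $\sum_a p_a = 1 \Leftrightarrow \sum_A \alpha_A = 1$ and $p_a \geq 0 \Leftrightarrow \alpha_A \geq 0$, and the objectives agree by the very definition $f_0 = f \circ g_0$:
\[
\sum_{a \in \chi} p_a f(a) = \sum_{A \subset X} \alpha_A f(g_0(1_A)) = \sum_{A \subset X} \alpha_A f_0(A).
\]

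The one substantive step is translating the risk-neutral constraint $\sum_a p_a a = 0$ into $\sum_A \alpha_A 1_A = g^{-1}(0)$. This uses the affinity of $g$: since $g$ is affine and $g_0 = g|_{\{0,1\}^d}$,
\[
\sum_{A \subset X} \alpha_A \, g_0(1_A) = g\Bigl(\sum_{A \subset X} \alpha_A \, 1_A\Bigr),
\]
because the coefficients $\alpha_A$ form a probability vector. Consequently $\sum_a p_a a = 0$ is equivalent to $g(\sum_A \alpha_A 1_A) = 0$, and since $g$ is a bijection, to $\sum_A \alpha_A 1_A = g^{-1}(0)$. Thus the two feasible sets and objectives coincide, proving $\bar{E}_\chi(f) = f_0^+(g^{-1}(0))$.

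There is no real obstacle; the only place where one must be a little careful is the affinity argument above, where using an arbitrary (not necessarily probability) vector of coefficients would break the identity $\sum \alpha_A g_0(1_A) = g(\sum \alpha_A 1_A)$. The constraint $\sum \alpha_A = 1$ is precisely what allows the affine map to be moved inside the sum.
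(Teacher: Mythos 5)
Your proof is correct and follows essentially the same route as the paper: both arguments hinge on using the affinity of $g$ together with $\sum_A \alpha_A = 1$ to translate the constraint $\sum_A \alpha_A 1_A = g^{-1}(0)$ into the risk-neutral condition $\sum_A \alpha_A\, g(1_A) = 0$, thereby identifying the feasible set of the concave-closure LP with $\mathcal{P}(\chi)$. The only difference is organizational: you exhibit a single bijection between the two feasible polytopes, whereas the paper proves the two inequalities separately, with the reverse direction detouring through the extreme points $p^{\widetilde{\chi}}$ and Proposition \ref{prop_dual}; your streamlining is legitimate since the paper itself records $\bar{E}_\chi(f) = \max_{p\in\mathcal{P}(\chi)} E_p(f)$.
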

\begin{proof}
From Lemma \ref{lem_closure},
\begin{equation}
	f_0^+ (g^{-1}(0)) = \max_{\alpha} \left\{ \left. \sum_{A \subset X} \alpha_A f_0(1_A) \right| \sum_{A \subset X} \alpha_A 1_A = g^{-1}(0), \sum_{A \subset X} \alpha_A = 1, \alpha_A \geq 0 \right\}. \label{f0g0}
\end{equation}
Since $g$ is a bijective affine  map, the first constraint on $\alpha$ in \eqref{f0g0} is equivalent to
\begin{equation}
	\sum_{A \subset X} \alpha_A g(1_A) = 0. \label{alpha_cond}
\end{equation}
Here, for $A \subset X$, each entry of $g(1_A) \in \mathbb{R}^d$ is
\begin{equation*}
	g(1_A)_k = \begin{cases} a^{(k)}_2 & (x_k \in A), \\ a^{(k)}_1 & (x_k \not\in A). \end{cases}
\end{equation*}
Thus, \eqref{alpha_cond} is rewritten as
\begin{equation*}
	\sum_{A \subset X: x_k \in A} \alpha_A a^{(k)}_2 + \sum_{A \subset X: x_k \not\in A} \alpha_A a^{(k)}_1 = 0, \quad (k=1,\cdots,d).
\end{equation*}
Therefore, each $\alpha=(\alpha_A)_A$ satisfying the constraints in \eqref{f0g0} is viewed as a risk neutral measure on $\chi$.
Since $\bar{E}_{\chi} (f)$ is the maximum over all risk neutral measures, we obtain $\bar{E}_{\chi} (f) \geq f_0^+ (g^{-1}(0))$.

Conversely, for each $\widetilde{\chi} = \{ a_{i_0},\cdots,a_{i_d} \} \in \Gamma$, let
\begin{equation*}
	\alpha(A) = \begin{cases} p_j^{\widetilde{\chi}} & (g_0(1_A) = a_{i_j}) \\ 0 & ({\rm otherwise}) \end{cases},
\end{equation*}
where $p^{\widetilde{\chi}}$ is defined as \eqref{chi_prob}.
Then, 
\begin{equation*}
	\sum_{A \subset X} \alpha_A g(1_A) = \sum_{j=0}^d p_j^{\widetilde{\chi}} a_{i_j,k}= 0,
\end{equation*}
\begin{equation*}
	\sum_{A \subset X} \alpha_A = \sum_{j=0}^d p_j^{\widetilde{\chi}} = 1,
\end{equation*}
and
\begin{equation*}
	\sum_{A \subset X} \alpha_A f(A) = \sum_{j=0}^d p_j^{\widetilde{\chi}} f = I(\widetilde{\chi},f).
\end{equation*}
Therefore, from \eqref{f0g0},
\begin{equation*}
	f_0^+ (g^{-1}(0)) \geq I(\widetilde{\chi},f).
\end{equation*}
Since $\widetilde{\chi}$ is arbitrary, by Proposition \ref{prop_dual} we obtain
\begin{equation*}
	f_0^+ (g^{-1}(0)) \geq \max_{\widetilde{\chi} \in \Gamma} I(\widetilde{\chi},f) = \bar{E}_{\chi} (f).
\end{equation*}
\end{proof}

\subsection{Two assets case}
Suppose $d=2$ and $\chi = \{ a^{(1)}_1,a^{(1)}_2 \} \times \{ a^{(2)}_1,a^{(2)}_2 \}$, where $a^{(1)}_1 < a^{(1)}_2$ and $a^{(2)}_1 < a^{(2)}_2$.
%

For a single-round game, the maximization in \eqref{single_upper} involves two candidates of $\widetilde{\chi}$.
One of them ($\widetilde{\chi}_+$) has positive correlation while the other ($\widetilde{\chi}_-$) has negative correlation.
For example, in Figure \ref{LP_2d}, ABD and BCD have positive correlation while ABC and ACD have negative correlation.
Similarly to section 3.3, the payoff function $f: \chi \to \mathbb{R}$ is identified with a set function $f_0=f \circ g_0: \{ 0,1 \}^2 \to \mathbb{R}$. 
If $f_0$ is submodular or supermodular, then the maximizer in \eqref{single_upper} is determined as follows.

\begin{Proposition}\label{prop_2d}
\begin{itemize}
\item If $f_0$ is submodular, then the maximizer in \eqref{single_upper} is the one with negative correlation $\widetilde{\chi}_-$:
\begin{equation*}
	I(\widetilde{\chi}_-,f) = \max_{\widetilde{\chi} \in \Gamma} I(\widetilde{\chi},f).
\end{equation*}

\item If $f_0$ is supermodular, then the maximizer in \eqref{single_upper} is the one with positive correlation $\widetilde{\chi}_+$:
\begin{equation*}
	I(\widetilde{\chi}_+,f) = \max_{\widetilde{\chi} \in \Gamma} I(\widetilde{\chi},f).
\end{equation*}
\end{itemize}
\end{Proposition}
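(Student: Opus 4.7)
The plan is to exploit the fact that for $d=2$ with the product move set $\chi$, the polytope $\mathcal{P}(\chi)$ of risk-neutral measures is only one-dimensional, so by the discussion following Proposition \ref{prop_dual} the upper hedging price $\bar E_\chi(f) = \max_{p \in \mathcal{P}(\chi)} E_p(f)$ reduces to maximizing an affine function over a line segment whose two endpoints are exactly $\widetilde{\chi}_+$ and $\widetilde{\chi}_-$.

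First, I would fix notation by writing $v_{ij} = (a^{(1)}_i, a^{(2)}_j)$ for the four points of $\chi$. Any $p \in \mathcal{P}(\chi)$ has marginals uniquely pinned down by the two risk-neutrality conditions: the probability of $a^{(k)}_2$ must equal $\pi_k = -a^{(k)}_1/(a^{(k)}_2 - a^{(k)}_1)$, which lies in $(0,1)$ because the origin is in the interior of $\conv\chi$. Consequently, every risk-neutral measure has the form $p_\epsilon = q + \epsilon e$, where $q$ is the product of these marginals and $e$ is the signed measure placing mass $+1$ on each of $v_{00}, v_{11}$ and $-1$ on each of $v_{10}, v_{01}$. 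The parameter $\epsilon$ ranges over some interval $[\epsilon_-, \epsilon_+]$ containing $0$ in its interior, and the covariance computation
\begin{equation*}
E_{p_\epsilon}[x_1 x_2] = \epsilon(a^{(1)}_2 - a^{(1)}_1)(a^{(2)}_2 - a^{(2)}_1)
\end{equation*}
identifies $\widetilde{\chi}_+$ with the endpoint $\epsilon = \epsilon_+$ and $\widetilde{\chi}_-$ with $\epsilon = \epsilon_-$.

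Next I would compute
\begin{equation*}
E_{p_\epsilon}(f) = E_q(f) + \epsilon\bigl[f(v_{00}) + f(v_{11}) - f(v_{10}) - f(v_{01})\bigr],
\end{equation*}
which is affine in $\epsilon$. Translating through $g_0$, the bracketed slope equals $f_0(\emptyset) + f_0(X) - f_0(\{x_1\}) - f_0(\{x_2\})$, where $X = \{x_1, x_2\}$. By Definition \ref{def_submo} in its two-element-difference form \eqref{two-element-difference}, this quantity is nonnegative when $f_0$ is supermodular and nonpositive when $f_0$ is submodular.

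Because $E_{p_\epsilon}(f)$ is affine in $\epsilon$ over $[\epsilon_-, \epsilon_+]$, it is maximized at $\epsilon_+$ (hence at $\widetilde{\chi}_+$) whenever the slope is nonnegative and at $\epsilon_-$ (hence at $\widetilde{\chi}_-$) whenever the slope is nonpositive, which yields the two statements. The only subtle step is the sign-matching between $\epsilon$ and the ``positive/negative correlation'' labels; the covariance identity above settles this cleanly, and is the place where I would be most careful to avoid inadvertently swapping $\widetilde{\chi}_+$ and $\widetilde{\chi}_-$. The rest is routine linear-algebra bookkeeping.
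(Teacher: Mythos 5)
Your argument is correct, and it takes a genuinely different route from the paper. The paper proves this proposition by identifying $I(\widetilde{\chi}_+,f)$ with the value of the Lov\'asz extension $f_0^L(g^{-1}(0))$ and then invoking Lemma \ref{lem_lovasz} ($f_0^L=f_0^-$ for submodular $f_0$, $f_0^L=f_0^+$ for supermodular $f_0$) together with the min/max characterization of the closures in Lemma \ref{lem_closure}; the comparison $I(\widetilde{\chi}_+,f)\lessgtr I(\widetilde{\chi}_-,f)$ then drops out because $\Gamma$ has only two elements. You instead bypass the closure/Lov\'asz machinery entirely: you observe that for $d=2$ the risk-neutral polytope $\mathcal{P}(\chi)$ is a segment $\{q+\epsilon e\}$ with endpoints $p^{\widetilde{\chi}_+}$ and $p^{\widetilde{\chi}_-}$, that $E_{p_\epsilon}(f)$ is affine in $\epsilon$ with slope $f_0(\emptyset)+f_0(X)-f_0(\{x_1\})-f_0(\{x_2\})$, and that the sign of this slope is exactly the (super)submodularity inequality for the only nontrivial pair $U=\{x_1\}$, $V=\{x_2\}$; your covariance identity $E_{p_\epsilon}[x_1x_2]=\epsilon(a^{(1)}_2-a^{(1)}_1)(a^{(2)}_2-a^{(2)}_1)$ correctly pins the endpoint $\epsilon_+$ to $\widetilde{\chi}_+$, which is the same sign-matching the paper performs implicitly when it asserts $f_0^L(g^{-1}(0))=I(\widetilde{\chi}_+,f)$. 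Your version is more elementary and self-contained, but it is intrinsically two-dimensional: for $d\geq 3$ the set of risk-neutral measures with fixed marginals has dimension $2^d-d-1>1$, so the endpoint argument breaks down, whereas the paper's Lov\'asz-extension argument is exactly what carries over to Proposition \ref{prop_hd}. (One boundary case you might acknowledge: if the origin lies on the main diagonal of $\conv\chi$, the endpoint $\epsilon_+$ gives a measure with two-point support and the ``positively correlated'' simplex is not unique, but then all such simplexes yield the same value of $I$, so the conclusion is unaffected; the paper glosses over this as well.)
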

\begin{proof}
From the definition of the Lov\'asz extension,
\begin{equation*}
	f_0^L(g^{-1}(0)) = I(\widetilde{\chi}_+, f).
\end{equation*}

When $f_0$ is submodular, we have $f_0^-(g^{-1}(0)) = f_0^L(g^{-1}(0))$ from Proposition \ref{lem_lovasz}. Then, from Lemma \ref{lem_closure},
\begin{equation*}
	f_0^L(g^{-1}(0)) = I(\widetilde{\chi}_+,f) \leq I(\widetilde{\chi}_-,f)
\end{equation*}
and the maximizer in \eqref{single_upper} is $\widetilde{\chi}_-$.

When $f_0$ is supermodular, we have $f_0^+(g^{-1}(0)) = f_0^L(g^{-1}(0))$ from Proposition \ref{lem_lovasz}. Then, from Lemma \ref{lem_closure},
\begin{equation*}
	f_0^L(g^{-1}(0)) = I(\widetilde{\chi}_+,f) \geq I(\widetilde{\chi}_-,f)
\end{equation*}
and  the maximizer in \eqref{single_upper} is $\widetilde{\chi}_+$.
\end{proof}

Now, consider the $N$-round game and assume that the payoff function is a European option $f(\xi)=F(S_N)$, where $\xi=x_1 \cdots x_N$ and $S_N=x_1+\cdots+x_N$.
Recall that $\bar{f} (\xi^n,N-n)$ depends only on $S_n=x_1+\cdots+x_n$ for a European option.
The submodularity or supermodularity of the payoff function is preserved throughout backward induction of \eqref{N_upper} as follows.

\begin{Lemma}\label{lem_preserve}
\begin{itemize}
\item Suppose $F: \mathbb{R}^2 \to \mathbb{R}$ is submodular. 
Then, for $n=N-1,\cdots,0$, the composite function of $\bar{f}(\xi^n \cdot, N-n-1)$ and $g_0$ is submodular for every $\xi^n \in \chi^n$.

\item Suppose $F: \mathbb{R}^2 \to \mathbb{R}$ is supermodular. 
Then, for $n=N-1,\cdots,0$, the composite function of $\bar{f}(\xi^n \cdot, N-n-1)$ and $g_0$ is supermodular for every $\xi^n \in \chi^n$.
\end{itemize}
\end{Lemma}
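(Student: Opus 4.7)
The plan is to strengthen the claim and prove by downward induction on $n$ that the function $s \mapsto \bar{F}_n(s)$, defined as the common value of $\bar{f}(\xi^n, N-n)$ when $S_n = s$ and extended to all $s \in \mathbb{R}^2$ via the same recursion \eqref{N_upper}, is submodular on $\mathbb{R}^2$. The statement in the lemma then follows immediately, since the composite $\bar{f}(\xi^n \cdot, N-n-1) \circ g_0$ is just $\bar{F}_{n+1}(S_n + g_0(\cdot))$, whose submodularity as a set function on $\{0,1\}^2$ is the restriction of the $\mathbb{R}^2$ submodularity inequality for $\bar{F}_{n+1}$ to the four grid points surrounding $S_n$.

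The base case is $n = N$: $\bar{F}_N = F$ is submodular by hypothesis. For the inductive step, suppose $\bar{F}_{n+1}$ is submodular on $\mathbb{R}^2$. Since translation preserves submodularity (using $(u+v)\wedge(w+v)=(u\wedge w)+v$ and similarly for $\vee$), each shifted function $\bar{F}_{n+1}(s+\cdot)$ is submodular, and hence so is its composition with $g_0$ as a set function on $\{0,1\}^2$. Proposition \ref{prop_2d} now tells us that for such a submodular set function the maximum over $\widetilde{\chi} \in \Gamma = \{\widetilde{\chi}_+, \widetilde{\chi}_-\}$ is attained at the negatively correlated simplex $\widetilde{\chi}_-$. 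Crucially, $\widetilde{\chi}_-$ depends only on $\chi$, not on $s$ or on the particular function, so the recursion collapses into a fixed linear operator
\begin{equation*}
	\bar{F}_n(s) = I(\widetilde{\chi}_-, \bar{F}_{n+1}(s + \cdot)) = \sum_{j=0}^{2} p_j^{\widetilde{\chi}_-}\, \bar{F}_{n+1}(s + v_j),
\end{equation*}
where the vertices $v_0, v_1, v_2$ of $\widetilde{\chi}_-$ and the weights $p_j^{\widetilde{\chi}_-} \ge 0$ are all independent of $s$. Since the submodularity inequality is linear in the function, a nonnegative combination of submodular functions is again submodular, and the inductive step is complete.

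The supermodular case is entirely symmetric: by Proposition \ref{prop_2d} the maximizer is the fixed simplex $\widetilde{\chi}_+$, and the identical linearity argument yields supermodularity of $\bar{F}_n$. The one place to be careful is that the $\max$ operator does not in general preserve sub- or supermodularity; what makes the argument go through is precisely that Proposition \ref{prop_2d} collapses the $\max$ into a single linear combination with the same maximizing simplex at every shift $s$, so the backward induction is driven by a single affine operator on functions rather than a nonlinear envelope.
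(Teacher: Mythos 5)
Your proof is correct and follows essentially the same route as the paper's: downward induction in which Proposition \ref{prop_2d} identifies the fixed maximizer $\widetilde{\chi}_-$ (resp.\ $\widetilde{\chi}_+$), collapsing the recursion \eqref{N_upper} into a fixed convex combination of shifted copies of the previous-stage function, whose submodularity (resp.\ supermodularity) is then preserved by linearity. The only cosmetic difference is that you carry the induction hypothesis as submodularity of $\bar{F}_{n+1}$ on all of $\mathbb{R}^2$ rather than, as the paper does, as submodularity of the set function $\bar{f}(\xi^n\cdot,N-n-1)\circ g_0$ at each reachable partial sum; both formulations support the identical inductive step.
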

\begin{proof}
We prove the first statement by induction on $n$.
The proof of the second statement is similar.

The case $n=N-1$ is trivial from $\bar{f}(\xi^N, 0) = F(S_N)$, $\bar{f} (\xi^{N-1} \cdot, 0) = f(\xi^{N-1} \cdot) = F(S_{N-1}+\cdot)$ and Definition \ref{def_conti}.

Now, assume that the composite function of $\bar{f}(\xi^n \cdot, N-n-1)$ and $g_0$ for every $\xi^n \in \chi^n$ is submodular as a set function.
Since $\bar{f} (\xi^n,N-n)$ depends only on $S_n=x_1+\cdots+x_n$, we write $\bar{f}(\xi^{n+1}, N-n-1) = \bar{F} (S_{n+1}, N-n-1)$.
Then, for every $\xi^{n-1} \in \chi^{n-1}$ and $A \subset X$, from \eqref{N_upper} and Theorem \ref{th_2d},
\begin{align*}
	\bar{f}(\xi^{n-1} g_0(A), N-(n-1)-1) &= \bar{f}(\xi^{n-1} g_0(A), N-n) \\
	&= \max_{\widetilde{\chi} \in \Gamma} I (\widetilde{\chi}, \bar{f}(\xi^{n-1} g_0(A) \cdot, N-n-1)) \\
	&= I (\widetilde{\chi}_-, \bar{f}(\xi^{n-1} g_0(A) \cdot, N-n-1)) \\
	&= \sum_{j=0}^d p_j^{\widetilde{\chi}_-} \bar{f}(\xi^{n-1} g_0(A) a_{i_j}, N-n-1) \\
	&= \sum_{j=0}^d p_j^{\widetilde{\chi}_-} \bar{F}(S_{n-1}+g_0(A)+a_{i_j}, N-n-1),
\end{align*}
where $\widetilde{\chi}_- = \{ a_{i_0},\cdots,a_{i_d} \}$.
Therefore,
\begin{align*}
	\bar{f}(\xi^{n-1} g_0(A), N-(n-1)-1) &= \sum_{j=0}^d p_j^{\widetilde{\chi}_-} \bar{F}(S_{n-1}+a_{i_j}+g_0(A), N-n-1).
\end{align*}
For every $j$, the function $A \mapsto \bar{F}(S_{n-1}+a_{i_j}+g_0(A), N-n-1)$ is the composition of $\bar{f}(\xi^{n-1} a_{i_j} \cdot, N-n-1)$ and $g_0$ and therefore it is submodular from assumption.
Then, since the submodularity is preserved under a convex combination, the composite function of $\bar{f}(\xi^{n-1} \cdot, N-(n-1)-1)$ and $g_0$ is also submodular.
\end{proof}

Combining Lemma \ref{lem_preserve} with Proposition \ref{prop_2d}, each maximization in \eqref{N_upper} is solved in closed form as follows.

\begin{Theorem}\label{th_2d}
\begin{itemize}
\item If $F: \mathbb{R}^2 \to \mathbb{R}$ is submodular, 
  then the maximizer in each step \eqref{N_upper} is the one with negative correlation $\widetilde{\chi}_-$.

\item If $F: \mathbb{R}^2 \to \mathbb{R}$ is supermodular, 
  then the maximizer in each step \eqref{N_upper} is the one with positive correlation $\widetilde{\chi}_+$.
\end{itemize}
\end{Theorem}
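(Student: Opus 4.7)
The plan is to carry out a joint backward induction on $n$ from $N-1$ down to $0$, establishing Theorem \ref{th_2d} together with the submodularity statement of Lemma \ref{lem_preserve} simultaneously, since an inspection of the proof of Lemma \ref{lem_preserve} shows that it quotes Theorem \ref{th_2d} inside its inductive step. I will treat only the submodular case; the supermodular one will be obtained by swapping $\widetilde{\chi}_-$ for $\widetilde{\chi}_+$ throughout.

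For the base $n=N-1$, I will note that $\bar{f}(\xi^{N-1}\cdot,0)=F(S_{N-1}+\cdot)$ is submodular on $\chi$ because submodularity of $F$ on $\mathbb{R}^2$ is preserved under the translation $\cdot\mapsto S_{N-1}+\cdot$. Its composition with $g_0$ is therefore a submodular set function on $\{0,1\}^2$, and Proposition \ref{prop_2d} will give directly that the maximizer of $I(\widetilde{\chi},\bar{f}(\xi^{N-1}\cdot,0))$ is $\widetilde{\chi}_-$.

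For the inductive step, I would assume that for every round $m$ with $n<m\le N-1$ and every partial path $\xi^m$, the composite $\bar{f}(\xi^m\cdot,N-m-1)\circ g_0$ is a submodular set function and the maximizer at the backward step from $\xi^m$ is $\widetilde{\chi}_-$. Fixing $\xi^n$ and $A\subset X$, I will use \eqref{N_upper} at $\xi^{n+1}=\xi^n g_0(A)$ together with the inductive hypothesis at round $n+1$ to write
\begin{equation*}
\bar{f}(\xi^n g_0(A),N-n-1)=\sum_{j=0}^{d}p_j^{\widetilde{\chi}_-}\bar{F}(S_n+g_0(A)+a_{i_j},N-n-2),
\end{equation*}
where the weights $p_j^{\widetilde{\chi}_-}$ depend only on the fixed move set $\chi$, not on $A$. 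For each fixed $j$, the map $A\mapsto \bar{F}(S_n+a_{i_j}+g_0(A),N-n-2)$ coincides with $A\mapsto \bar{f}(\xi^n a_{i_j}g_0(A),N-n-2)$, which is submodular by the inductive hypothesis applied to the path $\xi^n a_{i_j}$. Since a convex combination of submodular set functions is submodular, $\bar{f}(\xi^n\cdot,N-n-1)\circ g_0$ is submodular, and Proposition \ref{prop_2d} will then yield that the maximizer at the backward step from $\xi^n$ is $\widetilde{\chi}_-$, closing the induction.

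The main subtlety I expect is not in any calculation but in handling the mutual dependence of Theorem \ref{th_2d} and Lemma \ref{lem_preserve}: one must fold both into a single downward induction rather than invoke either result as a black box while proving the other. Once this is set up, the substantive content will reduce to the observation that the weights $p_j^{\widetilde{\chi}_-}$ depend only on $\chi$, so that submodularity genuinely propagates across the backward step and the per-round choice of maximizer is forced by Proposition \ref{prop_2d}.
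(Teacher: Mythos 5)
Your proof is correct and follows essentially the same route as the paper: Proposition \ref{prop_2d} identifies the maximizer whenever the current value function composed with $g_0$ is submodular, and submodularity propagates backward because the weights $p_j^{\widetilde{\chi}_-}$ do not depend on $A$ and submodularity is preserved under translation and convex combination. Your only departure is organizational --- folding Theorem \ref{th_2d} and Lemma \ref{lem_preserve} into a single downward induction --- which is a sound way to dispose of the apparent circularity in the paper, whose proof of Lemma \ref{lem_preserve} cites Theorem \ref{th_2d} inside its own inductive step.
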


\subsection{The case of more assets}
Suppose $d \geq 3$ and $\chi = \{ a^{(1)}_1,a^{(1)}_2 \} \times \cdots \times \{ a^{(d)}_1,a^{(d)}_2 \}$, where $a^{(1)}_1 < a^{(1)}_2, \cdots, a^{(d)}_1 < a^{(d)}_2$.

Consider a single-round game.
Similarly to section 3.3, the payoff function $f: \chi \to \mathbb{R}$ is identified with a set function $f_0=f \circ g_0: \{ 0,1 \}^d \to \mathbb{R}$. 
If $f_0$ is supermodular, then as in the second part of 
Proposition \ref{prop_2d}, the maximizer in \eqref{single_upper} is determined as follows.

\begin{Proposition}\label{prop_hd}
If $f_0$ is supermodular, then the maximizer in \eqref{single_upper} is
\begin{equation}
  \widetilde{\chi}_L = \{ g (1_{A_0 (s)}), g(1_{A_1 (s)}), \cdots, g(1_{A_d (s)}) \},
  \ s=g^{-1}(0), \label{chiL}
\end{equation}
where $\emptyset=A_0(s) \subset A_1(s) \subset \cdots \subset A_d(s) = X$ are given by
\begin{equation*}
	s = \sum_{j=0}^d p_j 1_{A_j(s)}, \quad \sum_{j=0}^d p_j = 1, \quad |A_j(s)| = j.
\end{equation*}
\end{Proposition}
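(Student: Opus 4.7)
The plan is to chain together the earlier results of Section 3: Proposition \ref{lem_cc} identifies $\bar E_\chi(f)$ with the concave closure $f_0^+(g^{-1}(0))$, and Lemma \ref{lem_lovasz} says that for a supermodular set function the concave closure coincides with the Lovász extension. So it suffices to exhibit a simplex $\widetilde{\chi}_L\in\Gamma$ for which $I(\widetilde{\chi}_L,f)$ equals $f_0^L(g^{-1}(0))$; the candidate is exactly the one named in \eqref{chiL}.

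First I would check that $\widetilde{\chi}_L$ is actually an element of $\Gamma$. Writing $s=g^{-1}(0)$, the chain $\emptyset = A_0(s)\subsetneq A_1(s)\subsetneq\cdots\subsetneq A_d(s)=X$ gives vertices $1_{A_0(s)},\dots,1_{A_d(s)}$ of the unit cube whose successive differences are the distinct standard basis vectors $e_{i_1},\dots,e_{i_d}$; these $d+1$ points are affinely independent, and applying the affine bijection $g$ preserves affine independence, so $\dim\conv\widetilde{\chi}_L=d$ and $|\widetilde{\chi}_L|=d+1$. The identity $s=\sum_j p_j(s)\,1_{A_j(s)}$ with $p_j(s)\ge 0$, $\sum_j p_j(s)=1$, pushed through $g$, yields $0=g(s)=\sum_j p_j(s)\,g(1_{A_j(s)})$, so $0\in\conv\widetilde{\chi}_L$.

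Next I would identify the probability weights. Writing $\widetilde{\chi}_L=\{a_{i_0},\dots,a_{i_d}\}$ with $a_{i_j}=g(1_{A_j(s)})$, the system \eqref{chi_prob} that defines $p^{\widetilde{\chi}_L}$ is, because $g$ is an affine bijection and its vertices are affinely independent, uniquely solved by $p_j^{\widetilde{\chi}_L}=p_j(s)$. Consequently
\begin{equation*}
I(\widetilde{\chi}_L,f)=\sum_{j=0}^d p_j(s)\,f(g(1_{A_j(s)}))=\sum_{j=0}^d p_j(s)\,f_0(A_j(s))=f_0^L(s),
\end{equation*}
by the definition of the Lovász extension. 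Since $f_0$ is supermodular, Lemma \ref{lem_lovasz} gives $f_0^L(s)=f_0^+(s)$, and Proposition \ref{lem_cc} gives $f_0^+(s)=\bar E_\chi(f)$. Combining these equalities with the trivial upper bound $I(\widetilde{\chi}_L,f)\le\max_{\widetilde{\chi}\in\Gamma}I(\widetilde{\chi},f)=\bar E_\chi(f)$, the maximum is attained at $\widetilde{\chi}_L$, as claimed.

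The main subtle point is the bookkeeping identification $p^{\widetilde{\chi}_L}=p(s)$: one must make sure the probability weights appearing in the definition of the Lovász extension at the specific point $s=g^{-1}(0)$ are exactly the unique risk-neutral weights determined by \eqref{chi_prob} on the image simplex, so that $I(\widetilde{\chi}_L,f)$ really equals $f_0^L(g^{-1}(0))$ and not merely some other affine expansion in terms of the same vertices. Once the affine bijectivity of $g$ is invoked, this uniqueness is immediate, but it is the only place where the specific choice of $s$ and the definition of $g$ interact nontrivially with the combinatorial definition of $f_0^L$.
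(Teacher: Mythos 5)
Your proof is correct and follows essentially the same route as the paper, which (via the pointer to the second part of Proposition \ref{prop_2d}) chains $I(\widetilde{\chi}_L,f)=f_0^L(g^{-1}(0))=f_0^+(g^{-1}(0))=\bar E_\chi(f)$ using the definition of the Lov\'asz extension, Lemma \ref{lem_lovasz}, and Proposition \ref{lem_cc}. Your write-up additionally verifies the details the paper leaves implicit, namely that $\widetilde{\chi}_L\in\Gamma$ and that the Lov\'asz weights at $s=g^{-1}(0)$ coincide with the unique risk-neutral weights $p^{\widetilde{\chi}_L}$ from \eqref{chi_prob}.
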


Now, consider the $N$-round game and assume that the payoff function is a European option $f(\xi)=F(S_N)$, where $\xi=x_1 \cdots x_N$ and $S_N=x_1+\cdots+x_N$.
Lemma \ref{lem_preserve} is extended to general $d$ by the same proof as follows.

\begin{Lemma}\label{lem_preserve2}
Suppose $F: \mathbb{R}^d \to \mathbb{R}$ is supermodular. 
Then, for $n=N-1,\cdots,0$, the composite function of $\bar{f}(\xi^n \cdot, N-n-1)$ and $g_0$ is supermodular as a set function for every $\xi^n \in \chi^n$.
\end{Lemma}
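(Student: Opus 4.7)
The proof proceeds by backward induction on $n$, following the same template as Lemma \ref{lem_preserve} but with Proposition \ref{prop_hd} replacing Theorem \ref{th_2d} as the tool that identifies the maximizer in \eqref{N_upper}. The feature of Proposition \ref{prop_hd} that makes the induction go through is that the maximizing simplex $\widetilde{\chi}_L$ in \eqref{chiL} depends only on the Lov\'asz decomposition of $s=g^{-1}(0)$; hence the same $\widetilde{\chi}_L$ is selected at every recursive step, independently of the set $A$, the history $\xi^{n-1}$, and the particular (supermodular) payoff at hand.

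For the base case $n=N-1$, the claim reduces to showing that $A\mapsto F(S_{N-1}+g_0(A))$ is supermodular on $2^X$. This follows from supermodularity of $F$ on $\mathbb{R}^d$ together with the lattice identities $g_0(A\cap B)=g_0(A)\land g_0(B)$ and $g_0(A\cup B)=g_0(A)\lor g_0(B)$, which hold because $g$ acts coordinatewise and each coordinate is a monotone affine map of the corresponding indicator entry.

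For the inductive step, fix $\xi^{n-1}\in\chi^{n-1}$ and $A\subset X$. Applying the inductive hypothesis at $\xi^n=\xi^{n-1}g_0(A)$, the composite of $\bar{f}(\xi^{n-1}g_0(A)\cdot,N-n-1)$ with $g_0$ is supermodular, so Proposition \ref{prop_hd} identifies the maximizer in \eqref{N_upper} as $\widetilde{\chi}_L=\{a_{i_0},\ldots,a_{i_d}\}$. Because this simplex is independent of $A$, and because $\bar{f}$ depends on the path only through the partial sum $S_m$ (European-option property), we obtain
\begin{equation*}
\bar{f}(\xi^{n-1}g_0(A),N-n)=\sum_{j=0}^d p_j^{\widetilde{\chi}_L}\,\bar{f}(\xi^{n-1}a_{i_j}g_0(A),N-n-1).
\end{equation*}
For each $j$, the set function $A\mapsto \bar{f}(\xi^{n-1}a_{i_j}g_0(A),N-n-1)$ is the composite of $\bar{f}(\xi^{n-1}a_{i_j}\cdot,N-n-1)$ with $g_0$, and is therefore supermodular by the inductive hypothesis applied to $\xi^n=\xi^{n-1}a_{i_j}\in\chi^n$. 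Since nonnegative linear combinations of supermodular set functions are supermodular, the left-hand side is supermodular in $A$, closing the induction.

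The main point to take care of is the threading of the inductive hypothesis: it must be invoked once at $\xi^n=\xi^{n-1}g_0(A)$ to validate the hypothesis of Proposition \ref{prop_hd}, and once at each $\xi^n=\xi^{n-1}a_{i_j}$ to obtain supermodularity of the individual summands. The swap of $g_0(A)$ and $a_{i_j}$ inside the argument of $\bar{f}$ is where the European-option property is essential; without it, the summand at position $A$ could not be recognized as the composition of a supermodular value function with $g_0$, and the induction would fail to close.
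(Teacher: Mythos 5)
Your proposal is correct and is essentially the paper's own argument: the paper states that Lemma \ref{lem_preserve2} follows "by the same proof" as Lemma \ref{lem_preserve}, i.e., backward induction in which the fixed maximizer ($\widetilde{\chi}_L$ here, $\widetilde{\chi}_-$ there, fixed because it depends only on $g^{-1}(0)$ and not on the payoff or the history) lets one write $\bar f(\xi^{n-1}g_0(A),N-n)$ as a convex combination of supermodular set functions of $A$, using the European-option property to swap $g_0(A)$ and $a_{i_j}$ in the argument. Your explicit treatment of the base case via the lattice identities for the coordinatewise increasing map $g_0$, and your remark on where the inductive hypothesis must be invoked, match the paper's reasoning.
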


Combining Lemma \ref{lem_preserve2} and Proposition \ref{prop_hd}, each maximization in \eqref{N_upper} is solved in closed form as follows.

\begin{Theorem}\label{th_hd}
  If $F: \mathbb{R}^d \to \mathbb{R}$ is supermodular, 
  then the maximizer in each step \eqref{N_upper} is $\widetilde{\chi}_L$ in \eqref{chiL}.
\end{Theorem}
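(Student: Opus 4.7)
My plan is to prove Theorem \ref{th_hd} by direct combination of Lemma \ref{lem_preserve2} and Proposition \ref{prop_hd}, mirroring how Theorem \ref{th_2d} was obtained from Lemma \ref{lem_preserve} and Proposition \ref{prop_2d} in the two-asset case. The backward induction \eqref{N_upper} expresses each step as a single-round maximization of $I(\widetilde{\chi},\cdot)$ over $\Gamma$, with the effective payoff at level $n$ being the function $a_i \mapsto \bar{f}(\xi^n a_i, N-n-1)$ on $\chi$. So the theorem follows once we verify that (i) this effective payoff is always supermodular when identified with a set function on $\{0,1\}^d$, and (ii) for any such supermodular payoff the single-round maximizer is $\widetilde{\chi}_L$.

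For (i), I would invoke Lemma \ref{lem_preserve2}: under the hypothesis that $F$ is supermodular on $\mathbb{R}^d$, the composition of $\bar{f}(\xi^n \cdot, N-n-1)$ with $g_0$ is a supermodular set function for every partial path $\xi^n$. This is exactly the preservation-of-supermodularity fact needed, and its proof is stated to go through by the same induction argument as in Lemma \ref{lem_preserve} — observing that each induction step writes the new value function as a nonnegative linear combination of translates of a supermodular function, and that supermodularity is closed under translation and nonnegative combination. For (ii), I would cite Proposition \ref{prop_hd}, whose content is precisely the identification of the single-round maximizer with $\widetilde{\chi}_L$ in \eqref{chiL} under supermodularity. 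Together, these immediately give that at each level $n$ of the backward induction, the maximization \eqref{N_upper} is attained at $\widetilde{\chi}_L$.

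The only nontrivial verification hidden in this plan is that Proposition \ref{prop_hd} really does single out $\widetilde{\chi}_L$ among the potentially $2^{d-2}$ or more candidate simplexes. That step relies on the chain Proposition \ref{lem_cc} (upper hedging price equals concave closure evaluated at $g^{-1}(0)$) together with Lemma \ref{lem_lovasz} (the concave closure of a supermodular function coincides with its Lovász extension) and the definition of the Lovász extension, which exhibits the expansion $s = \sum_j p_j 1_{A_j(s)}$ as the maximizing convex combination in \eqref{concave_closure}. Translating this expansion back through the affine bijection $g$ yields precisely the simplex $\widetilde{\chi}_L$, and the probability vector $p^{\widetilde{\chi}_L}$ matches the Lovász weights $p_j(s)$.

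Thus the proof reduces to one sentence once the preparatory lemmas are in place; the conceptual content — and the main obstacle, were it not already handled — lies in Proposition \ref{prop_hd}, since \emph{among} the many risk-neutral vertices of $\mathcal{P}(\chi)$ one must use the Lovász/concave-closure identification to single out $\widetilde{\chi}_L$ as the unique maximizer. Given that these results are available, I would simply state: by Lemma \ref{lem_preserve2} the relevant set function is supermodular at every induction step, and by Proposition \ref{prop_hd} the maximizer in \eqref{single_upper} with that supermodular payoff is $\widetilde{\chi}_L$, completing the proof.
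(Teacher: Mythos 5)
Your proof is correct and follows exactly the paper's route: the paper states Theorem \ref{th_hd} with no separate proof, presenting it as an immediate consequence of Lemma \ref{lem_preserve2} (preservation of supermodularity through the backward induction) and Proposition \ref{prop_hd} (the single-round maximizer for a supermodular set function is $\widetilde{\chi}_L$), which is precisely your argument. The only minor caveat is that neither the theorem nor Proposition \ref{prop_hd} claims $\widetilde{\chi}_L$ is the \emph{unique} maximizer --- only that it attains the maximum --- so your passing assertion of uniqueness is slightly stronger than what is established or needed.
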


Thus, when $F$ is supermodular, the upper hedging price can be calculated efficiently.
On the other hand, when $F$ is submodular, the maximization in \eqref{single_upper} cannot be solved in closed form \citep{Calinescu}. 
Since the number of possible $\widetilde{\chi}$ grows at least as fast as $2^{d-2}$ (see Appendix B), 
the calculation of the upper hedging price of a European option with a submodular payoff function becomes intractable when $d$ is large.

\subsection{Options on the maximum or the minimum}
A typical and realistic example of multivariate contingent claims is the option on the maximum or the minimum of several assets \citep{Stulz,Johnson}.
The option on the maximum is a European option with the payoff function
\begin{equation*}
	f_M (\xi) = F_M (S_N) = (\max_i (S_N)_i-K)_{+},
\end{equation*}
where $x_{+} = \max (x,0)$.
Similarly, the option on the minimum is a European option with the payoff function
\begin{equation*}
	f_m (\xi) = F_m (S_N) = (\min_i (S_N)_i-K)_{+}.
\end{equation*}

\begin{Proposition}\label{prop_opt}
\begin{itemize}
\item The function $F_M$ is submodular. 

\item The function $F_m$ is supermodular. 
\end{itemize}
\end{Proposition}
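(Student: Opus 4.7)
The plan is to verify both claims directly from Definition \ref{def_conti}, since $F_M$ and $F_m$ are not smooth and Lemma \ref{lem_mod} does not apply. The proof rests on two elementary observations: (i) for any real $a,b$ and any univariate function $\phi$,
\[
\phi(\min(a,b)) + \phi(\max(a,b)) = \phi(a) + \phi(b),
\]
because $\{\min(a,b),\max(a,b)\} = \{a,b\}$ as multisets; and (ii) $x \mapsto (x-K)_+$ is nondecreasing.

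For the submodularity of $F_M$, I would fix $u,v \in \mathbb{R}^d$ and set $a = \max_i u_i$, $b = \max_i v_i$. Because $\max$ distributes over the coordinatewise $\vee$, one gets the identity $\max_i (u \vee v)_i = \max(a,b)$. The coordinatewise $\wedge$ only yields an inequality, $\max_i (u \wedge v)_i \leq \min(a,b)$, since $\min(u_i,v_i) \leq u_i \leq a$ and $\min(u_i,v_i) \leq v_i \leq b$ for every $i$. Applying (ii) to push this through $(\cdot - K)_+$ and then (i) with $\phi(x) = (x-K)_+$,
\[
F_M(u \wedge v) + F_M(u \vee v) \leq (\min(a,b)-K)_+ + (\max(a,b)-K)_+ = F_M(u) + F_M(v),
\]
which is the desired submodular inequality.

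The supermodularity of $F_m$ is the exact dual argument: set $a = \min_i u_i$, $b = \min_i v_i$, observe the identity $\min_i (u \wedge v)_i = \min(a,b)$ and the inequality $\min_i (u \vee v)_i \geq \max(a,b)$ (since $\max(u_i,v_i) \geq a$ and $\geq b$ for each $i$), then apply (ii) and (i) with the inequality flipped to conclude $F_m(u \wedge v) + F_m(u \vee v) \geq F_m(u) + F_m(v)$.

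There is no serious obstacle here; the only subtlety worth flagging is that $\max$ does not commute with $\wedge$ (nor $\min$ with $\vee$), so exactly one of the two terms on the left is controlled by an inequality rather than an identity. It is precisely this asymmetry that yields submodularity for $F_M$ and supermodularity for $F_m$ rather than modularity.
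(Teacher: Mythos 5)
Your proof is correct and follows essentially the same route as the paper's: both rest on the identity $\max_i(u\vee v)_i=\max(\max_i u_i,\max_i v_i)$, the inequality $\max_i(u\wedge v)_i\le\min(\max_i u_i,\max_i v_i)$, and the monotonicity of $x\mapsto(x-K)_+$ (and the dual facts for the minimum). The only cosmetic difference is that you argue symmetrically via the multiset identity for $\phi(\min(a,b))+\phi(\max(a,b))$, whereas the paper orders the two points by a without-loss-of-generality assumption.
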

\begin{proof}
Assume $F_M(s) \leq F_M(t)$ without loss of generality. 
Then $F_M(s \lor t) = F_M(t)$ since $\max_i (s \lor t)_i = \max_i t_i$.
Also, $F_M(s \land t) \leq F_M(s)$ since $\max_i (s \land t)_i \leq \max_i s_i$.
Therefore, we obtain $F_M(s \land t) + F_M(s \lor t) \leq F_M(s) + F_M(t)$.

Assume $F_m(s) \leq F_m(t)$ without loss of generality. 
Then $F_m(s \land t) = F_m(s)$ since $\min_i (s \land t)_i = \min_i s_i$.
Also, $F_m(s \lor t) \geq F_m(t)$ since $\min_i (s \lor t)_i \geq \min_i s_i$.
Therefore, we obtain $F_m(s \land t) + F_m(s \lor t) \geq F_m(s) + F_m(t)$.
\end{proof}

Combining Proposition \ref{prop_opt} with Theorem \ref{th_2d} and Theorem \ref{th_hd}, we obtain the following.

\begin{Corollary}\label{cor_opt}
Suppose $d=2$ and $\chi = \{ a^{(1)}_1,a^{(1)}_2 \} \times \{ a^{(2)}_1,a^{(2)}_2 \}$.

\begin{itemize}
\item For the option on the maximum, the maximizer in each step \eqref{N_upper} is the one with negative correlation $\widetilde{\chi}_-$.

\item For the option on the minimum, the maximizer in each step \eqref{N_upper} is the one with positive correlation $\widetilde{\chi}_+$.
\end{itemize}
\end{Corollary}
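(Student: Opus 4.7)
The plan is to obtain the corollary as an immediate combination of Proposition \ref{prop_opt} and Theorem \ref{th_2d}, since both the modular character of the payoff and the closed-form description of the maximizer have already been established. Concretely, the two ingredients are: (i) $F_M$ is submodular and $F_m$ is supermodular on $\mathbb{R}^d$ (hence on $\mathbb{R}^2$) by Proposition \ref{prop_opt}; (ii) for $d=2$, submodularity of the per-step payoff forces the maximizer in \eqref{N_upper} to be $\widetilde{\chi}_-$, while supermodularity forces it to be $\widetilde{\chi}_+$, by Theorem \ref{th_2d}.

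For the option on the maximum, I would first invoke Proposition \ref{prop_opt} to record that $F_M:\mathbb{R}^2\to\mathbb{R}$ is submodular. Applying Theorem \ref{th_2d} to $F=F_M$ then yields that at every backward-induction step the maximizer in \eqref{N_upper} is $\widetilde{\chi}_-$. Symmetrically, for the option on the minimum, Proposition \ref{prop_opt} gives supermodularity of $F_m$, and Theorem \ref{th_2d} then identifies the maximizer as $\widetilde{\chi}_+$ at every step.

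Because Theorem \ref{th_2d} already bakes in the induction step via Lemma \ref{lem_preserve} (which guarantees preservation of sub/supermodularity through the backward recursion), no additional induction is needed here; the only thing to verify is that the hypotheses of Theorem \ref{th_2d} apply to $F_M$ and $F_m$, which is exactly Proposition \ref{prop_opt}. There is no real obstacle: the work has been shifted to the preceding structural results, and the corollary is a two-line application of them.
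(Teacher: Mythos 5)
Your proposal is correct and matches the paper's own derivation exactly: the corollary is stated there as an immediate consequence of combining Proposition \ref{prop_opt} (submodularity of $F_M$, supermodularity of $F_m$) with Theorem \ref{th_2d}. Nothing further is needed.
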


\begin{Corollary}
Suppose $d \geq 3$ and $\chi = \{ a^{(1)}_1,a^{(1)}_2 \} \times \cdots \times \{ a^{(d)}_1,a^{(d)}_2 \}$.
For the option on the minimum, the maximizer in each step \eqref{N_upper} is $\widetilde{\chi}_L$ in \eqref{chiL}.
\end{Corollary}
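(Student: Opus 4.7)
The statement is essentially a direct corollary, so the plan is simply to chain together the supermodularity result for $F_m$, the preservation lemma, and Theorem \ref{th_hd}. First I would observe that the option on the minimum is a European option with payoff $F_m(s) = (\min_i s_i - K)_+$, and Proposition \ref{prop_opt} already establishes that $F_m : \mathbb{R}^d \to \mathbb{R}$ is supermodular on all of $\mathbb{R}^d$, without any restriction on $d$.

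Next, since the payoff is a European option with a supermodular $F$, Lemma \ref{lem_preserve2} applies for arbitrary $d \geq 3$: the composite set function $\bar{f}(\xi^n \cdot, N-n-1) \circ g_0 : \{0,1\}^d \to \mathbb{R}$ is supermodular for every partial path $\xi^n \in \chi^n$ and every $n = N-1, \dots, 0$. Thus at every backward induction step the relevant set function entering \eqref{N_upper} satisfies the hypothesis of Theorem \ref{th_hd}.

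Finally, Theorem \ref{th_hd} identifies the maximizer of \eqref{single_upper} for a supermodular set function as $\widetilde{\chi}_L$ defined in \eqref{chiL}. Applying this at each step of the recursion \eqref{N_upper} immediately yields the claim. There is no real obstacle here; the only thing to check carefully is that the chain of invocations is licit, namely that Lemma \ref{lem_preserve2} indeed outputs a supermodular function in exactly the form required by Theorem \ref{th_hd}, which it does since both statements refer to supermodularity of the composite with $g_0$ viewed as a set function on $\{0,1\}^d$. Hence the proof is a one-line combination of Proposition \ref{prop_opt}, Lemma \ref{lem_preserve2}, and Theorem \ref{th_hd}, and no new computation is needed.
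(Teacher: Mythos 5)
Your proposal is correct and matches the paper's (implicit) argument exactly: the paper derives this corollary by combining Proposition \ref{prop_opt} with Theorem \ref{th_hd}, whose own proof already packages Lemma \ref{lem_preserve2} and Proposition \ref{prop_hd}. Your separate invocation of Lemma \ref{lem_preserve2} is harmless redundancy, since Theorem \ref{th_hd} already subsumes it.
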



\section{Limiting behavior of upper hedging price of a European option}
\label{sec:limiting}
In this section, we show that the upper hedging price of a European option converges to the solution of the Black-Scholes-Barenblatt equation as the number of game rounds goes to infinity.
We also show that, when the payoff function is submodular or supermodular, 
the Black-Scholes-Barenblatt equation reduces to the linear Black-Scholes equation and it is solved in closed form.

\subsection{Derivation of the Black-Scholes-Barenblatt equation}
Consider an $N$-round multinomial game with a European option
\begin{equation}
	f (\xi^N) = F \left( \frac{S_N}{\sqrt{N}} \right), \quad S_N = x_1 + \cdots + x_N. \label{payoff}
\end{equation}

Recall $\Gamma=\{ \widetilde{\chi} \subset \chi \mid |\widetilde{\chi}| = d+1, \ 0 \in \conv \widetilde{\chi}, \ \dim \conv \widetilde{\chi}=d \}$.
For each $\widetilde{\chi} = \{ a_{l_0},\cdots,a_{l_d} \} \in \Gamma$, 
the risk neutral measure $p^{\widetilde{\chi}}=(p^{\widetilde{\chi}}_0,\cdots,p^{\widetilde{\chi}}_d)$ was defined as the solution of the linear equations \eqref{chi_prob}.
Let $\Sigma(\widetilde{\chi}) \in \mathbb{R}^{d \times d}$ be the covariance matrix of $p^{\widetilde{\chi}}$:
\begin{equation*}
	\left( \Sigma (\widetilde{\chi}) \right)_{ij} = \sum_{k=0}^d p^{\widetilde{\chi}}_k a_{l_k,i} a_{l_k,j} \quad (i,j=1,\cdots,d).
\end{equation*}
For a twice continuously differentiable $F$ 
we denote its Hessian matrix $\nabla_s^2 F (s) \in \mathbb{R}^{d \times d}$ as
\begin{equation*}
	\left(\nabla_s^2 F (s) \right)_{i j} = \frac{\partial^2}{\partial s_i \partial s_j} F (s) \quad (i,j=1,\cdots,d)
\end{equation*}

Then, by extending Theorem 4.1 of \cite{Nakajima}, we obtain the following.

\begin{Theorem}\label{th_BSB}
  Consider an $N$-round multinomial game with a European option \eqref{payoff}, where $F$ has a compact support and twice continuously differentiable. Assume  $0\not\in \chi$.

\begin{itemize}
\item The limit of the upper hedging price of a European option \eqref{payoff} is
\begin{equation*}
	\lim_{N \to \infty} \bar{E}_{\chi} (f) = \bar{u} (0,1),
\end{equation*}
where $\bar{u}: \mathbb{R}^d \times [0,1] \to \mathbb{R}$ is the solution of the partial differential equation
\begin{equation}
	\frac{\partial}{\partial t} \bar{u} (s,t) = \frac{1}{2} \max_{\widetilde{\chi} \in \Gamma} {\rm Tr} \left( \Sigma(\widetilde{\chi}) \cdot \nabla_s^2 \bar{u} (s,t) \right), \label{BSB}
\end{equation}
with the initial condition $\bar{u} (s,0) = F (s)$.

\item The limit of the lower hedging price of a European option \eqref{payoff} is
\begin{equation*}
	\lim_{N \to \infty} \underline{E}_{\chi} (f) = \underline{u} (0,1),
\end{equation*}
where $\underline{u}: \mathbb{R}^d \times [0,1] \to \mathbb{R}$ is the solution of the partial differential equation
\begin{equation}
	\frac{\partial}{\partial t} \underline{u} (s,t) = \frac{1}{2} \min_{\widetilde{\chi} \in \Gamma} {\rm Tr} \left( \Sigma(\widetilde{\chi}) \cdot \nabla_s^2 \underline{u} (s,t) \right), \label{BSB2}
\end{equation}
with the initial condition $\underline{u} (s,0) = F (s)$.
\end{itemize}
\end{Theorem}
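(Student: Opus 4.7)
The plan is to lift the discrete backward induction to a continuum value function and identify it as a consistent finite-difference scheme for \eqref{BSB}. Because the payoff is European, $\bar{f}(\xi^n, N-n)$ depends on $\xi^n$ only through $S_n$, so I may define a rescaled value function $\bar{u}_N(s,t)$ on the space-time lattice by
\[
	\bar{u}_N\!\left(\frac{S_n}{\sqrt{N}},\; \frac{N-n}{N}\right) = \bar{f}(\xi^n, N-n).
\]
Then $\bar{u}_N(s,0)=F(s)$ and $\bar{u}_N(0,1)=\bar{E}_\chi(f)$, so the theorem reduces to showing $\bar{u}_N(0,1) \to \bar{u}(0,1)$.

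Under this identification the recursion \eqref{N_upper} becomes
\[
	\bar{u}_N(s,t) = \max_{\widetilde{\chi}\in\Gamma}\ \sum_{j=0}^d p^{\widetilde{\chi}}_j\, \bar{u}_N\!\left(s+\frac{a_{l_j}}{\sqrt{N}},\, t-\frac{1}{N}\right).
\]
Taylor expanding the summand in $s$ and using the two defining identities of $p^{\widetilde{\chi}}$ in \eqref{chi_prob}, namely $\sum_j p^{\widetilde{\chi}}_j = 1$ and $\sum_j p^{\widetilde{\chi}}_j a_{l_j}=0$, the zeroth- and first-order terms collapse, leaving
\[
	\sum_{j=0}^d p^{\widetilde{\chi}}_j \bar{u}_N\!\left(s+\tfrac{a_{l_j}}{\sqrt{N}},t-\tfrac{1}{N}\right) = \bar{u}_N(s,t-\tfrac{1}{N}) + \frac{1}{2N}{\rm Tr}\!\left(\Sigma(\widetilde{\chi})\,\nabla_s^2 \bar{u}_N(s,t-\tfrac{1}{N})\right) + O(N^{-3/2}).
\]
Multiplying by $N$ and taking $\max$ over $\widetilde{\chi}$ formally recovers \eqref{BSB}.

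To make this rigorous I would first establish that \eqref{BSB} with initial data $F$ admits a solution $\bar{u}$ with uniformly bounded derivatives up to order three on $\mathbb{R}^d \times [0,1]$. This follows from standard regularity theory for fully nonlinear parabolic equations: the operator is the supremum of linear operators with coefficient matrices drawn from the finite set $\{\Sigma(\widetilde{\chi})\}_{\widetilde{\chi}\in\Gamma}$, each positive definite because $0$ lies in the interior of $\conv\widetilde{\chi}$, and the initial datum $F$ is compactly supported and $C^2$. I would then substitute $\bar{u}$ into the discrete recursion. Consistency at rate $O(N^{-3/2})$ (uniform in $s, t, \widetilde{\chi}$, with the hypothesis $0\notin\chi$ used to bound $|a_{l_j}|$ and hence the probabilities $p^{\widetilde{\chi}}_j$ away from degeneracy) combined with monotonicity of the one-step operator $v\mapsto \max_{\widetilde{\chi}} \sum_j p^{\widetilde{\chi}}_j v(\cdot+a_{l_j}/\sqrt{N})$, iterated $N$ times, yields $|\bar{u}_N(0,1)-\bar{u}(0,1)|=O(N^{-1/2})$. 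The lower-price statement then follows by applying the upper-price result to $-f$, which converts $\max_{\widetilde{\chi}}$ into $\min_{\widetilde{\chi}}$.

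The main obstacle is producing a solution of \eqref{BSB} regular enough to justify the Taylor expansion, since the supremum operator is not smooth as a function of the Hessian. The standard way out is either to invoke Evans-Krylov $C^{2,\alpha}$ regularity for convex fully nonlinear parabolic equations, or to abandon classical solutions and close the proof via the Barles-Souganidis framework: the scheme is monotone, $L^\infty$-stable and consistent, and \eqref{BSB} satisfies a comparison principle (it is a Pucci-type convex uniformly elliptic operator), so $\bar{u}_N$ converges locally uniformly to the unique viscosity solution. Either route suffices for the stated convergence.
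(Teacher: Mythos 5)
Your proposal is correct and follows essentially the same route as the paper, which itself gives no written proof but defers to the extension of Theorem 4.1 of Nakajima et al.\ (2012) --- i.e.\ exactly the rescaling $\bar{u}_N(S_n/\sqrt{N},(N-n)/N)=\bar{f}(\xi^n,N-n)$, the Taylor expansion killing the first-order term via \eqref{chi_prob}, and the monotone-consistency iteration accumulating an $O(N^{-1/2})$ error. The regularity obstacle you flag at the end is resolved in the paper by citing Theorem 4.6.9 of Pham (2009) for a smooth solution of \eqref{BSB} under the stated hypotheses (with viscosity solutions reserved for the case $0\in\chi$ or merely continuous $F$), which is the first of the two routes you propose.
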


By Theorem 4.6.9 of \cite{pham}
a smooth solution of \eqref{BSB} and \eqref{BSB2} exists
under the assumptions of Theorem \ref{th_BSB}.  The notion of viscosity solution is needed if
$0\in \chi$ or $F$ is only continuous.
As discussed in Section 6.3 of \cite{Shafer} the result of this theorem can be extended to the
case that the third and the fourth order derivatives of $F$ are bounded.
The PDE \eqref{BSB} is (the additive form of) the Black-Scholes-Barenblatt equation \citep{Romagnoli}.

When $d=1$, the maximum in \eqref{BSB} only depends on the sign of the second derivative of $\bar{u}$ and 
\eqref{BSB} reduces to the equation (13) in \cite{Nakajima}.
This case is also discussed in \cite{Peng} and Section 5 of \cite{Romagnoli}.
However, when $d \geq 2$, the maximization in \eqref{BSB} becomes more complicated.
This maximization is discussed from the viewpoint of optimization theory in Section 4 of \cite{Romagnoli}.

The Black-Scholes-Barenblatt equation in \cite{Romagnoli} is 
\begin{equation}
	\frac{\partial}{\partial t} \bar{u} (s,t) = \frac{1}{2} \max_{\lambda \in \Lambda} {\rm Tr} \left( \lambda \lambda^{\top} \cdot \nabla_s^2 \bar{u} (s,t) \right), \label{RomagnoliPDE}
\end{equation}
with the initial condition $\bar{u} (s,0) = F (s)$.
Here, $\Lambda$ is a closed bounded set in the space of $n \times n$ real matrices.
The PDE \eqref{BSB} is viewed as a special case of the PDE \eqref{RomagnoliPDE} where $\Lambda$ is a finite set.
In other words, the PDE \eqref{RomagnoliPDE} is a generalization of the PDE \eqref{BSB} to bounded forecasting games.
\cite{Nakajima} also discussed this point.


\subsection{Reduction to the linear Black-Scholes equation}
\cite{Romagnoli} stated that the Black-Scholes-Barenblatt equation \eqref{RomagnoliPDE} reduces to the ordinary linear Black-Scholes equation
\begin{equation}
	\frac{\partial}{\partial t} \bar{u} (s,t) = \frac{1}{2} {\rm Tr} \left( \lambda \lambda^{\top} \cdot \nabla_s^2 \bar{u} (s,t) \right), \label{BS}
\end{equation}
if the maximizing $\lambda$ does not depend on $s$ nor $t$.
Note that this PDE has the same form with the heat equation with anisotropic conductivity.
Such reduction occurs when the payoff function is submodular or supermodular and the move set $\chi$ is a product set \eqref{prod_chi} with $n_1=\cdots=n_d=2$.
Specifically, by taking $N \to \infty$ in Theorem \ref{th_2d} and \ref{th_hd}, we obtain the following.

\begin{Proposition}\label{prop_BS2}
Suppose $d=2$, $\chi = \{ a^{(1)}_1,a^{(1)}_2 \} \times \{ a^{(2)}_1,a^{(2)}_2 \}$ and $F: \mathbb{R}^2 \to \mathbb{R}$ is twice continuously differentiable.

\begin{itemize}
\item If $F$ is submodular, 
  then the PDE \eqref{BSB} is reduced to
\begin{equation*}
	\frac{\partial}{\partial t} \bar{u} (s,t) = \frac{1}{2} {\rm Tr} \left( \Sigma(\widetilde{\chi}_-) \cdot \nabla_s^2 \bar{u} (s,t) \right).
\end{equation*}

\item If $F$ is supermodular, 
  then the PDE \eqref{BSB} is reduced to
\begin{equation*}
	\frac{\partial}{\partial t} \bar{u} (s,t) = \frac{1}{2} {\rm Tr} \left( \Sigma(\widetilde{\chi}_+) \cdot \nabla_s^2 \bar{u} (s,t) \right).
\end{equation*}

\item If $F$ is submodular, 
  then the PDE \eqref{BSB2} is reduced to
\begin{equation*}
	\frac{\partial}{\partial t} \underline{u} (s,t) = \frac{1}{2} {\rm Tr} \left( \Sigma(\widetilde{\chi}_+) \cdot \nabla_s^2 \underline{u} (s,t) \right). 
\end{equation*}

\item If $F$ is supermodular, 
  then the PDE \eqref{BSB2} is reduced to
\begin{equation*}
	\frac{\partial}{\partial t} \underline{u} (s,t) = \frac{1}{2} {\rm Tr} \left( \Sigma(\widetilde{\chi}_-) \cdot \nabla_s^2 \underline{u} (s,t) \right).
\end{equation*}
\end{itemize}
\end{Proposition}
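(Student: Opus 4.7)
}
The plan is to lift the discrete results of Theorems \ref{th_2d} and \ref{th_hd} to the PDE level by applying Theorem \ref{th_BSB} to two games in parallel: the original multinomial game on $\chi$, and the sub-game in which Market is confined to the fixed simplex $\widetilde{\chi}_\pm$. I would first handle the upper price, then deduce the lower price by the duality $\underline{E}_\chi(f)=-\bar E_\chi(-f)$.

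First, consider the upper price with submodular $F:\mathbb{R}^2\to\mathbb{R}$. Since the move set $\chi$ is the product set $\{a_1^{(1)},a_2^{(1)}\}\times\{a_1^{(2)},a_2^{(2)}\}$, Lemma \ref{lem_preserve} says that submodularity of $F\circ g$ is preserved under backward induction, and Theorem \ref{th_2d} says that the maximizer in \eqref{N_upper} is always $\widetilde{\chi}_-$. Consequently the $N$-round upper hedging price of $f$ on $\chi$ agrees with the upper hedging price of $f$ in the restricted game whose move set is $\chi':=\widetilde{\chi}_-$. This restricted game satisfies the hypotheses of Theorem \ref{th_BSB}: by definition of $\Gamma$, the convex hull of $\widetilde{\chi}_-$ is two-dimensional and contains the origin in its interior; moreover $|\chi'|=d+1=3$, so $\Gamma'=\{\widetilde{\chi}_-\}$ is a singleton. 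Applying Theorem \ref{th_BSB} to the restricted game, the ``$\max$'' in \eqref{BSB} disappears and the limit is the solution of the linear PDE
\[
\frac{\partial}{\partial t}\bar u(s,t)=\frac12{\rm Tr}\bigl(\Sigma(\widetilde{\chi}_-)\cdot\nabla_s^2\bar u(s,t)\bigr),\quad \bar u(s,0)=F(s).
\]
Applying Theorem \ref{th_BSB} to the original game gives the same limit $\bar u(0,1)$ but described by \eqref{BSB}; since the initial data and time horizon are the same, equating the two representations yields the claimed reduction. (Alternatively one may observe that $\bar u$ inherits submodularity, hence $\partial_1\partial_2\bar u\le 0$, and then note that for a product move set the diagonal entries of $\Sigma(\widetilde{\chi}_+)$ and $\Sigma(\widetilde{\chi}_-)$ coincide while the off-diagonal entries have opposite signs, so the max in \eqref{BSB} is pointwise attained at $\widetilde{\chi}_-$; I would keep this as a remark rather than the main argument.)

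The supermodular case of the upper PDE is entirely symmetric, invoking the second part of Theorem \ref{th_2d} to fix the maximizer at $\widetilde{\chi}_+$. For the lower PDE \eqref{BSB2}, I would use $\underline{E}_\chi(f)=-\bar E_\chi(-f)$: if $F$ is submodular then $-F$ is supermodular, so by the supermodular case already proven, the upper price of $-f$ in the limit is governed by the linear equation with covariance $\Sigma(\widetilde{\chi}_+)$; negating the PDE (which is linear in $\bar u$) shows that $\underline u=-\bar u^{-F}$ satisfies the same linear equation with $\Sigma(\widetilde{\chi}_+)$, as stated. The case $F$ supermodular is symmetric.

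The main obstacle I anticipate is the passage from the discrete identity (the two games have the same finite-$N$ upper price) to the PDE identity. This requires checking that the restricted game satisfies the geometric hypotheses of Theorem \ref{th_BSB}, which is immediate from the definition of $\Gamma$, and that the assumption $0\notin\chi$ persists for $\chi'=\widetilde{\chi}_-$, which also holds since the origin lies in the interior rather than at a vertex. Once these are verified, everything else is bookkeeping.
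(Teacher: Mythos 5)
Your overall strategy — fix the maximizer at $\widetilde{\chi}_{\mp}$ via Theorem \ref{th_2d} and pass to the limit, handling \eqref{BSB2} through $\underline{E}_{\chi}(f)=-\bar{E}_{\chi}(-f)$ — is exactly what the paper intends (it offers no more proof than ``by taking $N\to\infty$ in Theorem \ref{th_2d}''). However, your main route has a concrete gap: you justify applying Theorem \ref{th_BSB} to the restricted game on $\chi'=\widetilde{\chi}_-$ by asserting that ``by definition of $\Gamma$, the convex hull of $\widetilde{\chi}_-$ \ldots contains the origin in its interior.'' The definition of $\Gamma$ only requires $0\in\conv\widetilde{\chi}$, not $0\in\operatorname{int}\conv\widetilde{\chi}$, and the interior condition genuinely fails in the paper's own leading example $\chi_1=\{-1,1\}^2$: there the origin lies on an edge of every triangle in $\Gamma$, one component of $p^{\widetilde{\chi}_\pm}$ vanishes, and $\Sigma(\widetilde{\chi}_\pm)$ is singular (rank one). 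So the restricted game violates the standing hypothesis under which Theorem \ref{th_BSB} is stated, and your appeal to it is not licensed as written. The degenerate case can be patched (the restricted game collapses to a one-dimensional binomial game along the line through $\widetilde{\chi}_-$, to which the $d=1$ result of Nakajima et al.\ applies), but that needs to be said.

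A second, smaller issue: even where Theorem \ref{th_BSB} does apply to both games, equating the two representations only identifies the single number $\bar u(0,1)$; the proposition asserts that the PDE itself reduces, i.e.\ that the maximum in \eqref{BSB} is attained at $\widetilde{\chi}_-$ for all $(s,t)$. The argument you relegate to a parenthesis is in fact the one that delivers this and avoids the degeneracy problem entirely: the solution of the linear equation with covariance $\Sigma(\widetilde{\chi}_-)$ is a Gaussian convolution of $F$, hence stays submodular, so $\partial_1\partial_2\bar u\le 0$ by Lemma \ref{lem_mod}; for a product move set the diagonal entries of $\Sigma(\widetilde{\chi}_+)$ and $\Sigma(\widetilde{\chi}_-)$ coincide (each marginal risk-neutral measure on a two-point set is unique) while $\Sigma_{12}(\widetilde{\chi}_-)\le\Sigma_{12}(\widetilde{\chi}_+)$, so the max over $\Gamma$ is attained pointwise at $\widetilde{\chi}_-$ and the linear solution solves \eqref{BSB}. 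I would promote that remark to the main argument and keep the restricted-game comparison as motivation.
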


\begin{Proposition}\label{prop_BSh}
Suppose $d \geq 3$, $\chi = \{ a^{(1)}_1,a^{(1)}_2 \} \times \cdots \times \{ a^{(d)}_1,a^{(d)}_2 \}$ and $F: \mathbb{R}^d \to \mathbb{R}$ is twice continuously differentiable.
Define $\widetilde{\chi}_L$ by \eqref{chiL}.

\begin{itemize}
\item If $F$ is supermodular, 
  then the PDE \eqref{BSB} is reduced to
\begin{equation*}
	\frac{\partial}{\partial t} \bar{u} (s,t) = \frac{1}{2} {\rm Tr} \left( \Sigma(\widetilde{\chi}_L) \cdot \nabla_s^2 \bar{u} (s,t) \right).
\end{equation*}

\item If $F$ is submodular, 
  then the PDE \eqref{BSB2} is reduced to
\begin{equation*}
	\frac{\partial}{\partial t} \underline{u} (s,t) = \frac{1}{2} {\rm Tr} \left( \Sigma(\widetilde{\chi}_L) \cdot \nabla_s^2 \underline{u} (s,t) \right). 
\end{equation*}
\end{itemize}
\end{Proposition}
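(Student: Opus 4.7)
The plan is to obtain the reduction as the $N\to\infty$ limit of Theorem \ref{th_hd}, following exactly the hint given by the authors just before Proposition \ref{prop_BS2}.

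Assume first that $F$ is supermodular. By Lemma \ref{lem_preserve2}, the composition of $\bar{f}(\xi^n\,\cdot,\,N-n-1)$ with $g_0$ remains a supermodular set function throughout the discrete backward induction. Consequently Theorem \ref{th_hd} applies at every round $n$ and every history $\xi^n$: the maximum in \eqref{N_upper} is attained by the single simplex $\widetilde{\chi}_L$ of \eqref{chiL}, which depends only on the geometry of $\chi$ (through the nested chain $\emptyset=A_0(s)\subset\cdots\subset A_d(s)=X$ at $s=g^{-1}(0)$) and not on the particular function being maximized. Hence \eqref{N_upper} degenerates into the max-free recursion $\bar{f}(\xi^n,N-n)=I(\widetilde{\chi}_L,\bar{f}(\xi^n\,\cdot,\,N-n-1))$, i.e., conditional expectation under the fixed risk-neutral measure $p^{\widetilde{\chi}_L}$.

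Next, I would re-run the Taylor-expansion argument that underlies Theorem \ref{th_BSB}, but applied to this max-free recursion. One backward step at the scale $1/\sqrt{N}$ contributes $\frac{1}{2N}\,{\rm Tr}\!\left(\Sigma(\widetilde{\chi}_L)\cdot\nabla_s^2\bar{u}\right)+o(1/N)$, and aggregating over the $N$ rounds gives, in the limit $N\to\infty$, the linear heat-type PDE $\partial_t\bar{u}=\tfrac{1}{2}\,{\rm Tr}(\Sigma(\widetilde{\chi}_L)\cdot\nabla_s^2\bar{u})$ with initial condition $F$. Uniqueness of the classical solution (justified by the smoothness hypothesis on $F$ and the PDE theory cited immediately after Theorem \ref{th_BSB}) then identifies this $\bar{u}$ with the solution of the full Black--Scholes--Barenblatt equation \eqref{BSB}, establishing the first bullet.

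For the submodular case and the PDE \eqref{BSB2}, I would use the duality $\underline{u}_F=-\bar{u}_{-F}$, where the subscript denotes the initial condition. If $F$ is submodular then $-F$ is supermodular, so by the supermodular case just handled $\bar{u}_{-F}$ satisfies $\partial_t\bar{u}_{-F}=\tfrac{1}{2}\,{\rm Tr}(\Sigma(\widetilde{\chi}_L)\cdot\nabla_s^2\bar{u}_{-F})$. Since the equation is linear in the unknown, negating both sides yields the same linear PDE for $\underline{u}_F$, which is the second bullet.

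The only genuine technical obstacle is the uniform control of the higher-order Taylor remainders during the passage $N\to\infty$; however, because $\widetilde{\chi}_L$ is a fixed simplex independent of both the history and the particular values of $\bar{f}(\cdot,k)$, this step reduces verbatim to the estimates already carried out in the proof of Theorem \ref{th_BSB}, and no analytic work beyond that bookkeeping is required.
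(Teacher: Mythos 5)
Your proposal is correct and follows essentially the same route as the paper, which offers no separate proof beyond the remark that the result is obtained ``by taking $N\to\infty$'' in Theorem \ref{th_hd}: since Lemma \ref{lem_preserve2} and Theorem \ref{th_hd} make the maximizer in \eqref{N_upper} the fixed simplex $\widetilde{\chi}_L$ at every round, the recursion is max-free and the limiting argument of Theorem \ref{th_BSB} collapses \eqref{BSB} to the linear equation, with the submodular case handled by the duality $\underline{E}_\chi(f)=-\bar{E}_\chi(-f)$ exactly as you do. The only point worth making explicit is that identifying the linear solution with the solution of \eqref{BSB} requires running the convergence argument at arbitrary $(s,t)$, not only at $(0,1)$, which is a routine extension.
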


The linear Black-Scholes equation \eqref{BS} is solved in closed form by convolution of Gaussian densities.
Specifically, if $u_0(s)$ is smooth and has its third and fourth derivatives bounded, 
\begin{equation*}
	\frac{\partial}{\partial t} u = \frac{1}{2} {\rm Tr} \left( \Sigma \cdot \nabla_s^2 u \right)
\end{equation*}
with initial condition $u(s,t=0) = u_0(s)$ is given by
\begin{equation*}
	u(s,t) = \int \phi (s-s'; t \Sigma) u_0(s') {\rm d} s',
\end{equation*}
where $\phi (x; \Sigma)$ is the Gaussian density with covariance $\Sigma$:
\begin{equation*}
	\phi(x; \Sigma) = \frac{1}{(2 \pi)^{d/2} |\Sigma|^{1/2}} \exp \left( -\frac{1}{2} x^{\top} \Sigma^{-1} x \right).
\end{equation*}
Thus,
\begin{equation*}
	u (0,1) = \int F (s) \phi (s; \Sigma) {\rm d} s,
\end{equation*}
which is the expected value of $F(s)$ where $s$ has the distribution ${\rm N} (0,\Sigma)$.  Below
we write $s\sim {\rm N} (0,\Sigma)$.
Based on this, the upper and lower hedging prices are explicitly obtained as follows.

\begin{Theorem}\label{th_BS2}
Suppose $d=2$, $\chi = \{ a^{(1)}_1,a^{(1)}_2 \} \times \{ a^{(2)}_1,a^{(2)}_2 \}$ and $F: \mathbb{R}^2 \to \mathbb{R}$ is smooth and has its third and fourth derivatives bounded. 

\begin{itemize}
\item If $F$ is submodular, 
\begin{equation*}
  \lim_{N \to \infty} \bar{E}_{\chi} (f) = E [F(s)],
  \quad s \sim {\rm N} (0,\Sigma(\widetilde{\chi}_-)).
\end{equation*}

\item If $F$ is supermodular, 
\begin{equation*}
  \lim_{N \to \infty} \bar{E}_{\chi} (f) = E [F(s)],
  \quad s \sim {\rm N} (0,\Sigma(\widetilde{\chi}_+)).
\end{equation*}

\item If $F$ is submodular, 
\begin{equation*}
	\lim_{N \to \infty} \underline{E}_{\chi} (f) = E [F(s)],
        \quad s \sim {\rm N} (0,\Sigma(\widetilde{\chi}_+)).
\end{equation*}

\item If $F$ is supermodular, 
\begin{equation*}
  \lim_{N \to \infty} \underline{E}_{\chi} (f) = E [F(s)],
  \quad s \sim {\rm N} (0,\Sigma(\widetilde{\chi}_-)).
\end{equation*}
\end{itemize}
\end{Theorem}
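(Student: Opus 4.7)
The plan is to chain together three results already in the paper: the PDE characterization of the asymptotic hedging prices (Theorem \ref{th_BSB}), the reduction of the Black-Scholes-Barenblatt equation to a linear heat equation under sub/supermodularity (Proposition \ref{prop_BS2}), and the explicit Gaussian-convolution formula stated immediately before the theorem. Each of the four bullets of Theorem \ref{th_BS2} is handled by the same three-step pipeline with a different choice of $\widetilde{\chi}_{\pm}$.

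Concretely, I would proceed as follows. First, apply (the extension of) Theorem \ref{th_BSB} to identify $\lim_{N\to\infty}\bar{E}_\chi(f)$ with $\bar{u}(0,1)$ where $\bar{u}$ solves the Black-Scholes-Barenblatt equation \eqref{BSB} with initial condition $\bar{u}(s,0)=F(s)$, and analogously for $\underline{u}$. Second, invoke Proposition \ref{prop_BS2}: when $F$ is submodular the Hessian of $\bar{u}$ stays in the regime where the $\max$ in \eqref{BSB} is attained at the simplex $\widetilde{\chi}_-$ uniformly in $(s,t)$, so $\bar{u}$ satisfies the linear PDE with constant diffusion $\Sigma(\widetilde{\chi}_-)$; the three other cases match $\bar{u}/\underline{u}$ with supermodular/submodular $F$ to $\widetilde{\chi}_+,\widetilde{\chi}_+,\widetilde{\chi}_-$ respectively. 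Third, apply the explicit fundamental solution stated in the paper to write
\begin{equation*}
  \bar{u}(0,1)=\int F(s)\,\phi(s;\Sigma(\widetilde{\chi}_\pm))\,\mathrm{d}s = E[F(s)],\qquad s\sim \mathrm{N}(0,\Sigma(\widetilde{\chi}_\pm)),
\end{equation*}
which is exactly the claimed expression, and similarly for $\underline{u}(0,1)$.

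The main obstacle is the regularity mismatch between Theorem \ref{th_BSB} and Theorem \ref{th_BS2}: Theorem \ref{th_BSB} is stated for $F$ with compact support and twice continuously differentiable, whereas Theorem \ref{th_BS2} assumes $F$ smooth with bounded third and fourth derivatives (in particular $F$ need not be compactly supported, which is important since the Gaussian convolution needs enough decay on $F$ to be finite, but growth is allowed in the framework of \cite{Shafer}). To bridge this gap I would rely on the remark in the paper that, following Section 6.3 of \cite{Shafer}, Theorem \ref{th_BSB} extends to payoffs with bounded third and fourth derivatives; this is precisely the regularity hypothesis imposed here, and it is also what is needed to guarantee that the Gaussian convolution integral is well defined and equals the unique smooth solution of the linear equation with initial data $F$.

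A minor but worthwhile sanity check is that the reduction in Proposition \ref{prop_BS2} is self-consistent: the linear PDE with diffusion $\Sigma(\widetilde{\chi}_-)$ (for submodular $F$) has as its solution the Gaussian convolution above, and by Lemma \ref{lem_mod} this solution has Hessian whose off-diagonal entry keeps the correct sign for all $(s,t)$, so the maximizer in \eqref{BSB} is indeed $\widetilde{\chi}_-$ throughout the evolution — confirming that the linear solution is a true solution of the nonlinear \eqref{BSB}. Uniqueness (guaranteed by Theorem 4.6.9 of \cite{pham} or via viscosity-solution theory) then identifies it with $\bar{u}$, completing the argument. The supermodular case and the two lower-price cases are symmetric.
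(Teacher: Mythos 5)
Your proposal is correct and follows essentially the same route as the paper, which obtains Theorem \ref{th_BS2} precisely by chaining Theorem \ref{th_BSB}, Proposition \ref{prop_BS2}, and the Gaussian-convolution formula for the linear equation stated just before the theorem (the paper gives no separate formal proof beyond the phrase ``Based on this, the upper and lower hedging prices are explicitly obtained as follows''). Your additional remarks --- bridging the regularity mismatch via the extension noted after Theorem \ref{th_BSB}, and the self-consistency check that Gaussian convolution preserves the sign of the mixed partial so the maximizer in \eqref{BSB} stays constant --- are sound and make explicit what the paper leaves implicit.
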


\begin{Theorem}\label{th_BSh}
Suppose $d \geq 3$, $\chi = \{ a^{(1)}_1,a^{(1)}_2 \} \times \cdots \times \{ a^{(d)}_1,a^{(d)}_2 \}$ and $F: \mathbb{R}^d \to \mathbb{R}$ is smooth and has its third and fourth derivatives bounded. 
Define $\widetilde{\chi}_L$ by \eqref{chiL}.

\begin{itemize}
\item If $F$ is supermodular 
\begin{equation*}
  \lim_{N \to \infty} \bar{E}_{\chi} (f) = E [F(s)],
  \quad s \sim {\rm N} (0,\Sigma(\widetilde{\chi}_L)).
\end{equation*}

\item If $F$ is submodular 
  \begin{equation*}
  \lim_{N \to \infty} \underline{E}_{\chi} (f) = E [F(s)],
  \quad  s \sim {\rm N} (0,\Sigma(\widetilde{\chi}_L)).
\end{equation*}
\end{itemize}
\end{Theorem}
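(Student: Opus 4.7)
The plan is to combine Proposition \ref{prop_BSh} with the explicit Gaussian-convolution formula for the linear heat equation already recorded just before the theorem statement, pulling back through Theorem \ref{th_BSB} to interpret the PDE value as the limiting hedging price.

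First I would treat the supermodular case for the upper hedging price. By the extension of Theorem \ref{th_BSB} noted in the remark (the assumption that $F$ is smooth with bounded third and fourth derivatives replaces the compact-support hypothesis, per Section 6.3 of \cite{Shafer}), we have $\lim_{N\to\infty}\bar{E}_{\chi}(f)=\bar{u}(0,1)$, where $\bar{u}$ solves the Black--Scholes--Barenblatt equation \eqref{BSB} with $\bar{u}(s,0)=F(s)$. Proposition \ref{prop_BSh} tells us that, because $F$ is supermodular, this PDE collapses to the linear equation
\begin{equation*}
\frac{\partial}{\partial t}\bar{u}(s,t)=\frac{1}{2}\mathrm{Tr}\bigl(\Sigma(\widetilde{\chi}_L)\cdot\nabla_s^2\bar{u}(s,t)\bigr),
\end{equation*}
with the same initial condition. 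Then I invoke the Gaussian-convolution solution formula displayed just above the theorem: the unique smooth solution of this heat equation with initial datum $F$ is $\bar{u}(s,t)=\int F(s')\,\phi(s'-s;t\Sigma(\widetilde{\chi}_L))\,\mathrm{d}s'$, so
\begin{equation*}
\bar{u}(0,1)=\int F(s)\,\phi(s;\Sigma(\widetilde{\chi}_L))\,\mathrm{d}s=E[F(s)],\quad s\sim\mathrm{N}(0,\Sigma(\widetilde{\chi}_L)).
\end{equation*}
Combining the two displays yields the first bullet.

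For the submodular/lower-price case I would run the symmetric argument: Theorem \ref{th_BSB} gives $\lim_{N\to\infty}\underline{E}_{\chi}(f)=\underline{u}(0,1)$ where $\underline{u}$ solves \eqref{BSB2}; Proposition \ref{prop_BSh} (second bullet) reduces \eqref{BSB2} to the linear heat equation with covariance $\Sigma(\widetilde{\chi}_L)$; and the same Gaussian-convolution formula identifies $\underline{u}(0,1)$ with $E[F(s)]$ for $s\sim\mathrm{N}(0,\Sigma(\widetilde{\chi}_L))$.

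The only subtle point — and the main obstacle — is that Theorem \ref{th_BSB} was stated under a compact-support assumption on $F$, whereas here we only assume $F$ is smooth with bounded third and fourth derivatives (this is what is needed so that the Gaussian integrals converge and so that the convolution formula produces an admissible solution). I would address this by citing the extension of Theorem \ref{th_BSB} already mentioned in the paper immediately after its proof (via Section 6.3 of \cite{Shafer}), under which the convergence of the discrete hedging prices to $\bar{u}(0,1)$ and $\underline{u}(0,1)$ remains valid. Once that technical upgrade is in place, the rest of the argument is a direct chaining of Theorem \ref{th_BSB}, Proposition \ref{prop_BSh}, and the explicit Gaussian integral representation.
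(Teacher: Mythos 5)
Your proposal is correct and follows essentially the same route the paper intends: the theorem is stated without a separate proof precisely because it is the direct chaining of Theorem \ref{th_BSB} (with the extension to smooth $F$ having bounded third and fourth derivatives noted after that theorem), Proposition \ref{prop_BSh}, and the Gaussian-convolution solution of the linear heat equation displayed just before the statement. Your handling of the compact-support issue matches the paper's own remark, so nothing is missing.
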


In Theorem \ref{th_BS2} and \ref{th_BSh} we assumed that
$F$ is smooth and has its third and fourth derivatives bounded. 
However Theorem \ref{th_BS2} and \ref{th_BSh} hold for non-smooth payoff functions, such as options on 
the maximum or the minimum, because these payoff functions can be uniformly approximated by
smooth payoff functions.


\section{Numerical results}
In this section, we confirm the validity of the theoretical results by numerical experiments.

To calculate the asymptotic value of the upper and lower hedging prices, we solve the Black-Scholes-Barenblatt equation 
\begin{equation*}
	\frac{\partial}{\partial t} {u} (s,t) = \frac{1}{2} \max_{\widetilde{\chi} \in \Gamma} {\rm Tr} \left( \Sigma(\widetilde{\chi}) \cdot \nabla_s^2 {u} (s,t) \right)
\end{equation*}
with the initial condition ${u} (s,0) = u_0 (s)$ by the finite difference method \citep{Smith}.
We explain the method for $d=2$ in the following. 
Let $\Delta s$ and $\Delta t = 1/K$ be the step sizes in space and time, respectively.
Here, we use the same value $\Delta s$ of step size in each space dimension for simplicity and $K>0$ is an integer.
We restrict the domain of $s$ to the rectangle $D=[-M \Delta s, M \Delta s] \times [-M \Delta s, M \Delta s]$, where $M>0$ is a sufficiently large integer.
Let $u_{i,j,k}$ be the numerical value of $u(i \Delta s, j \Delta s, k \Delta t)$ for $i=-M,-M+1,\cdots,M-1,M$, $j=-M,-M+1,\cdots,M-1,M$ and $k=0,1,\cdots,K$.
From the initial condition, we set $u_{i,j,0}=u_0(i \Delta s, j \Delta s)$.
To calculate $u_{i,j,k}$ for $k=1,2,\cdots,K$ iteratively,
we employ the explicit Euler scheme and select $\widetilde{\chi}$ at each step by comparing all elements in $\Gamma$,
following the approach of \cite{Nakajima} for $d=1$.
Specifically, the scheme is written as
\begin{equation}
	u_{i,j,k+1} = u_{i,j,k} + \frac{1}{2} \max_{\widetilde{\chi} \in \Gamma} {\rm Tr} \left( \Sigma(\widetilde{\chi}) \cdot \nabla_{{\rm d}}^2 u_{i,j,k} \right) \Delta t, \label{scheme}
\end{equation}
where $\nabla_{{\rm d}}^2 u_{i,j,k} \in \mathbb{R}^{2 \times 2}$ is the discretization of the Hessian defined as
\begin{equation*}
	\left( \nabla_{{\rm d}}^2 u_{i,j,k} \right)_{11} = \frac{u_{i+1,j,k}-2 u_{i,j,k}+u_{i-1,j,k}}{(\Delta s)^2},
\end{equation*}
\begin{equation*}
	\left( \nabla_{{\rm d}}^2 u_{i,j,k} \right)_{12} = \left( \nabla_{{\rm d}}^2 u_{i,j,k} \right)_{21} = \frac{u_{i+1,j+1,k}-u_{i+1,j-1,k}-u_{i-1,j+1,k}+u_{i-1,j-1,k}}{4 (\Delta s)^2},
\end{equation*}
\begin{equation*}
	\left( \nabla_{{\rm d}}^2 u_{i,j,k} \right)_{22} = \frac{u_{i,j+1,k}-2 u_{i,j,k}+u_{i,j-1,k}}{(\Delta s)^2}.
\end{equation*}
We solve \eqref{scheme} iteratively for $k=0,1,\cdots,K-1$ with the Dirichlet boundary condition $u_{i,j,k} = u_0(i \Delta s, j \Delta s)$ 
where $(i \Delta s,j \Delta s) \in \partial D$.
Then, we adopt $u_{0,0,K} \approx u(0,0,1)$ as the approximate value of the upper hedging price.
To avoid numerical instability, the step sizes must satisfy
\begin{equation*}
	\frac{\Delta t}{(\Delta s)^2} \leq \frac{1}{2},
\end{equation*}
which is called the Crank-Nicolson condition \citep{Smith}. 

\subsection{Options on the maximum or the minimum of two assets}
Here, we calculate the upper and lower hedging prices of options on the maximum or the minimum of two assets \citep{Stulz}.
Specifically, the option on the maximum of two assets is defined as
\begin{equation}
    f_M (\xi) = F_M (S_N) = (\max ((S_N)_1, (S_N)_2) - K)_{+} \label{max_option_2d}
\end{equation}
and the option on the minimum of two assets is defined as
\begin{equation}
    f_m (\xi) = F_m (S_N) = (\min ((S_N)_1, (S_N)_2) - K)_{+}. \label{min_option_2d}
\end{equation}
In our experiments we take $K=1$.

Suppose the move set is $\chi_1 = \{ (1,1), (1,-1), (-1,1), (-1,-1) \} = \{ -1, 1 \} \times \{ -1, 1 \}$.
Figure \ref{cross_price} plots the upper and lower hedging prices calculated by solving \eqref{N_upper} recursively for each $N$. 
The calculated prices almost converge around $N=5$. 
In this case, from Theorem \ref{th_BS2}, the limits of upper and lower hedging prices are obtained in closed form. 
Note that
\begin{equation*}
	\Sigma(\widetilde{\chi}_+) = \begin{pmatrix} 1 & 1 \\ 1 & 1 \end{pmatrix}, \quad \Sigma(\widetilde{\chi}_-) = \begin{pmatrix} 1 & -1 \\ -1 & 1 \end{pmatrix}.
\end{equation*}
Thus, $s \sim {\rm N} (0,\Sigma(\widetilde{\chi}_+))$ means $s_1=s_2 \sim {\rm N}(0,1)$ while $s \sim {\rm N} (0,\Sigma(\widetilde{\chi}_-))$ means $s_1=-s_2 \sim {\rm N}(0,1)$.
Therefore, the limits of the upper hedging prices are
\begin{equation*}
	\bar{E}_{\chi_1} (f_M) = \int_{-\infty}^{\infty} (|z|-1)_+ \phi (z) {\rm d} z = 2 \int_1^{\infty} (z-1)_+ \phi (z) {\rm d} z = 2(\phi(1)-\Phi(-1)) = 0.1666,
\end{equation*}
\begin{equation*}
	\bar{E}_{\chi_1} (f_m) = \int_{-\infty}^{\infty} (z-1)_+ \phi (z) {\rm d} z = \int_1^{\infty} (z-1)_+ \phi (z) {\rm d} z = \phi(1)-\Phi(-1) = 0.0833,
\end{equation*}
and the limits of the lower hedging prices are
\begin{equation*}
	\underline{E}_{\chi_1} (f_M) = \int_{-\infty}^{\infty} (z-1)_+ \phi (z) {\rm d} z = 0.0833,
\end{equation*}
\begin{equation*}
	\underline{E}_{\chi_1} (f_m) = \int_{-\infty}^{\infty} (-|z|-1)_+ \phi (z) {\rm d} z = 0.
\end{equation*}
These values are shown in Figure \ref{cross_price} by the horizontal lines.
They agree well with the convergence values.

\begin{figure}
\begin{minipage}{0.45\textwidth}
(a)
\begin{center}
 \includegraphics[width=6cm]{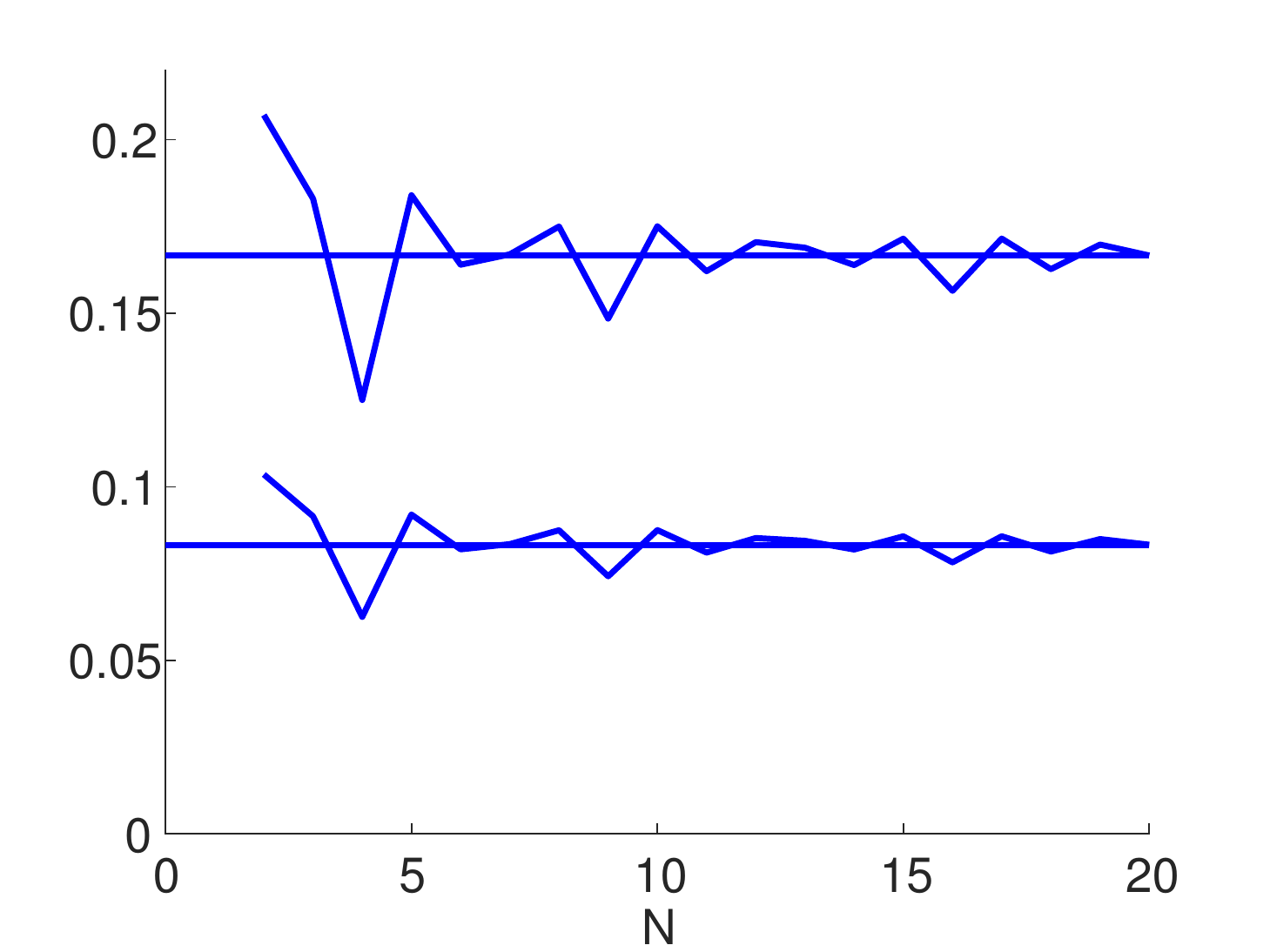}
\end{center}
\end{minipage}
\begin{minipage}{0.45\textwidth}
(b)
\begin{center}
 \includegraphics[width=6cm]{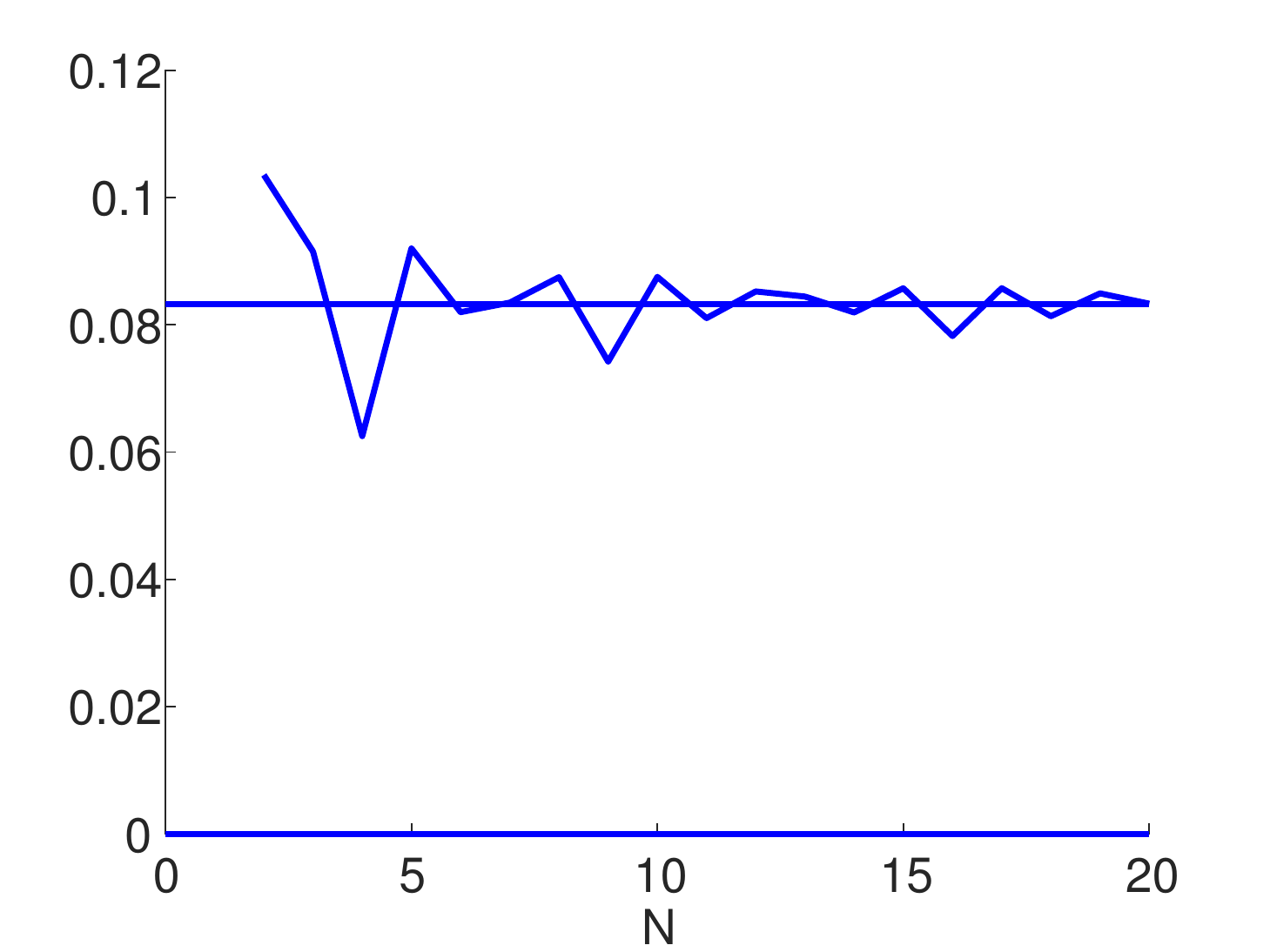}
\end{center}
\end{minipage}
\caption{Upper and lower hedging prices with move set $\chi_1$. (a) Option on the maximum \eqref{max_option_2d}. (b) Option on the minimum  \eqref{min_option_2d}. The horizontal lines represent the limit values. In (b), the lower hedging prices calculated for each $N$ are almost zero}
\label{cross_price}
\end{figure}

The move set $\chi_1$ corresponds to a lattice binomial model considered in \cite{Boyle}.
Based on this model, \cite{Boyle} proposed a method for pricing multivariate contingent claims by specifying the correlation coefficient between two assets.
Specifically, if the correlation coefficient $\rho$ between two assets under a risk neutral measure $p$ is given, then $p$ is uniquely determined as
\begin{equation*}
	p = \begin{pmatrix} 1 & 1 & 1 & 1\\ 1 & 1 & -1 & -1 \\ 1 & -1 & 1 & -1 \\ 1 & -1 & -1 & 1 \end{pmatrix}^{-1} \begin{pmatrix} 1 \\ 0 \\ 0 \\ \rho \end{pmatrix}.
\end{equation*}
Then, we can use the Cox-Ross-Rubinstein formula \citep{Cox} to calculate the prices.
Namely, we take the expectation with respect to $p$:
\begin{equation}
	E_p (f) = \sum_{\xi \in \chi^N} p_{x_1} \cdots p_{x_N} f(\xi), \quad \xi=x_1 \cdots x_N. \label{Boyle_price}
\end{equation}
Note that this method does not provide the upper or lower hedging price in general.
Figure \ref{maxmin_corr} plots the prices calculated by \eqref{Boyle_price} for each value of $\rho$ where $N=20$.
For the option on the maximum \eqref{max_option_2d}, the prices for $\rho=-1$ and $\rho=1$ coincide with the upper and lower hedging prices, respectively.
For the option on the minimum \eqref{min_option_2d}, the prices for $\rho=1$ and $\rho=-1$ coincide with the upper and lower hedging prices, respectively.
These are understood from Corollary \ref{cor_opt}, 
because $\rho=1$ means that $\widetilde{\chi}_+$ is always taken while $\rho=-1$ means that $\widetilde{\chi}_-$ is always taken.
When $d \geq 3$, the risk neutral measure is not uniquely determined by specifying correlation coefficients because $1+d+d(d-1)/2 < 2^d$.
Although \cite{Boyle} select one risk neutral measure by considering symmetry, there seems to be no theoretical support of this selection.

\begin{figure}
\begin{minipage}{0.45\textwidth}
(a)
\begin{center}
 \includegraphics[width=6cm]{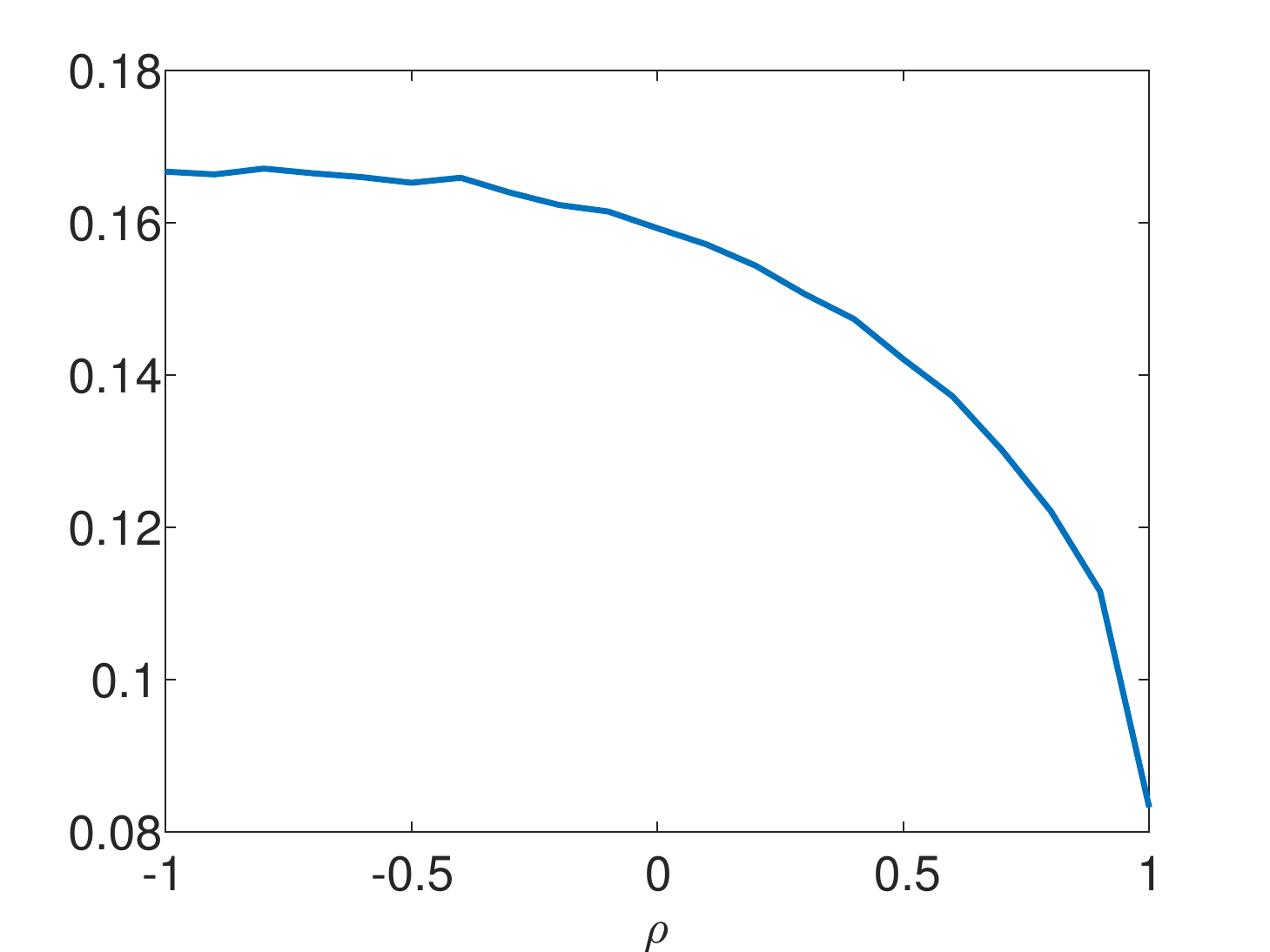}
\end{center}
\end{minipage}
\begin{minipage}{0.45\textwidth}
(b)
\begin{center}
 \includegraphics[width=6cm]{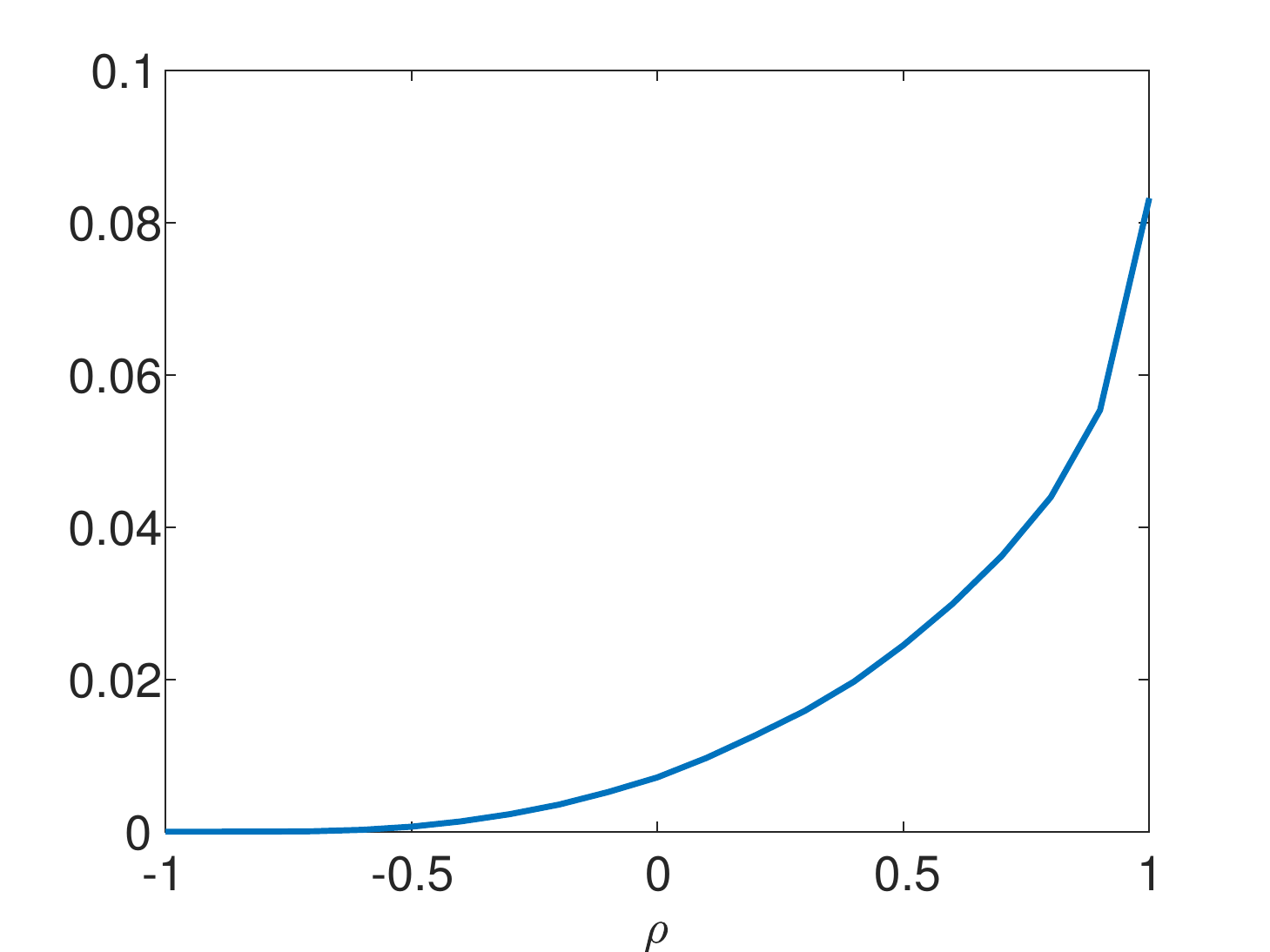}
\end{center}
\end{minipage}
	\caption{Prices calculated by the method of \cite{Boyle} for each value of $\rho$. (a) Option on the maximum \eqref{max_option_2d}. (b) option on the minimum \eqref{min_option_2d}.}
	\label{maxmin_corr}
\end{figure}

Now, suppose the move set is $\chi_2 =\{ (1,0), (-1,0), (0,1), (0,-1) \}$.
Figure \ref{plus_price} plots the upper and lower hedging prices calculated by solving \eqref{N_upper} recursively for each $N$. 
We also calculated the limits of upper and lower hedging prices by solving the Black-Scholes-Barenblatt equation \eqref{BSB} numerically. 
In the finite difference method, we set the step sizes to $\Delta s = 1/10$ and $\Delta t = 1/300$, which satisfy the Crank-Nicolson condition,
and restricted the domain of $s$ to $D = [-7,7] \times [-7,7]$.
The calculated values are
\begin{equation}
	\bar{E}_{\chi_2} (f_M) \approx 0.1105, \underline{E}_{\chi_2} (f_M) \approx 0.0084, \bar{E}_{\chi_2} (f_m) \approx 0.0028, \underline{E}_{\chi_2} (f_m) \approx 0. \label{limit_plus}
\end{equation}
These values are shown in Figure \ref{plus_price} by the horizontal lines.

\begin{figure}
\begin{minipage}{0.45\textwidth}
(a)
\begin{center}
 \includegraphics[width=6cm]{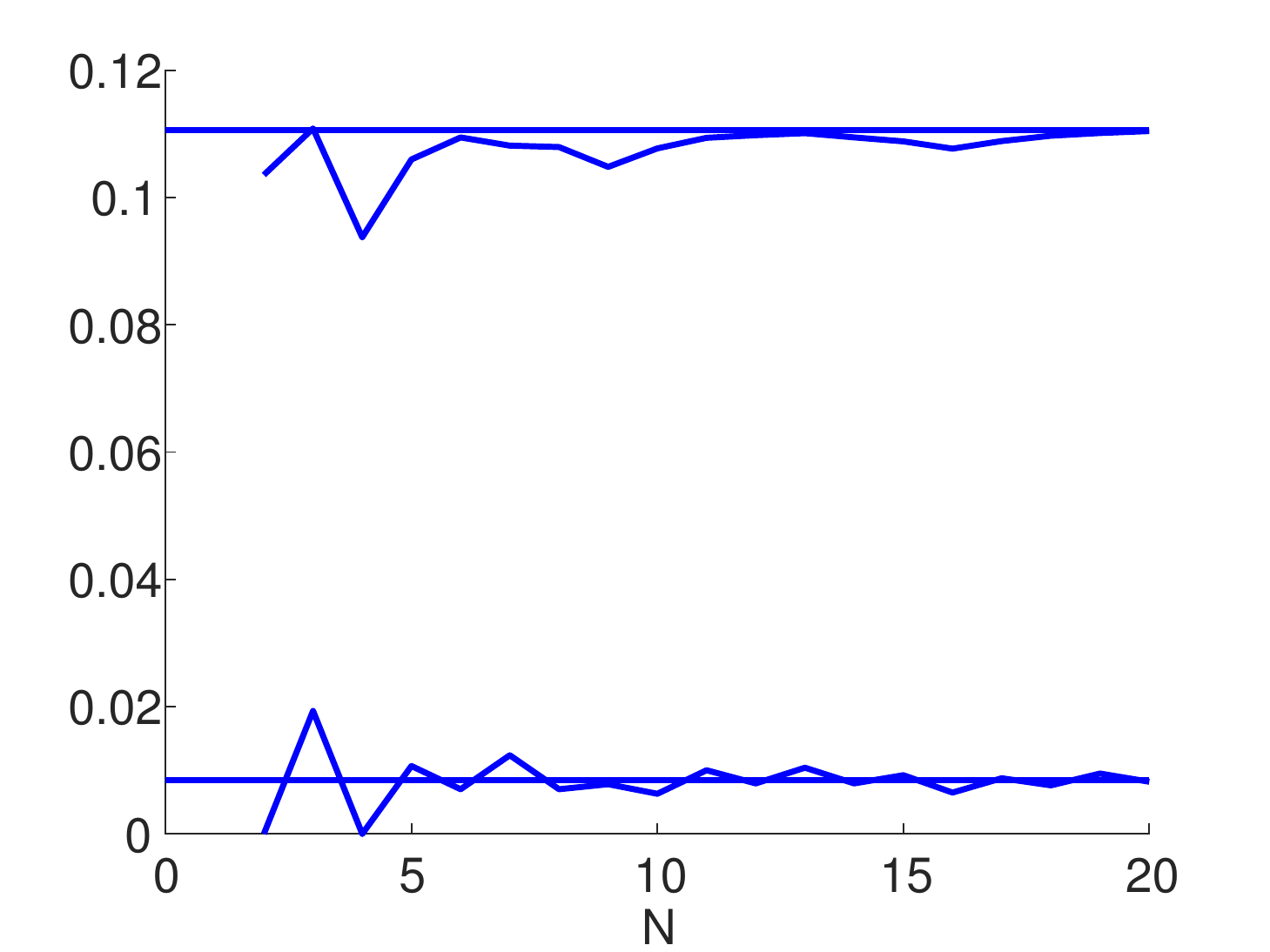}
\end{center}
\end{minipage}
\begin{minipage}{0.45\textwidth}
(b)
\begin{center}
 \includegraphics[width=6cm]{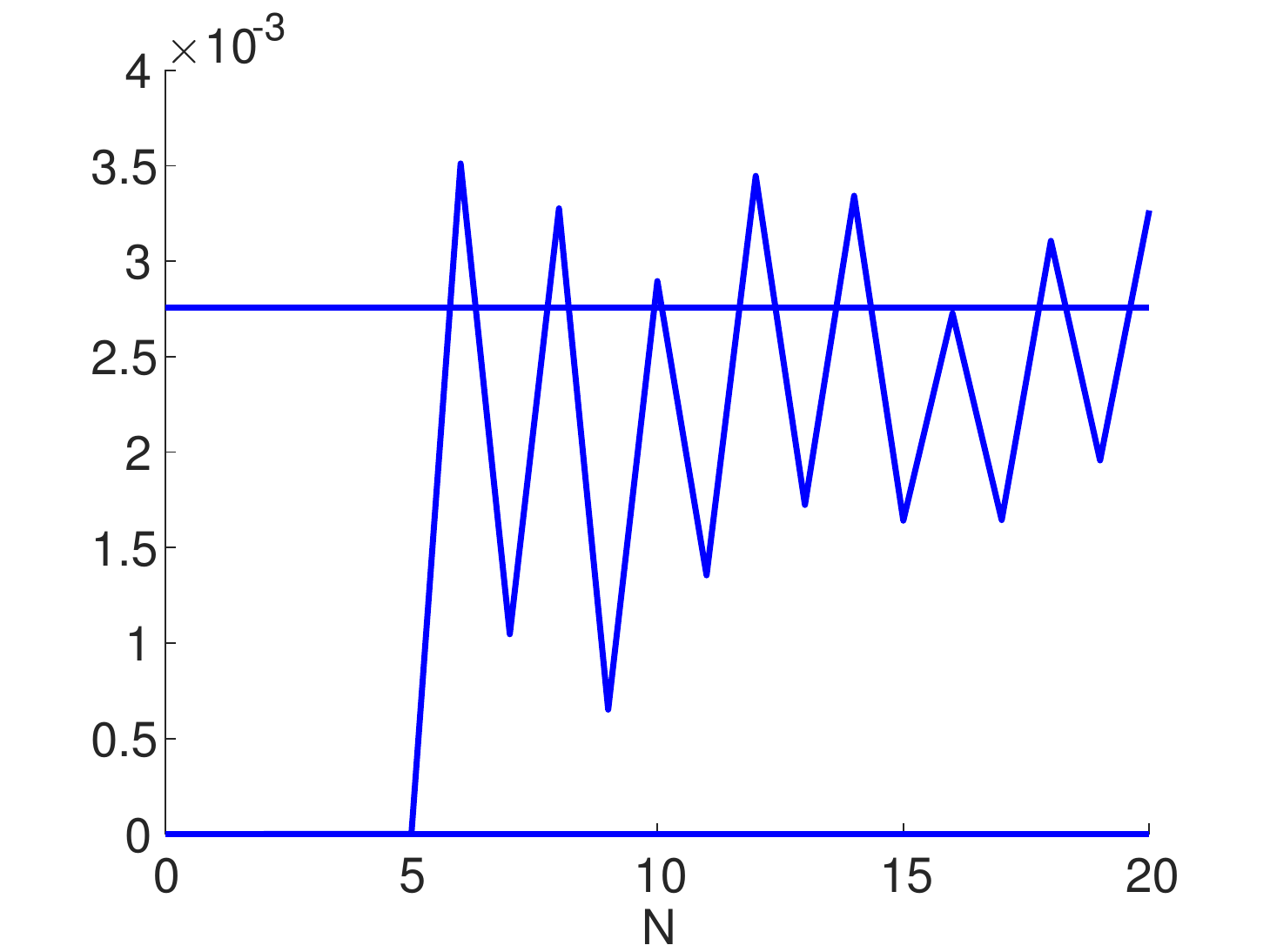}
\end{center}
\end{minipage}
\caption{Upper and lower hedging prices with move set $\chi_2$. (a) Option on the maximum \eqref{max_option_2d}. (b) Option on the minimum \eqref{min_option_2d}. The horizontal lines represent the limit values. In (b), the lower hedging prices calculated for each $N$ are almost zero}
\label{plus_price}
\end{figure}

\subsection{Option on the minimum of three assets}
Here, we calculate the upper hedging price of the option on the minimum of three assets \citep{Johnson} defined as
\begin{equation}
    f_m (\xi) = F_m(S_N) = (\min ((S_N)_1,(S_N)_2,(S_N)_3) - K)_{+}. \label{min_option_3d_f}
\end{equation}
In our experiments we take $K=1$.

Suppose the move set is $\chi = \{ -1, 2 \} \times \{ -2, 1 \} \times \{ -1, 1 \}$.
Then, $\widetilde{\chi}_L$ in \eqref{chiL} and $\Sigma(\widetilde{\chi}_L)$ are obtained as
\begin{equation}
	\widetilde{\chi}_L = \{ (-1,-2,-1),(-1,1,-1),(-1,1,1),(2,1,1) \},
\end{equation}
\begin{equation}
	\Sigma(\widetilde{\chi}_L) = \begin{pmatrix} 2 & 1 & 1 \\ 1 & 2 & 1 \\ 1 & 1 & 1 \end{pmatrix}.
\end{equation}
Figure \ref{min_option_3d} plots the upper hedging price calculated by solving \eqref{N_upper} recursively.
It almost converges around $N=5$.
On the other hand, from Theorem \ref{th_BSh},
\begin{equation*}
	\bar{E}_{\chi_1} (f_m) = E[F_m(s)],
\end{equation*}
where $s \in {\rm N}(0,\Sigma(\widetilde{\chi}_L))$.
We calculated this expectation by the Monte Carlo method with $10^6$ samples and obtained
\begin{equation}
	E[F_m(s)] \approx 0.0374. \label{limit_3d}
\end{equation}
This value is shown in Figure \ref{min_option_3d} by the horizontal line.
It agrees well with the convergence value.

\begin{figure}
\begin{center}
 \includegraphics[width=6cm]{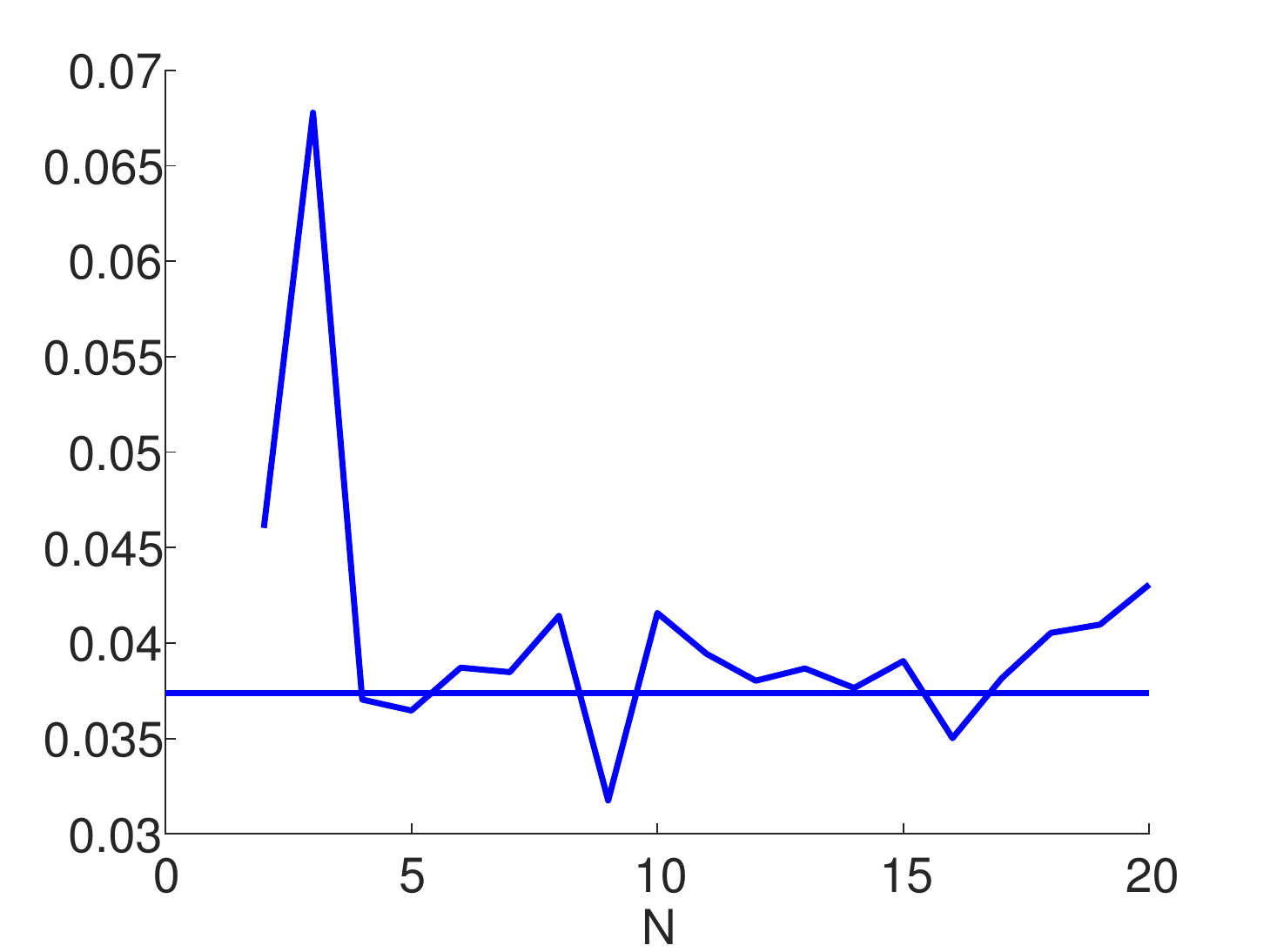}
\end{center}
\caption{Upper hedging price of the option on the minimum \eqref{min_option_3d_f} with move set $\chi = \{ -1, 2 \} \times \{ -2, 1 \} \times \{ -1, 1 \}$. The horizontal line represents the limit values}
\label{min_option_3d}
\end{figure}

\subsection{Butterfly-type options}
Finally, we consider European options motivated from Butterfly spread options.
\cite{Nakajima} calculated the upper and lower hedging prices of the Butterfly spread options when $d=1$: 
\begin{equation*}
	f(\xi)=F (S_N)=\max (0, S_N+0.5) - 2 \max (0, S_N-0.5) + \max (0, S_N-1.5).
\end{equation*}
Here, we consider the case where $d=2$ and the move set is $\chi_1 = \{ -1, 1 \} \times \{ -1, 1 \}$ or $\chi_2 = \{ (1,0), (-1,0), (0,1), (0,-1) \}$.
In numerical solution of the Black-Scholes-Barenblatt equation \eqref{BSB} by the finite difference method, 
we set the step sizes to $\Delta s = 1/10$ and $\Delta t = 1/300$, which satisfy the Crank-Nicolson condition,
and restricted the domain of $s$ to $D = [-7,7] \times [-7,7]$.

\subsubsection{Non-separable case}
Consider a European option defined as
\begin{equation}
    f(\xi) = F (S_N) = \begin{cases}
		0 & ((S_N)_1 < -0.5+|(S_N)_2|) \\
		(S_N)_1-(-0.5+|(S_N)_2|) & (-0.5+|(S_N)_2| \leq (S_N)_1 < 0.5) \\
		(1.5-|(S_N)_2|)-(S_N)_1 & (0.5 \leq (S_N)_1 < 1.5-|(S_N)_2|) \\
		0 & (1.5-|(S_N)_2| \leq (S_N)_1) \label{cone_f}
	\end{cases}.
\end{equation}
When $(S_N)_2$ is fixed, this payoff function behaves like the Butterfly spread option as a function of $(S_N)_1$.

Figure \ref{cone_price} plots the upper and lower hedging prices calculated by solving \eqref{N_upper} recursively for each $N$.
On the other hand, the limit values calculated by solving the Black-Scholes-Barenblatt equation \eqref{BSB} numerically are
\begin{equation*}
	\bar{E}_{\chi_1} (f) \approx 0.1786, \underline{E}_{\chi_1} (f) \approx 0.0028, \bar{E}_{\chi_2} (f) \approx 0.3315, \underline{E}_{\chi_2} (f) \approx 0.0470.
\end{equation*}
These values are shown in Figure \ref{cone_price} by the horizontal lines.
They agree well with the convergence values.

\begin{figure}
\begin{minipage}{0.45\textwidth}
(a)
\begin{center}
 \includegraphics[scale=0.3]{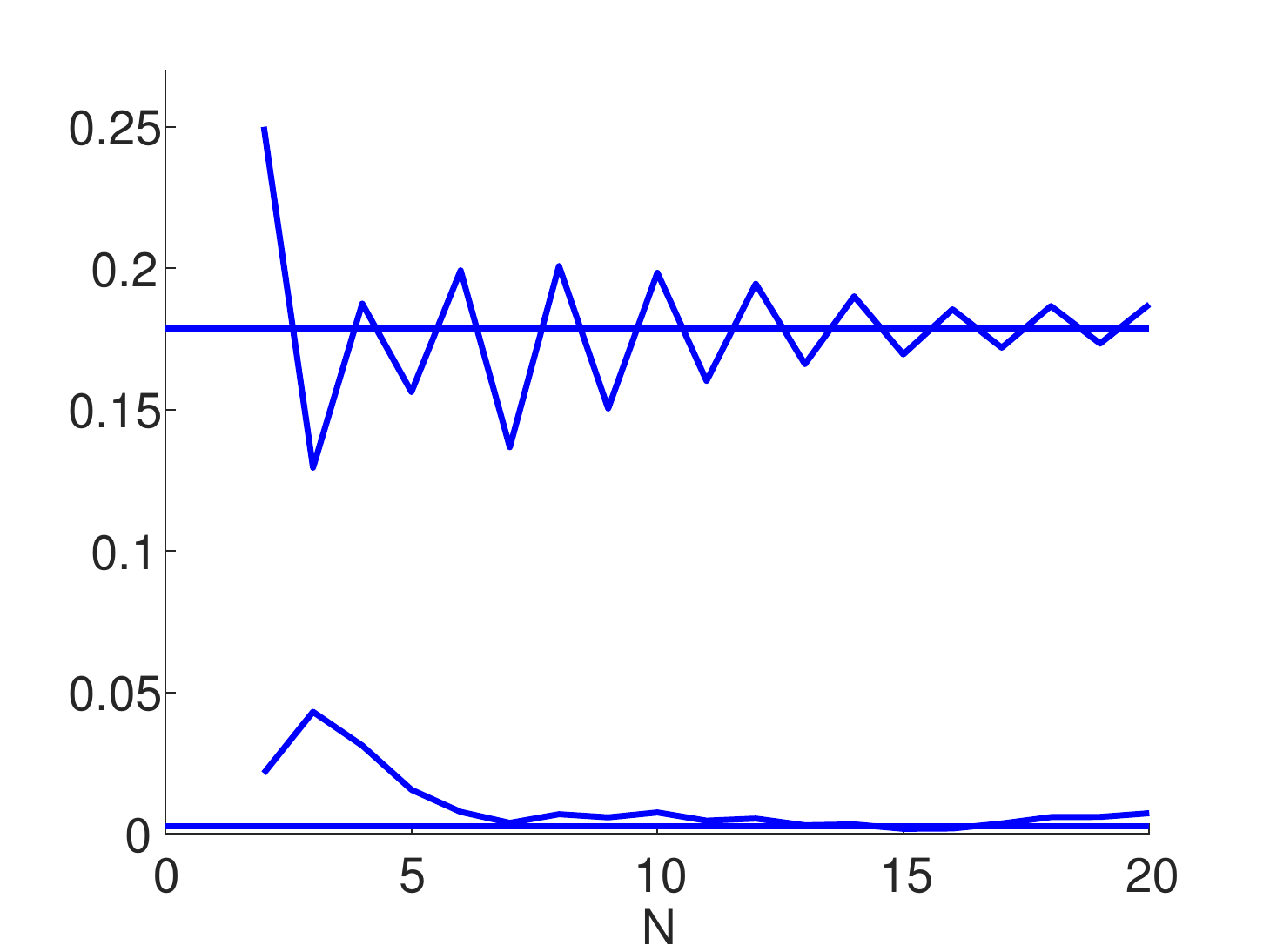}
\end{center}
\end{minipage}
\begin{minipage}{0.45\textwidth}
(b)
\begin{center}
 \includegraphics[scale=0.3]{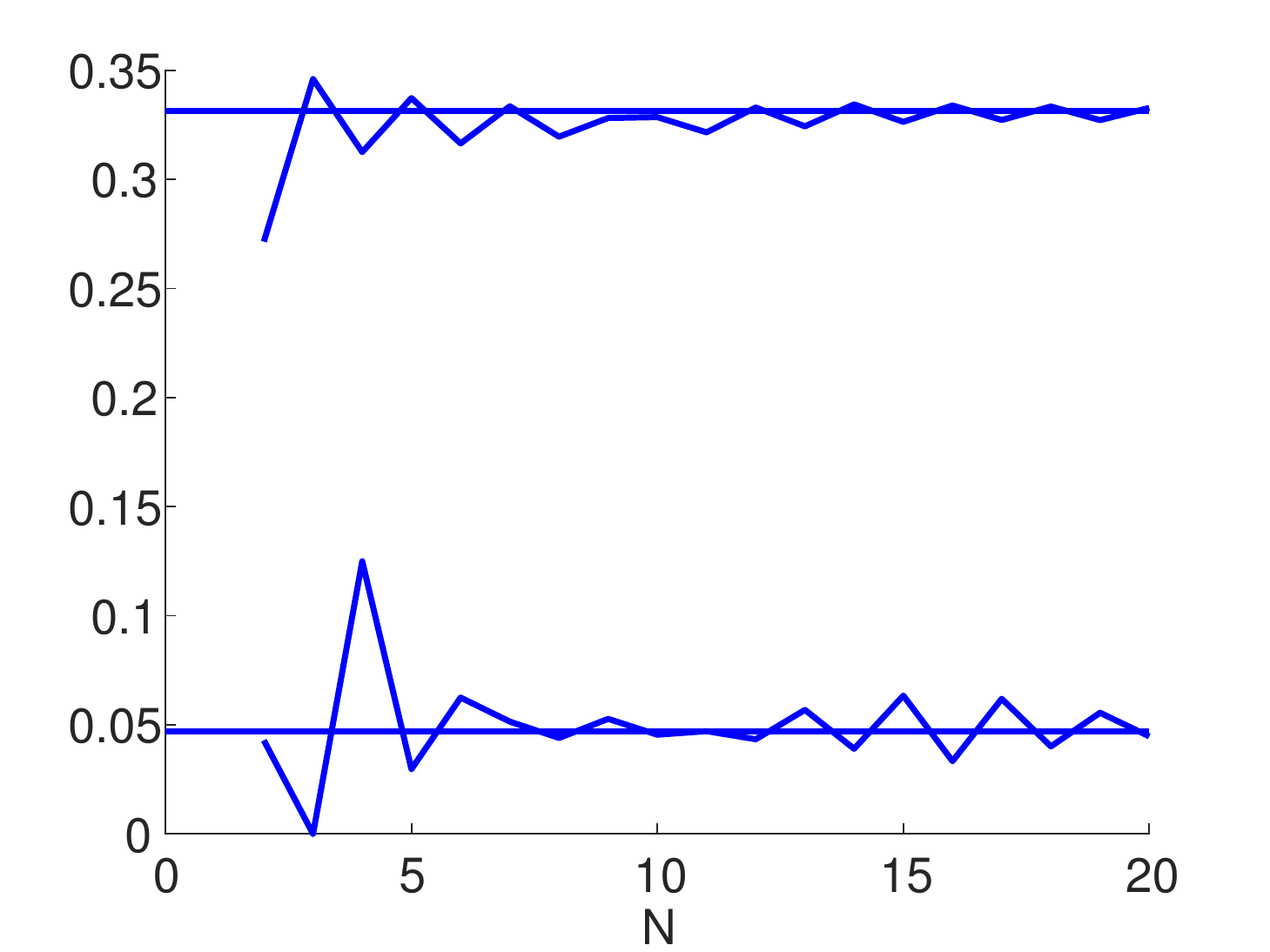}
\end{center}
\end{minipage}
	\caption{Upper and lower hedging prices of the European option \eqref{cone_f} with move set (a) $\chi _1$ and (b) $\chi_2$. Horizontal lines represent the limit values}
	\label{cone_price}
\end{figure}

\subsubsection{Separable case}
Consider a European option defined as
\begin{equation}
    f (\xi) = F (S_N) = g((S_N)_1) + g((S_N)_2), \label{double_butterfly_f}
\end{equation}
where
\begin{equation}
	g(s) = \max (0, s+0.5) - 2 \max (0, s-0.5) + \max (0, s-1.5)
\end{equation}
is the butterfly spread option used in \cite{Nakajima}.
This option is separable.

Figure \ref{double_butterfly_price} plots the upper and lower hedging prices calculated by solving \eqref{N_upper} recursively for each $N$.
On the other hand, the limit values calculated by solving the Black-Scholes-Barenblatt equation \eqref{BSB} numerically are
\begin{equation*}
	\bar{E}_{\chi_1} (f) \approx 0.6609, \underline{E}_{\chi_1} (f) \approx 0.6609, \bar{E}_{\chi_2} (f) \approx 1.0938, \underline{E}_{\chi_2} (f) \approx 0.5640.
\end{equation*}
These values are shown in Figure \ref{double_butterfly_price} by the horizontal lines.
They agree well with the convergence values.
Note that the upper and lower hedging prices coincide for $\chi_1$, which illustrates Proposition \ref{prop_sep}.

\begin{figure}
\begin{minipage}{0.45\textwidth}
(a)
\begin{center}
 \includegraphics[scale=0.3]{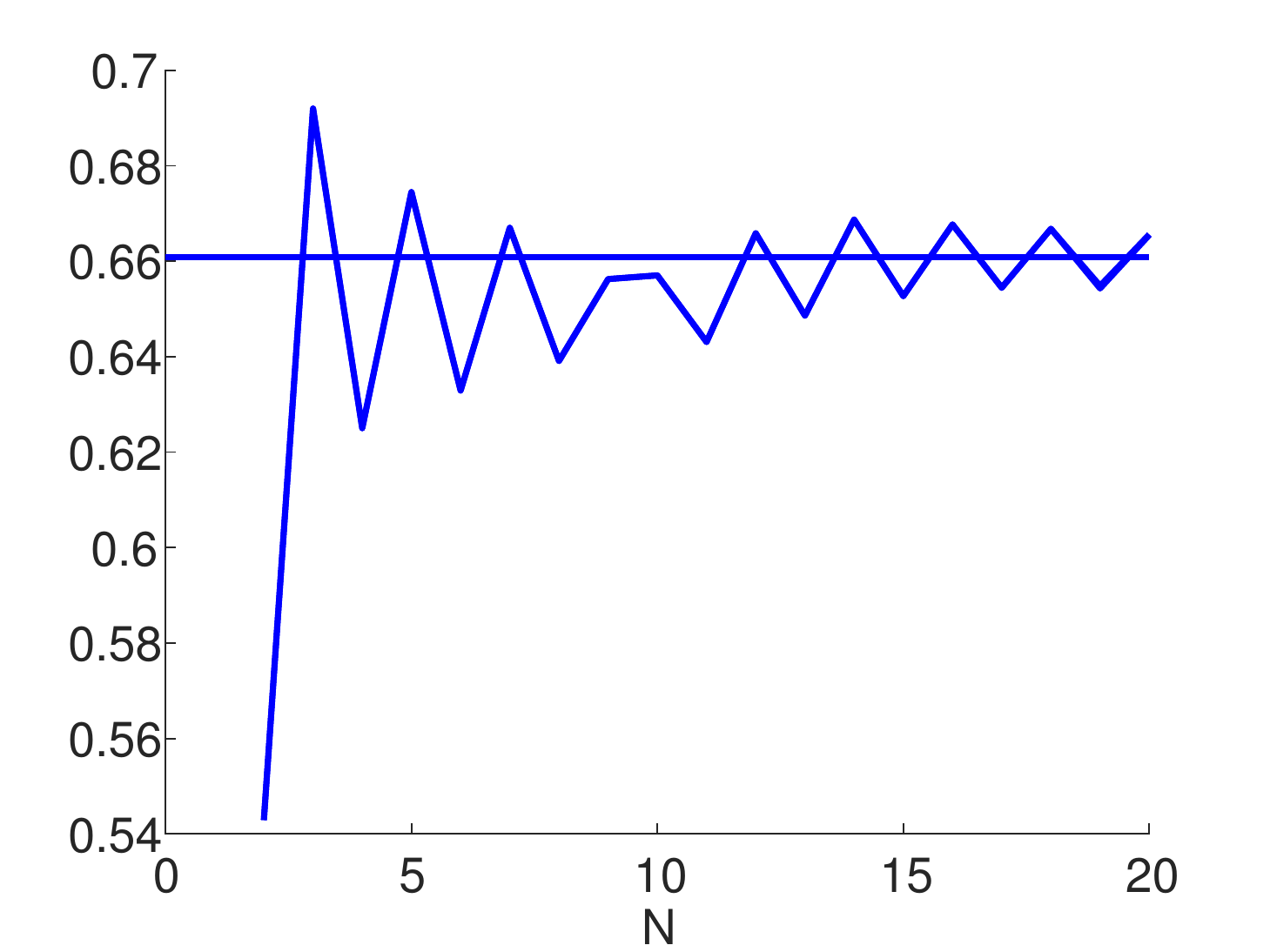}
\end{center}
\end{minipage}
\begin{minipage}{0.45\textwidth}
(b)
\begin{center}
 \includegraphics[scale=0.3]{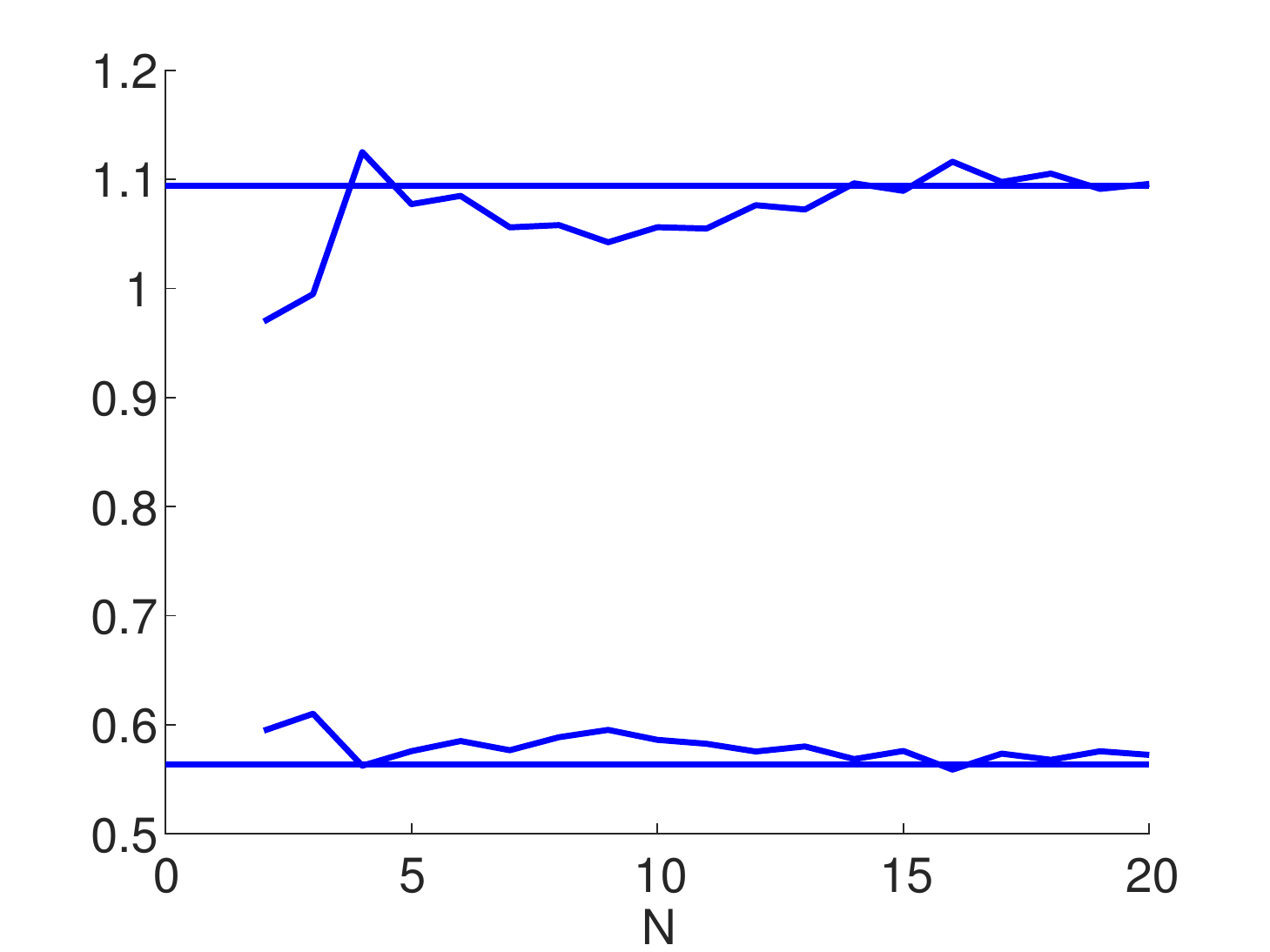}
\end{center}
\end{minipage}
	\caption{Upper and lower hedging prices of the European option \eqref{double_butterfly_f} with move set (a) $\chi_1$ and (b) $\chi_2$. Horizontal lines represent the limiting values. In (a), the upper and lower hedging prices coincide}
	\label{double_butterfly_price}
\end{figure}

\section{Conclusion}
We investigated the upper hedging price of multivariate contingent claims from the viewpoint of the game-theoretic probability.
The pricing problem is reduced to a backward induction of an optimization over simplexes.
For European options with submodular or supermodular payoff functions such as the options on the maximum or the minimum of several assets,
this optimization is solved in closed form by using the Lov\'asz extension.
As the number of game rounds goes to infinity, the upper hedging price of European options converges to the solution of the Black-Scholes-Barenblatt equation.
For European options with submodular or supermodular payoff functions, 
the Black-Scholes-Barenblatt equation is reduced to the linear Black-Scholes equation and it is solved in closed form.
Numerical experiments showed the validity of the theoretical results.

We mainly restricted our attention to European options.
Extension to path-dependent payoff functions, including American options, is an important future problem.
For such payoff functions, the definition of submodularity or supermodularity seems not trivial.

Also, we mainly assumed that the move set is a product set \eqref{prod_chi}.
In particular, our results on submodular and supermodular payoff functions are based on the lattice binomial model.
Extension to general move sets is another interesting future problem.

The Black-Scholes-Barenblatt equation is a special case of time-dependent diffusion equation \citep{Hundsdorfer}.
Although we used a simple finite difference method for numerical solution, it would be interesting to investigate more effective numerical methods. 

\section*{Acknowledgment}
We thank Naoki Marumo and Kengo Nakamura for helpful comments.
This work was supported by JSPS KAKENHI Grant Numbers 16K12399 and 17H06569.

\appendix

\section{Number of candidates $\widetilde{\chi}$ in three dimension}

Here, we calculate the number of candidates $\widetilde{\chi}$ when $d=3$.
Let ${\rm int} (A)$ be the interior of a set $A$ and $A^c$ be the complement of $A$.

\begin{Lemma}\label{lem:3dcube}
Let 
\begin{equation*}
	T_1 = {\rm conv} \{ (0,0,0),(0,1,1),(1,1,0),(1,0,1) \}, \ T_2 = {\rm conv} \{ (0,0,1),(0,1,0),(1,0,0),(1,1,1) \}
\end{equation*}
be the regular tetrahedra in the cube $[0,1]^3$.
Consider a point $x \in [0,1]^3$ and let $N(x) = \{ \{ z_0,z_1,z_2,z_3 \} \mid z_k \in \{ 0,1 \}^3, x \in {\rm conv} \{z_0,z_1,z_2,z_3\}, \dim \conv \{ z_0, z_1, z_2, z_3\}=3 \}$ be the set 
of tetrahedra containing $x$.

\begin{itemize}
\item If $x \in {\rm int} (T_1 \cap T_2)$, then $|N (x)| = 14$.
\item If $x \in {\rm int} (T_1 \cap T^c_2)$ or $x \in {\rm int} (T^c_1 \cap T_2)$, then $|N (x)| = 11$.

\item If $x \in {\rm int} (T^c_1 \cup T^c_2)$, then $|N (x)| = 8$.
\end{itemize}
\end{Lemma}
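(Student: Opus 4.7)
The function $x \mapsto |N(x)|$ is integer-valued and, since each tetrahedron $\conv\{z_0,z_1,z_2,z_3\}$ changes its containment of $x$ only when $x$ crosses one of its four face planes, it is locally constant on the complement of the hyperplane arrangement $\mathcal{H}$ consisting of all affine planes through at least three cube vertices. Intersected with the cube, $\mathcal{H}$ decomposes into three types: the $6$ cube face planes (each containing $4$ cube vertices), the $6$ ``diagonal rectangle'' planes such as $\{x+y=1\}$ (each containing $4$ coplanar cube vertices in a non-face rectangle), and the $8$ face planes of $T_1$ and $T_2$ (each containing exactly $3$ cube vertices). The boundaries between the three regions named in the lemma lie on the face planes of $T_1$ and $T_2$.

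For a plane $P\in\mathcal{H}$ containing $m$ cube vertices with the remaining $8-m$ vertices split as $k+(8-m-k)$, I compute the jump in $|N(x)|$ as $x$ crosses $P$ at a generic point $x^*$. Every tetrahedron with a face on $P$ has the form $\conv(\{v_1,v_2,v_3\}\cup\{w\})$ with $v_1,v_2,v_3$ among the $m$ vertices on $P$ and $w$ among the $8-m$ off-$P$ vertices, and it contains $x$ near $x^*$ iff $w$ and $x$ lie on the same side of $P$ and $x^*\in\conv(v_1,v_2,v_3)$. For $m=4$, the four triangles obtained from $3$-subsets of a convex planar quadrilateral cover each generic interior point of the quadrilateral exactly twice (the two diagonals divide the quadrilateral into four small triangles, each lying in exactly two of the four big triangles); hence the jump equals $2\bigl((8-m-k)-k\bigr)=8-4k$. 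For $m=3$, there is a unique such triangle and the jump equals $(8-m-k)-k=5-2k$.

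Plugging in, the $6$ cube face planes have $k\in\{0,4\}$ giving jump $\pm 8$ (occurring only at the cube boundary); the $6$ diagonal rectangle planes split the remaining four vertices as $2+2$ giving jump $0$; and each of the $8$ faces of $T_1,T_2$ splits the remaining five vertices as $4+1$ giving jump $\pm 3$. Hence within the cube interior, $|N(x)|$ is constant on each of the three regions of the lemma. Starting from $|N(x)|=0$ outside the cube and following a generic trajectory into the cube, $|N(x)|$ gains $8$ upon entering ${\rm int}(T_1^c\cap T_2^c)$, then a further $3$ when crossing a face of $T_1$ or $T_2$ into ${\rm int}(T_1\cap T_2^c)$ or ${\rm int}(T_1^c\cap T_2)$, and another $3$ when entering ${\rm int}(T_1\cap T_2)$, yielding the claimed values $8$, $11$, $14$. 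The main technical hurdle is verifying the ``projection lies in the triangle/quadrilateral'' hypothesis uniformly across the cube interior: one must check that each $T_1,T_2$ face plane meets the cube in exactly the triangle spanned by its three cube vertices, and that the orthogonal projection of every cube interior point onto each diagonal rectangle plane lands inside the rectangle, both of which follow from the explicit defining inequalities of $T_1,T_2$.
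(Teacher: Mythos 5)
Your proof is correct, but it takes a genuinely different route from the paper's. The paper's argument is a computer-assisted enumeration: it lists the $58$ nondegenerate tetrahedra on cube vertices in four types, slices the cube by the $14$ cutting planes at fixed heights $z$, and has a program count, for each resulting region, how many of the $58$ tetrahedra contain it. You instead compute the \emph{jump} of $|N(x)|$ across each plane of the arrangement of all planes through at least three cube vertices, observing that only tetrahedra with a facet in the crossed plane (and containing the generic crossing point in that facet's relative interior) can change membership; the three local computations (jump $+8$ entering a cube face, jump $0$ across a four-vertex diagonal plane because the remaining vertices split $2+2$, jump $\pm 3$ across a facet plane of $T_1$ or $T_2$ because they split $4+1$) then give the clean identity $|N(x)|=8+3\,\#\{i : x\in T_i\}$ at every generic point, which is exactly the lemma. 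Your approach buys a fully human-checkable proof and an explanation of \emph{why} the count depends only on membership in $T_1$ and $T_2$ (the four-vertex planes contribute nothing), whereas the paper's enumeration additionally produces the explicit list of tetrahedra and handles non-generic points such as the center (where $|N(x)|=50$). Two small points to tidy: the relevant geometric fact for the $m=4$ walls is not about \emph{orthogonal projection} but simply that the plane meets the cube exactly in the convex quadrilateral spanned by its four cube vertices (so the crossing point automatically lies in it), parallel to what you correctly state for the triangle planes; and note that your argument establishes the value $8$ on ${\rm int}(T_1^c\cap T_2^c)$, i.e.\ ${\rm int}\bigl((T_1\cup T_2)^c\bigr)$, which is clearly what the third bullet of the lemma intends despite being printed with a union of complements.
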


\begin{figure}
\begin{minipage}{0.4\textwidth}
\begin{center}
 \includegraphics[scale=0.3]{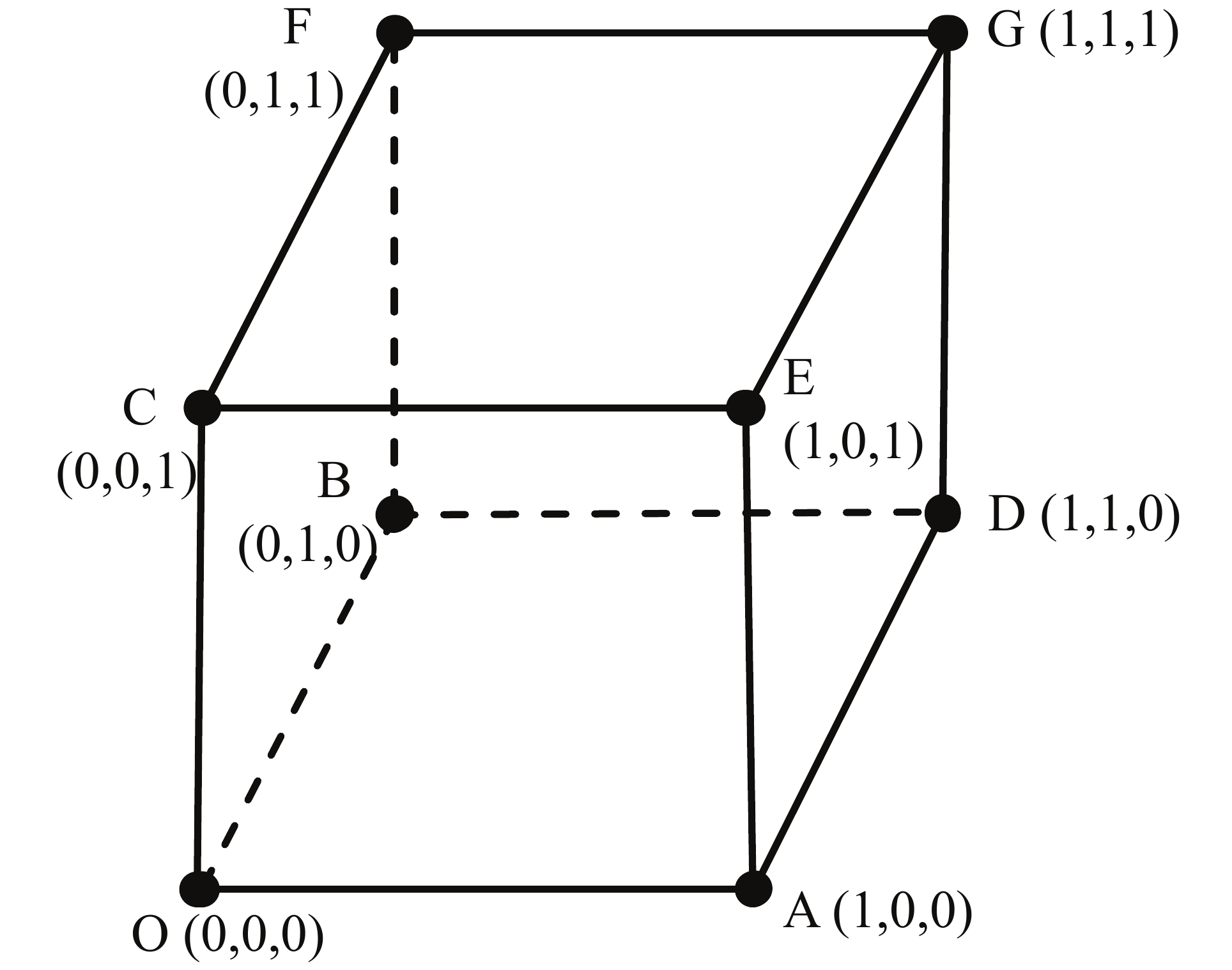}
\end{center}
\end{minipage}
\begin{minipage}{0.4\textwidth}
\begin{center}
 \includegraphics[scale=0.28]{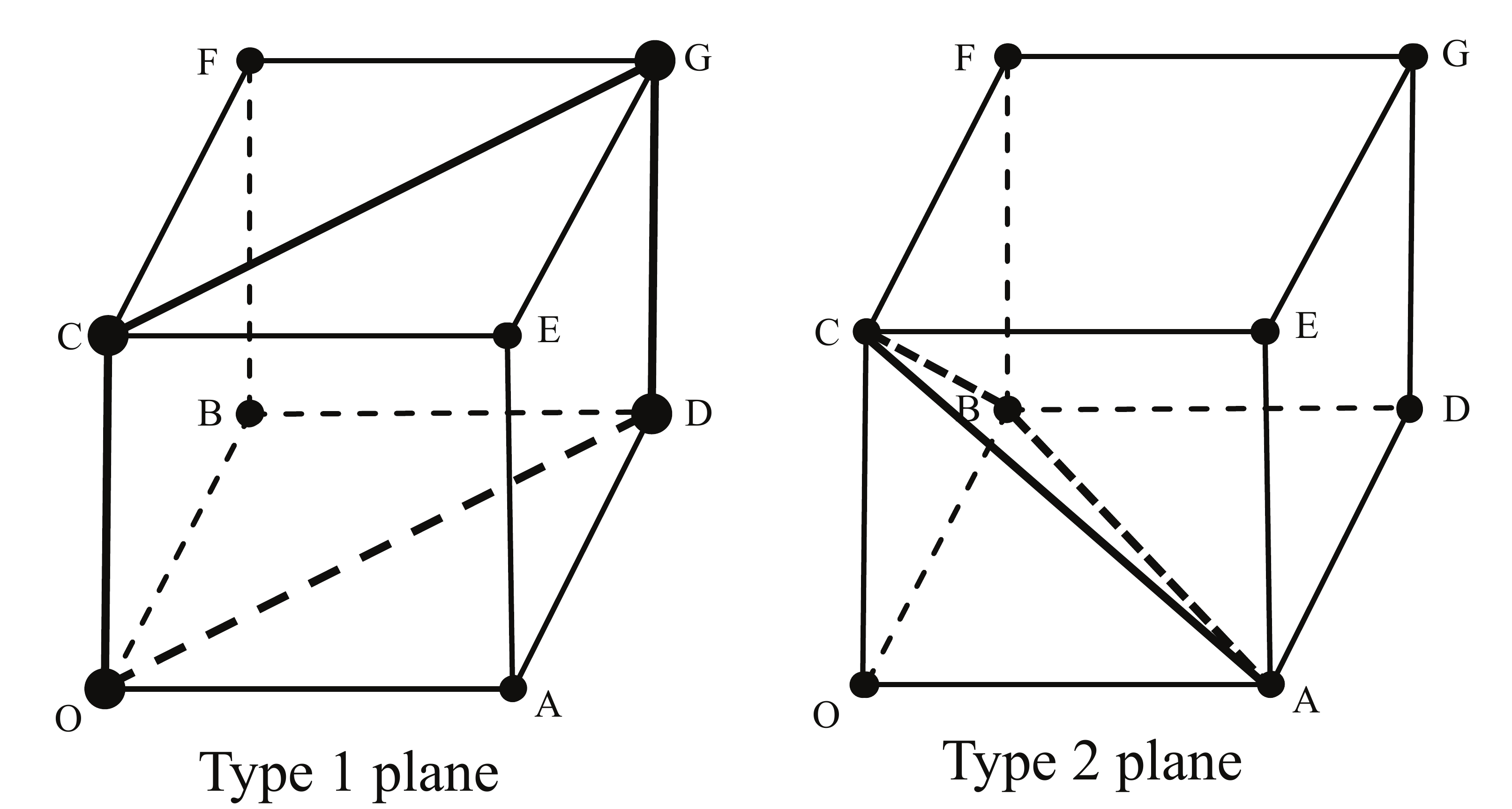}
\end{center}
\end{minipage}
\caption{cube and cutting planes}
\label{cube1}
\end{figure}

\begin{proof}
We only give a sketch of a proof, because we used computer to count the
number of tetrahedra containing a point $x$.  

The cube $[0,1]^3$ is denoted as the left picture of Figure \ref{cube1}.
There are 14 planes containing three or four vertices of the cube, which cut into the cube.
There are 6 Type 1 planes containing four vertices with the equations
\begin{equation}
\label{eq:6-planes}
  x=y, \ y=z,\ x=z, \ x+y=1, \ y+z=1, \ x+z=1,
\end{equation}
and there are 8 Type 2 planes containing three vertices with the equations
\begin{align}
  &x+y+z=1, \ x+y+z=2, \ x-y+z=0, \ x-y+z=1, \nonumber \\
  &x+y-z=0, \ x+y-z=1, \ -x+y+z=1, \ -x+y+z=0.
\label{eq:8planes}
\end{align}

There are 58 tetrahedra (simplexes) in 4 types as in Figure \ref{cube2}.
There are 8 Type 1 tetrahedra, which are corner simplexes.
There are 2 Type 2 regular tetrahedra denoted as $T_1, T_2$ in the lemma.
There are 24 Type 3 tetrahedra and there are 24 Type 4 tetrahedra.
We keep the list of these 58 tetrahedra in a computer program.

\begin{figure}
\begin{minipage}{0.9\textwidth}
\begin{center}
\includegraphics[scale=0.23]{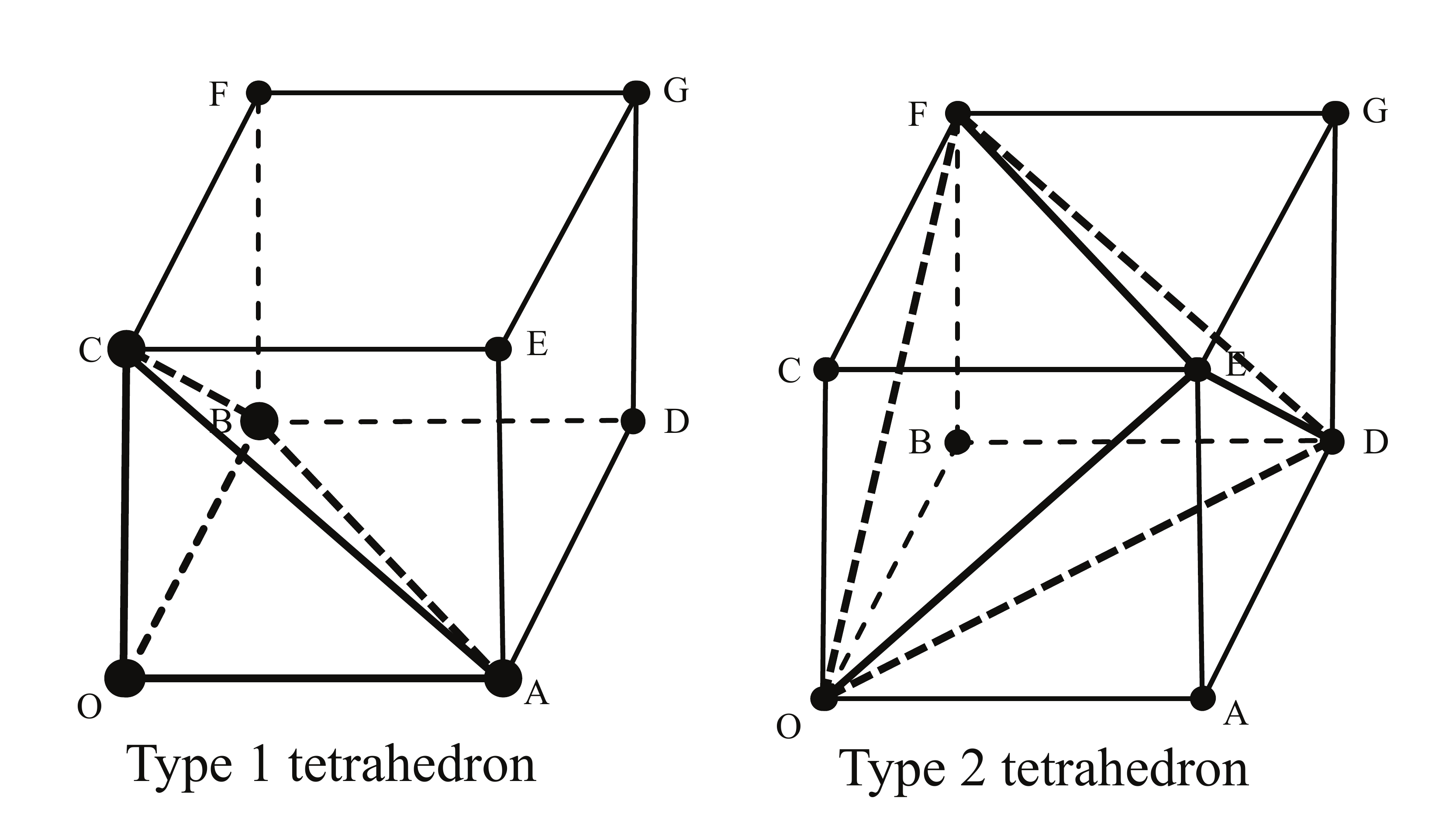}\includegraphics[scale=0.23]{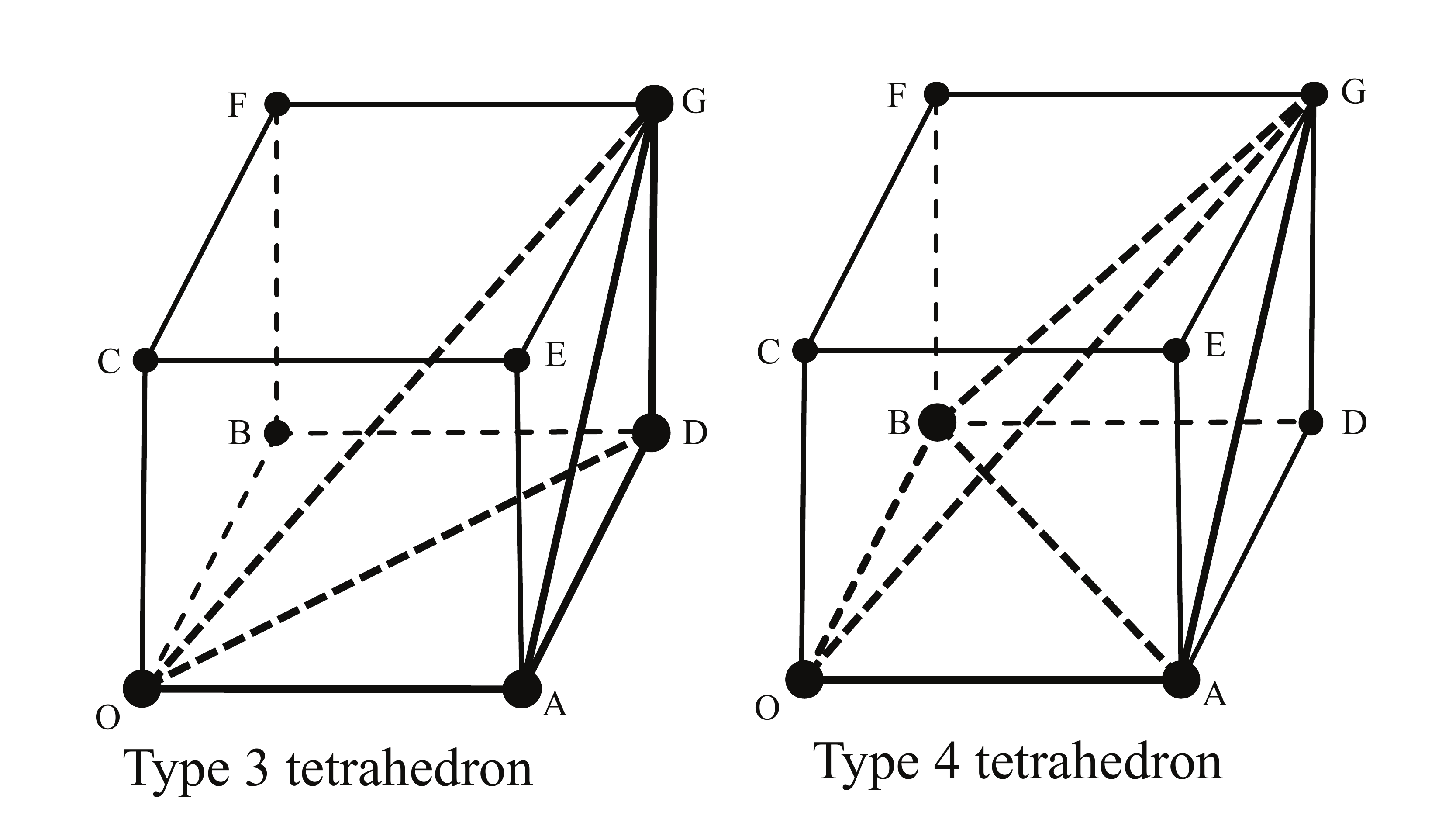}
\end{center}
\end{minipage}
\caption{Four types of tetrahedra}
\label{cube2}
\end{figure}

On the other hand, it is easy to visualize the 14 planes in 
\eqref{eq:6-planes} and \eqref{eq:8planes} by fixing $z$ and drawing
the sections of the cube as in Figure \ref{cube3}.  The 14 planes appear
as 14 lines inside the unit squares in Figure \ref{cube3}. Note that we only
need to consider $z<1/2$ by symmetry: $z\leftrightarrow 1-z$.
For each region of the sections of Figure \ref{cube3} we count the number
of tetrahedra containing the region.   Then we obtain the lemma.

\begin{figure}[htbp]
\begin{minipage}{0.9\textwidth}
\begin{center}
\includegraphics[scale=0.4]{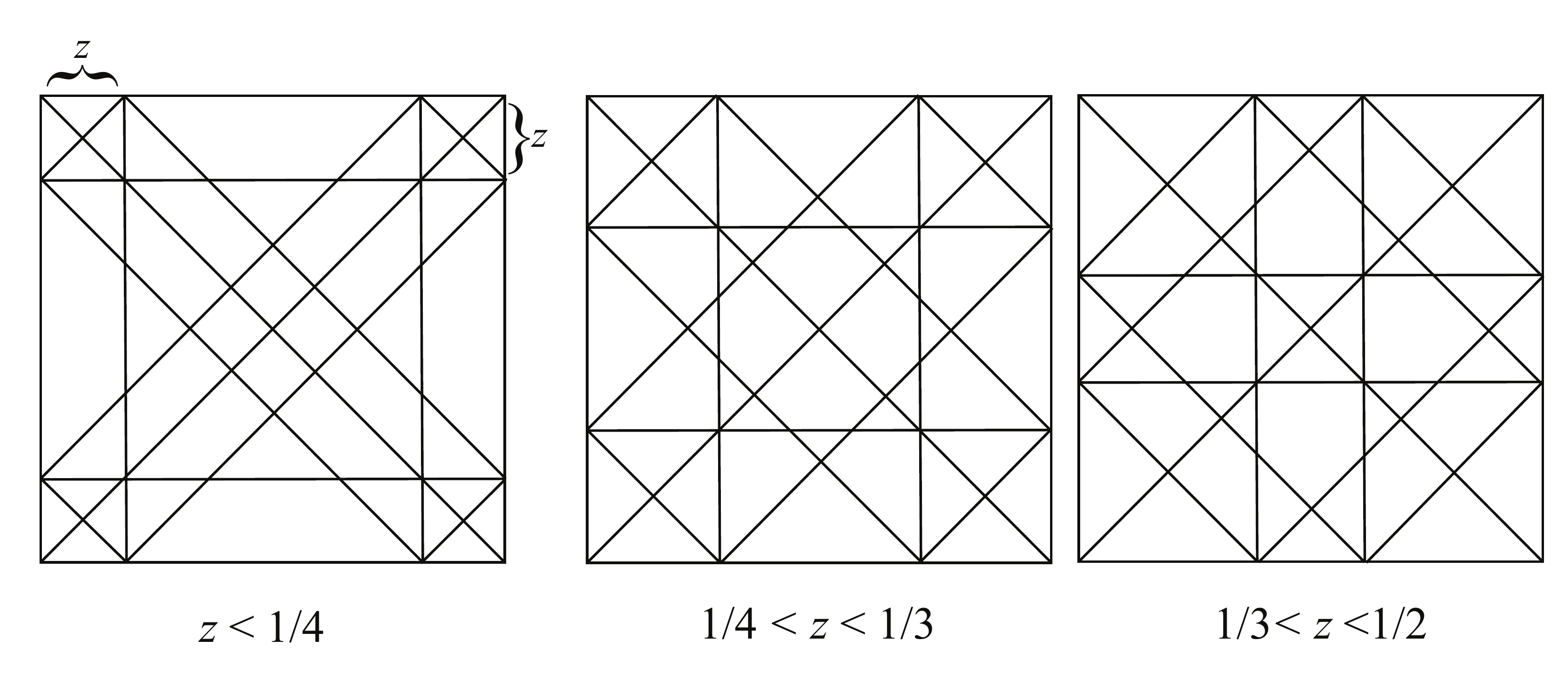}
\end{center}
\end{minipage}
\caption{Sections of the cube}
\label{cube3}
\end{figure}

\end{proof}

Note that Lemma \ref{lem:3dcube} applies only for generic $x$.  
For $x$ on the boundary of a tetrahedron, the number $|N(x)|$ may be larger.  
For example $x=(1/2,1/2,1/2)$ is contained in $|N(x)|=50$ tetrahedra of Types 2-4.

\section{Lower bound on the number of candidate $\widetilde{\chi}$}

Here, we provide a lower bound on the number of candidates $\widetilde{\chi}$ for general $d$.

\begin{Lemma}
Assume $d \geq 2$.
Consider a point $x$ in the $d$-dimensional hypercube $[0,1]^d$ and define $N (x) = \{ \{ z_0,\cdots,z_d \} \mid z_k \in \{ 0,1 \}^d, x \in {\rm conv} \{z_0,\cdots,z_d\}, \dim\conv \{z_1,\cdots,\allowbreak z_d\}=d\}$.
Then,
\begin{equation*}
	|N (x)| \geq 2^{d-2}.
\end{equation*}
\end{Lemma}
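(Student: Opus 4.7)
The plan is to construct, for each of the $2^{d-2}$ subsets $T \subset \{3, \ldots, d\}$, a $d$-simplex $\Delta_T$ with vertices in $\{0,1\}^d$ whose convex hull contains $x$, and then to show that these $2^{d-2}$ simplexes are pairwise distinct. The construction can be viewed as a ``twisted'' Lov\'asz chain: reflect $x$ in the coordinates belonging to $T$, build the Lov\'asz chain of the reflected point, and translate the chain so that it starts at $1_T$ rather than at the origin.

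Concretely, for $T \subset \{3,\ldots,d\}$, define the reflected point $\tilde{x}^{(T)} \in [0,1]^d$ by $\tilde{x}^{(T)}_i = x_i$ if $i \notin T$ and $\tilde{x}^{(T)}_i = 1 - x_i$ if $i \in T$. Choose a permutation $\sigma$ of $\{1,\ldots,d\}$ with $\tilde{x}^{(T)}_{\sigma(1)} \geq \cdots \geq \tilde{x}^{(T)}_{\sigma(d)}$ (breaking ties arbitrarily). Starting from $v_0 = 1_T$, define $v_k$ as the vertex of $\{0,1\}^d$ obtained from $v_{k-1}$ by flipping the $\sigma(k)$-th bit, so that $v_d = 1_{T^c}$. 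Let $\Delta_T = \{v_0, v_1, \ldots, v_d\}$.

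To check that $\Delta_T$ is a $d$-simplex containing $x$, write $v_k - v_0 = \sum_{j \leq k} \varepsilon_j e_{\sigma(j)}$, where $\varepsilon_j = -1$ if $\sigma(j) \in T$ and $\varepsilon_j = +1$ otherwise. Because $\sigma$ is a permutation, the differences $v_1 - v_0, \ldots, v_d - v_0$ span $d$ distinct coordinate directions and are linearly independent, so $\dim \conv \Delta_T = d$. Setting $q_0 = 1$, $q_j = \tilde{x}^{(T)}_{\sigma(j)}$ for $j = 1,\ldots,d$, $q_{d+1} = 0$, and $p_k = q_k - q_{k+1}$, the decreasing ordering of $\tilde{x}^{(T)}$ gives $p_k \geq 0$ and $\sum_k p_k = 1$ by telescoping; a coordinate-wise check using $\sum_{k \geq j} p_k = q_j$ then yields $\sum_{k=0}^d p_k v_k = x$.

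For distinctness, observe that any two vertices $v_k$ and $v_l$ of $\Delta_T$ differ in exactly $|l - k|$ bits, since each bit is flipped at most once along the chain. Therefore the unique pair of vertices of $\Delta_T$ at maximum Hamming distance $d$ is $\{v_0, v_d\} = \{1_T, 1_{T^c}\}$, and this pair can be recovered from the unordered vertex set alone. For distinct $T, T' \subset \{3,\ldots,d\}$, the unordered pairs $\{1_T, 1_{T^c}\}$ and $\{1_{T'}, 1_{(T')^c}\}$ are themselves distinct: equality would force either $T = T'$ or $T = (T')^c$, and the latter would imply $T \cup T' = \{1,\ldots,d\}$, contradicting $T \cup T' \subset \{3,\ldots,d\}$. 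The main obstacle is precisely this distinctness step, since each $\Delta_T$ is defined through a preferred labeling (chain order with starting vertex $1_T$) whereas as an unordered vertex set it could a priori coincide with the simplex built from a different $T'$; the Hamming-distance extraction of the endpoint pair is what rules this out.
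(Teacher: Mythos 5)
Your proof is correct, and it takes a genuinely different route from the paper's. The paper's argument fixes the single Lov\'asz chain $e_1,\ldots,e_{d+1}$ for a generic ordered point $0<x_1<\cdots<x_d<1/2$, identifies the index $i_*$ attaining the smallest gap $c=x_{i_*}-x_{i_*-1}$, and manufactures $2^{d-2}$ simplexes by swapping the single vertex $e_{i_*}$ for one of $2^{d-2}$ alternative $0$--$1$ vectors $\epsilon$; the minimality of $c$ is what keeps the new barycentric coefficients nonnegative, and non-generic $x$ is then handled by a limiting argument along generic approximants sharing the same $i_*$. You instead keep the Lov\'asz chain mechanism intact but run it in $2^{d-2}$ reflected coordinate systems indexed by $T\subset\{3,\ldots,d\}$, obtaining monotone lattice paths from $1_T$ to $1_{T^c}$, and you extract distinctness cleanly from the unique antipodal pair at Hamming distance $d$. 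Your version buys uniformity: it needs no genericity assumption and no limit argument, since ties in the sorted coordinates of $\tilde{x}^{(T)}$ merely make some coefficients $p_k$ vanish, and the distinctness step is more transparent than the paper's implicit count of the admissible $\epsilon$. Incidentally, your argument actually yields the stronger bound $2^{d-1}$ if you allow all $T\subset\{2,\ldots,d\}$: excluding a single fixed coordinate from $T$ already rules out $T=(T')^c$ in the distinctness step, so nothing forces you to exclude two coordinates.
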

\begin{proof}
First, we consider the case where $x$ is a generic point
Without loss of generality, we assume 
\begin{equation}
	0 < x_1 < x_2 < \cdots < x_d < 1/2. \label{eq:x}
\end{equation}
Let
\[
	c = \min\{x_1, x_2 - x_1, x_3- x_2, \cdots, x_d - x_{d-1}\} > 0,
\]
and $i_{*}$ be an index attaining the minimum:
\begin{equation}
	c = x_{i_{*}} - x_{i_{*}-1}. \label{eq:i}
\end{equation}
Note that $c<1/4$ since $d \geq 2$.
In the following, we focus on simplexes that have the zero vector as one vertex.
Note that such simplex has nonempty interior if and only if the other $d$ vertices $y_1,\cdots,y_d$ are linearly independent.

Let $e_i \in \{ 0,1 \}^d$, $i=1,\cdots,d+1$ be a 0-1 vector defined by
\begin{equation}
	(e_i)_j=\begin{cases} 0 & (j < i) \\ 1 & (j \geq i) \end{cases}. \label{eq:3}
\end{equation}
In particular, $e_{d+1}$ is the zero vector.
Then, the vector $x$ is decomposed as
\begin{align}
	x = \sum_{i=1}^{d+1} (x_i-x_{i-1}) e_i, \label{eq:lovasz}
\end{align}
where we define $x_0=0$ and $x_{d+1}=1$.
Therefore, from \eqref{eq:x},
\begin{equation}
	x \in {\rm conv}\{ e_1,\cdots,e_{d+1} \}. \label{eq:conv}
\end{equation}

Now, we construct simplexes containing $x$ by changing the vertex $e_{i_{*}}$ in \eqref{eq:conv}. 
Note that every 0-1 vector $z \in \{ 0,1 \}^d$ is uniquely expressed as
\[
	z = \sum_{i=1}^d c_i e_i,
\]
where $c_i \in \{ -1,0,1 \}$ satisfies 
\[
	\sum_{j=1}^i c_j \in \{ 0,1 \} \quad (i=1,\cdots,d).
\]
Under this correspondence, there are $2^{d-2}$ vectors $\epsilon$ with $c_{i_{*}-1} = 0$ and $c_{i_{*}} = 1$.
These vectors are given by
\begin{equation}
	\epsilon = \cdots - e_k + e_{i_{*}} - e_l + \cdots \label{eq:4}
\end{equation}
for some $k < i_{*}$ and $l > i_{*}$.
Thus,
\begin{equation}
	\epsilon  - \cdots + e_k +e_l - \cdots = e_{i_{*}}, \label{eq:6}
\end{equation}
where the sum of coefficients in the left hand side is 1 or 2.
By substituting \eqref{eq:6} into \eqref{eq:lovasz},
\begin{align*}
	x &= \sum_{i \neq i_{*}} (x_i-x_{i-1}) e_i + c (\epsilon  - \cdots + e_k +e_l - \cdots) \\
	&= \sum_{i \neq i_{*}} b_i e_i + c \epsilon,
\end{align*}
where $b_i \geq 0$ and $\sum_{i \neq i_{*}} b_i + c=1$.
Therefore,
\[
	x \in {\rm conv} \{ e_1,\cdots,e_{i_{*}-1},\epsilon,e_{i_{*}+1},\cdots,e_{d+1} \}. 
\]
Since there are $2^{d-2}$ choices of $\epsilon$, we have $|N (x)| \geq 2^{d-2}$.

Next, we consider the case where $x$ is not a generic point.
Without loss of generality, we assume
\begin{equation*}
	0 \leq x_1 \leq x_2 \leq \cdots \leq x_d \leq 1/2,
\end{equation*}
where at least one inequality holds with equality.
Then, we can take a sequence of generic points $y_1,y_2,\cdots$ satisfying \eqref{eq:x} and having the same $i_*$ that converges to $x$.
Let $N=\bigcap N (y_k)$.
Then, from the above argument, $|N (y_k)| \geq 2^{d-2}$ for each $k$ and the
$2^{d-2}$ simplexes containing $y_k$ are common.  In particular $|N| \geq 2^{d-2}$.
Also, since the simplex is closed, $x$ belongs to each simplex in $N$: $N \subset N(x)$.
Therefore, $|N (x)| \geq 2^{d-2}$.
\end{proof}

\end{document}